\def\mainlistofsymbols{
  \normalsize
  \vspace*{1.5 em}
  \@starttoc{los}
}
\def\partonelistofsymbols{
  \normalsize
  \vspace*{1.5 em}
  \@starttoc{p1los}
}
\def\parttwolistofsymbols{
  \normalsize
  \vspace*{1.5 em}
  \@starttoc{p2los}
}
\def\l@symbol#1#2{\addpenalty{-\@highpenalty} \vskip 4pt plus 2pt
{\@dottedtocline{0}{0em}{8em}{#1}{#2}}}
\newcommand{\newhiddensym}[2]{%
}
\newcommand{\algIOA}[2]{\ifmmode{\text{#1}_{#2}}\else{$\text{#1}_{#2}$}\fi}
\newcommand{\EX}{\ifmmode{\xi}\else{$\xi$}\fi}
\newcommand{\EXF}{\ifmmode{\phi}\else{$\phi$}\fi}
\newcommand{\inter}[1]{
	\ifmmode{\left(\bigcap_{\mathcal{Q}\in#1}\mathcal{Q}\right)}
	\else{$\left(\bigcap_{\mathcal{Q}\in#1}\mathcal{Q}\right)$}
	\fi
}
\newcommand{\op}{\pi}
\mathchardef\mhyphen="2D
\newcommand{\vid}[1]{\ifmmode{\nu_{#1}}\else{$\nu_{#1}$}\fi}
\newcommand{\seen}{\ifmmode{seen}\else{$seen$}\fi}
\newcommand{\maxts}[1]{\ifmmode{maxTS_{#1}}\else{$maxTS_{#1}$}\fi}
\newcommand{\maxtag}[1]{\ifmmode{maxTag_{#1}}\else{$maxTag_{#1}$}\fi}
\newcommand{\maxpair}[1]{\ifmmode{maxMPair_{#1}}\else{$maxMPair_{#1}$}\fi}
\newcommand{\mintag}[1]{\ifmmode{minTag_{#1}}\else{$minTag_{#1}$}\fi}
\newcommand{\maxps}{\ifmmode{maxPS}\else{$maxPS$}\fi}
\newcommand{\conftg}[1]{\ifmmode{confirmed_{#1}}\else{$confirmed_{#1}$}\fi}
\newcommand{\maxconftag}{\ifmmode{\ms{maxCT}}\else{$maxCT$}\fi}
\newcommand{\remove}[1]{} 
\newtheorem{lemma*}{Lemma}
\newtheorem{lemma}{Lemma}
\newtheorem{corollary}{Corollary}
\newtheorem*{claim}{Claim}
\newtheorem{theorem}{Theorem}
\newcommand{\states}[1]{{\textit{states}(#1)}}
\newcommand{\inactions}[1]{{\textit{in}(#1)}}
\newcommand{\outactions}[1]{{\textit{out}(#1)}}
\newcommand{\extactions}[1]{{\textit{ext}(#1)}}
\newcommand{\trans}[1]{{\textit{trans}(#1)}}
\newcommand{\writeop}[2]{{ \textit{write}( #1, #2) }}
\newcommand{\readop}[1]{{ \textit{read}(#1)}}
\newcommand{\Writetr}[1]{{\textit{W}(#1)}}
\newcommand{\Readtr}[1]{{\textit{R}(#1)}}
\newcommand{\rot}{READ transaction}
\newcommand{\rots}{READ transactions}
\newcommand{\wot}{WRITE transaction}
\newcommand{\wots}{WRITE transactions}
\newcommand{\wotsSNOW}{\textbf{W}RITE transactions}
\newcommand{\INV}[1]{{\textit{INV}(#1)}}
\newcommand{\RESP}[1]{{\textit{RESP}(#1)}}
\newcommand{\inv}[1]{{\textit{inv}(#1)}}
\newcommand{\resp}[1]{{\textit{resp}(#1)}}
\newcommand{\prefix}[1]{{ P(#1)}}
\newcommand{\frag}[2]{{ F_{#1}(#2)}}
\newcommand{\frage}[3]{{ #1   \ifthenelse{    \equal{#3}{} } {  \ifthenelse{ \equal{#2}{}}{}{(#2)}   }  { \ifthenelse{ \equal{#2}{}}{}{(#2)}^{(#3)}}     }}
\newcommand{\fragt}[3]{{ #1   \ifthenelse{    \equal{#3}{} } {  \ifthenelse{ \equal{#2}{}}{}{}   } 
		{ \ifthenelse{ \equal{#2}{}}{}{}^{(#3)}}     }}
\newcommand{\fragn}[3]{{ #1   \ifthenelse{    \equal{#3}{} } {  \ifthenelse{ \equal{#2}{}}{}{}   } 
		{ \ifthenelse{ \equal{#2}{}}{}{}}     }}
\newcommand{\suffix}[1]{{ S(#1)}}
\newcommand{\finiteprefix}[2]{{ \sigma_0, a_1, \ldots,  a_{#2}, \sigma_{#2} }}
\newcommand{\finiteprefixt}[2]{{ a_1, \ldots,  a_{#2} }}
\newcommand{\finiteprefixA}[2]{{ P_{#2}}}
\newcommand{\send}[2]{{\textit{send}(#1)_{#2}}}
\newcommand{\recv}[2]{{\textit{recv}(#1)_{#2}}}
\newcommand{\getTagArray}{{\it get-tag-array}}
\newcommand{\writeValue}{{\it write-value}}
\newcommand{\readValue}{{\it read-value}}
\newcommand{\readValuesAndTags}{{\it read-values-and-tags}}
\newcommand{\informReader}{{\it info-reader}}
\newcommand{\informSerializer}{{\it update-coor}}
\newcommand{\updateCoord}{{\it update-coor}}
\newcommand{\getTagArrayTag}{{\sc get-tag-arr}}
\newcommand{\writeValueTag}{{\sc write-val}}
\newcommand{\readValueTag}{{\sc read-val}}
\newcommand{\readValuesTag}{{\sc read-vals}}
\newcommand{\informSerializerTag}{{\sc update-coor}}
\newcommand{\informReaderTag}{{\sc info-reader}}
\newcommand{\updateCoordTag}{{\sc  update-coor}}
\newcommand{\ackTag}{{\sc ack}}
\definecolor{worange}{RGB}{245, 128, 37}
\definecolor{hpurple}{rgb}{0.6, 0.4, 0.8}
\newtheorem{definition}{Definition}[section]
\begin{document}
\title{SNOW Revisited: Understanding When Ideal READ Transactions Are Possible}

\author[1]{Kishori M. Konwar}
\author[2]{Wyatt Lloyd}
\author[2]{Haonan Lu}
\author[3]{Nancy Lynch}
\affil[1]{RLE, MIT}
\affil[2]{Department of Computer Science, Princeton University}
\affil[3]{CSAIL, MIT}
\affil[ ]{{kishori@mit.edu, 
		wlloyd@cs.princeton.edu, 
		haonanl@cs.princeton.edu,
		lynch@csail.mit.edu
	}}


\maketitle

\begin{abstract}
\rots{} that read data distributed across servers dominate the workloads of real-world distributed storage systems.
The SNOW Theorem~\cite{SNOW2016} stated that ideal \rots{} that have optimal latency and the strongest guarantees---i.e., ``SNOW'' \rots{}---are impossible in one specific setting that requires three or more clients: at least two readers and one writer. However, it left many open questions.

We close all of these open questions with new impossibility results and new algorithms.
First, we prove rigorously the result from~\cite{SNOW2016} saying that it is impossible to have a \rot{}s system that satisfies SNOW properties with three or more clients.
The insight we gained from this proof led to teasing out the implicit assumptions that are required to state the results and also, resolving the open question regarding the possibility of SNOW with two clients.
We show that it is possible to design an algorithm, where SNOW is possible in a multi-writer, single-reader (MWSR) setting when a client can send messages to other clients;
on the other hand, we prove it is impossible to implement SNOW in a multi-writer, single-reader (MWSR) setting--which is more general than the two-client setting--when client-to-client communication is disallowed.
We also correct the previous claim in~\cite{SNOW2016} that incorrectly identified one existing system, Eiger~\cite{Lloyd:nsdi2013}, as supporting the strongest guarantees (SW) and whose read-only transactions had bounded latency. Thus, there were no previous algorithms that provided the strongest guarantees and had bounded latency. Finally, we introduce the first two algorithms to provide the strongest guarantees with bounded latency.

\end{abstract}


\section{Introduction}
\label{sec:intro}
Today's web services are built on \textit{distributed storage systems} that provide fault tolerant and scalable access to data.
Distributed storage systems scale their capacity and throughput by \textit{sharding} (i.e., partitioning) data across many machines within a datacenter, i.e., each machine stores a subset of the data.
They also \textit{geo-replicate} the data across several geographically dispersed datacenters to tolerate failures and to increase their proximity to users.

Distributed storage systems abstract away the complexities of sharding and replication from application code by providing guarantees for accesses to data.
These \textit{guarantees} include consistency and transactions.
\textit{Consistency} controls the values of data that accesses may observe
and \textit{transactions} dictate what accesses may be grouped together.
Stronger guarantees provide an abstraction closer to a single-threaded environment, greatly simplifying application code.
Ensuring the guarantees hold, however, often comes with worse performance.
Therefore, the tradeoff between performance and guarantees lies at the heart of designing such systems.


The performance-guarantee tradeoffs that result from replication have been well-studied with several well-known impossibility results~\cite{AW94, Fischer:pds1983,Gilbert:sigact2002, lipton88pram}.
For instance, the CAP Theorem~\cite{Gilbert:sigact2002} proves that system designers must choose either availability during network partitions (performance) or strong consistency across replicas (guarantee).
However, little prior work exists on what performance-guarantee tradeoffs result from sharding.

Understanding the performance-guarantee tradeoff due to sharding is important because user requests are typically handled across many shards but within a single nearby datacenter (replica).
This is particularly true for the reads needed to handle a user request, which are what dominate real-world workloads:
Facebook reported 500 reads for every write in their TAO system~\cite{TAO2013} and
Google reported three orders of magnitude more reads than general transactions for their F1 database that runs on their Spanner system~\cite{Corbett:osdi2012}.
In this work, we focus on clarifying the performance-guarantee tradeoff for reads that results from sharding.
Distributed storage systems group read requests (that each individually accesses a separate shard) into \textit{\rots{}} that together return a consistent, cross-shard view of the system.
Whether a view is consistent is determined by the consistency model a system provides.
The ideal \rots{} would have the strongest guarantees:
They would provide strict serializability~\cite{Papadimitriou79}, the strongest consistency model,
and they could be used in a system that also includes \textit{\wots{}} that group write requests (each to a separate shard) together.
The alternative to the latter property are \rots{} that can only be used in systems that have non-transactional, \textit{simple writes}.

The ideal \rots{} would also provide the best performance.
In particular, they would provide the lowest possible latency because the prevalence of reads makes them dominate the user response times that are aggressively optimized by web services~\cite{latency:shopzilla, latency:amazon,
  latency:search}.
The \textit{optimal latency} for a \rot{} is to match the latency of non-transactional, \textit{simple reads}:
complete in a single round trip of non-blocking parallel requests to the shards that return only the requested data~\cite{SNOW2016}.

\subsection{Previous Results and Open Questions}
The SNOW Theorem was the first result in the sharding dimension that is relevant to \rots{}~\cite{SNOW2016}.
The SNOW Theorem is an impossibility result that proves no \rot{} can provide \textbf{S}trict serializability with \textbf{N}on-blocking client-server communication that completes with \textbf{O}ne response, with only one version of the data, per read in a system with concurrent \wotsSNOW{} (\S\ref{sec:snow}).
It shows there is a fundamental tradeoff between the latency and guarantees of \rots{} that system designers must grapple with, they must pick either the strongest guarantees (S and W) or optimal latency (N and O).

SNOW is trivially possible in systems with a single client or a single server because the single entity naturally serializes all transactions.
The SNOW Theorem shows SNOW is impossible in systems with at least three clients and at least two servers.
It explicitly leaves open the question of the possibility of SNOW in a system with two clients.
In addition, the model used in the prior work implicitly leaves open several questions.
It assumed the three clients were a single writer and multiple readers (SWMR).
This leaves open the possibility of SNOW with multiple writers and a single reader (MWSR).
The SNOW Theorem also implicitly assumed that clients do not directly exchange messages and that write operations in such a system must eventually complete.
This also leaves open the question of whether allowing or disallowing client-to-client (C2C) communication has any impact on the feasibility of \rot{}s with SNOW properties.

In this work, the new impossibility results 
are philosophically similar to other impossibility results---such as FLP~\cite{Fischer:pds1983} and CAP~\cite{Brewer:pdc2000, Gilbert:sigact2002}---in that they help system designers avoid wasting effort in trying to achieve the impossible. That is, the SNOW Theorem identifies a boundary in the design space of \rots{}, beyond which no algorithms can possibly exist. By revisiting SNOW, our work makes this boundary more precise.


\subsection{Our Contributions}
Our work builds on the SNOW Theorem to clarify this fundamental tradeoff by providing a thorough proof for the SNOW Theorem~\cite{SNOW2016} and also, answer the open questions mentioned above. 
First, we formally state the SNOW properties of executions using the I/O automata framework~\cite{Lynch1996}  with the additional requirement, for the W property, that any WRITE must eventually complete (\S\ref{sec:prelim}). Next, we identify and prove some basic results, required in the proofs, for transforming one valid execution to another possible and safe execution in a \rot{} system (\S\ref{sec:summary}). 
Then we present a new, rigorous proof of the impossibility of SNOW with three or more clients even when client-to-client (C2C) communication is allowed (\S\ref{sec:formal_proof}).
Next, we show that the feasibility of implementing an algorithm with SNOW in MWSR (which also includes the two-client system model) depends on whether C2C communication is allowed:
when it is not allowed, SNOW is impossible (\S\ref{subsec:no_snow_no_c2c}); and when it is allowed, SNOW is possible (\S\ref{app:algorithm-a}) for any MWSR setting.

Prior to this work, the Eiger~\cite{Lloyd:nsdi2013} algorithm was previously believed to be the only algorithm that provided a bounded number of non-blocking rounds~\cite{Lloyd:nsdi2013} and guaranteed strict serializability. 
Next, we show this claim is not true by showing that not all execution of Eiger is strictly serializable (\S\ref{sec:eiger}).

Next, after realizing the limits posed by the SNOW Theorem, we ask ourselves whether it is possible to construct \rot{} algorithms with no C2C communication, as in most practical systems, where one of the SNOW properties is relaxed. One obvious candidate property is the ``O" property, where one of the two restrictions (i.e., ``one-round" of communication and ``one-version" of data) can be relaxed. We provide two algorithms for the \emph{multiple-writers multi-reader} (MWMR) setting: the first algorithm $B$ guarantees SNW and the ``one-version" property and completes \rot{}s in two rounds (\S\ref{mwmr_snow_one_version}); the second algorithm, $C$, guarantees SNW and the ``one-round" property but returns up to as many versions of the data as there are concurrent \wot{}s
(\S\ref{mwmr_snow_one_round}). Thereby, making these \rot{}s algorithms with a bounded number of non-blocking rounds and guarantees strict serializability. 
Due to space limitations, we omit most of the proofs and present them in an extended version in arXiv~\cite{konwar2018snow}.

\remove{ KMK
The general impossibility of SNOW, even with client-to-client communication, raises the question of what the best achievable latency is for \rot{}s with the strongest guarantees (S and W).
Surprisingly, there are no existing algorithms with bounded latency.
All algorithms either have an unbounded amount of blocking~\cite{Mu:osdi2016, Thomson:sigmod2012, Mu:osdi2014} or take an unbounded number of rounds~\cite{Wei:sosp2015, Lee:sosp2015, Aguilera:sosp2007}.
One algorithm was previously believed to provide a bounded number of non-blocking rounds~\cite{Lloyd:nsdi2013}, but we show that it does not provide strict serializability (\S\ref{sec:eiger}).

We present the first \rot{} algorithms with bounded latency that can be used in systems with the strongest guarantees.
Figure~\ref{fig:contr_table_b} summarizes what our new algorithms achieve.
Our algorithms all use only non-blocking operations, making them SNW algorithms.
The one response per read (O) property requires one round trip from the client to the servers and that the responses contain only a single version of each requested object.
While providing both parts of the O property is impossible for SNW algorithms, we show that either part is individually possible.

Our first algorithm returns one version of each requested data item and finishes in two rounds (\S\ref{mwmr_snow_one_version}).
Our second algorithm finishes in one round and returns $W$ versions, where $W$ is the number of ongoing \wots{} (\S\ref{mwmr_snow_one_round}).
We prove that our algorithms are non-blocking, strictly serializable, and respect write isolation.
}

\begin{figure}[t]
  \centering
  \begin{subfigure}[b]{0.49\columnwidth} {
      \begin{center}
\begin{tabular}{@{}c c c@{}}
  \midrule
                   \multicolumn{3}{c}{\phantom{ABCDef}\textbf{Client-to-Client?}}\\
  \textbf{Setting} & \textbf{Yes}        & \textbf{No}\\
  \hline\\[-1.8ex]
  2 clients        & $\checkmark$   & $\times$ \\
  MWSR             & $\checkmark$   & $\times$ \\
  $\ge 3$ clients  & $\times$       & ($\times$)\\
  \midrule
\end{tabular}
      \end{center}
    }
    \caption{Is SNOW possible?}
    \label{fig:contr_table_a}
  \end{subfigure}
  \hfill
  \begin{subfigure}[b]{0.49\columnwidth} {
      \begin{center}
\begin{tabular}{@{}c c c c@{}}
  \midrule          
  & \multicolumn{3}{c}{\phantom{a}\textbf{Rounds}}\\
  \textbf{Versions} & \textbf{1}    & \textbf{2}    & $\boldsymbol{\infty}$ \\
  \hline\\[-1.8ex]
                1   & $(\times)$    & $\checkmark$  & $(\checkmark)$ \\
              $|W|$ & $\checkmark$  &  & \\
  \midrule\\
\end{tabular}
      \end{center}
    }
    \caption{Bounded SNW algorithms.}
    \label{fig:contr_table_b}
  \end{subfigure} 
  \caption{A summary of our new results.  Previous results are marked in parentheses.
    $\times$ indicates we have proved that such \rots{} are impossible.
    $\checkmark$ indicates we have described such a new \rot{} algorithm. 
    $|W|$ is the number of concurrent \wots{}.}
  \label{tbl:result_map}
  \vspace{-1.5em}
\end{figure}


\section{Transactions Processing System}
\label{sec:prelim}
Web services typically have two tiers of machines within a datacenter: a stateless
frontend tier and a stateful storage tier.
The frontends handle user requests by executing application logic that generates
sub-requests to read/write data in the storage tier that shards (or splits) data across many machines.
We refer to the front-ends as the \textit{clients}, the storage machines as
the \textit{servers} and the stored data items as \textit{objects}, to match common terminology.
While web services are typically geo-replicated, we focus on sharding within a datacenter because the reads that dominate their workloads are handled within a single datacenter.

We consider a transaction processing system that comprises a set of read/write objects $\mathcal{O}$,
 where each object 
$o \in \mathcal{O}$ is maintained by a separate server process,
and also another set of processes, we refer to as \emph{clients}, that can initiate transactions, after the previous ones, if any, have completed.
 The system allows two types of transaction: {\sc READ} transaction, a group of read requests for the values stored in some subset of objects in $\mathcal{O}$; and {\sc WRITE} transaction, a group of write requests intending to update the values stored in some subset of objects  $\mathcal{O}$.
 A read-client executes only READ transactions, while write-client executes only WRITE transactions; no client executes both types of transaction.

A typical READ transaction, we denote as $\textit{R}(o_{i_1}, o_{i_2}, \ldots, o_{i_q})$ or in short by $R$, consists a set of individual read requests
  $read(o_{i_1})$,  $read(o_{i_2})$ and  $read(o_{i_q})$ to read values in objects $o_{i_1}, o_{i_2}, \ldots, o_{i_q}$, respectively.
   $read(o)$ denotes a read that intends to read the value of object $o$.
 A typical WRITE, denoted as   
 $\textit{W}((o_{i_1}, v_{i_1}), (o_{i_2}, v_{i_2}), \ldots, (o_{i_p}, v_{i_p}))$ or in short as $W$, consists of a set of which requests to update the values of objects $o_{i_1}, o_{i_2}, \ldots, o_{i_p}$ with  $v_{i_1}, v_{i_2}, \ldots, v_{i_p}$, that are values from the  domans $V_{i_1}, V_{i_2}, \ldots, V_{i_p}$,  respectively.
  
 A read (or write) client initiates a  READ (or WRITE) transaction with an invocation step  $\INV{R}$ (or $\INV{W}$), then it carries out the read or write operations in the transaction; and eventually completes  the transaction with a $\RESP{R}$ (or $\RESP{W}$).
 After the completion of the reads or writes in a transaction the client responds, in the case of $R$, with the values of objects;  and, in the case of $W$, an \emph{ok} status, to the external client.

We assume that the network channels are reliable but asynchronous, i.e., any message sent by a process will eventually arrive at its destination uncorrupted. We assume  local computations are  asynchronous, i.e., local computations at various processes proceed at arbitrary and unpredictable speeds. When a client receives a transaction request, usually from an external client, such as an user's device,  it  executes the transaction, denote by $R$ or $W$,  and finally, responds to the external client with the results.

\remove{KMK
The impossibility results we prove in this work involve showing the impossibility of algorithms for transaction processing where the executions are expected to guarantee  certain properties, such as the SNOW properties, described below. Our proofs rely on showing contradictions in some execution of an algorithm $\mathcal{A}$,  if such an algorithm was possible.  A typical proof assumes a simple set of transactions whose outcome is obvious,  satisfies the SNOW properties (described below) and also shows the existence of an execution $\alpha_0$, with desired properties,  of $\mathcal{A}$. Based on this,
we inductively argue the existence of a sequence of executions, with the properties, of $\mathcal{A}$, and finally arrive at an execution that contradicts at least one of the properties.  Since $\mathcal{A}$ can be any algorithm,  simple or complex, as long as it (to be precise, its executions) respects the same properties, there are innumerable  types of executions due to the asynchrony and the steps of $\mathcal{A}$. To create a series of possible executions, $\{\alpha\}_{i=0}$ we create the new execution $\alpha_{i+1}$, with the properties,  by starting with $\alpha_i$ and then perturbing (delaying or speeding up, using the liberty allowed by asynchrony) the occurrence of some events of $\alpha_i$ and thereafter allowing $\mathcal{A}$ to execute.
}


\remove{
\subsection{Algorithm Specification with I/O Automata}
\label{subsec:model}

\paragraph{\textbf{Processes and channels.}}
We consider a set $\Omega$ of $n$ processes in the system: $P_1, P_2, \ldots, P_n$. Each process represents a machine in the datacenter. More specifically, we denote client machines (processes) that can issue reads (readers) by $r_1, r_2, \ldots, r_k$, the clients that can issue writes (writers) by $w_1, w_2, \ldots, w_l$, and servers that store the data by $s_1, s_2, \ldots, s_m$, with $r, w, s \in \Omega$. Each client can only have one outstanding transaction, i.e., it cannot issue a new transaction until the outstanding transaction finishes. As a result, each client can either be a reader or a writer but not both at a given time.

We model a distributed algorithm using   the I/O Automata~\cite{Lynch1996}.
Here we limit our discussion of I/O Automata to the relevant concepts, but for a detailed account the reader should refer 
to ~\cite{Lynch1996}.
An algorithm is a composition $\mathcal{A}$ of a set of \emph{automata} where each automaton  $A_i$ corresponds to a process (e.g., client or server) or a communication channel in the system.  $A_i$ is defined in terms of
a set of deterministic transition functions  $\trans{A_i}$ (also called \textit{actions}), which can be thought of as the algorithmic steps of $A_i$;  and  a set of states $\states{A_i}$. 
An execution of $\mathcal{A}$ is a sequence of alternating states and actions of $A$,  $\sigma_0, a_1, \sigma_1, a_2, \sigma_2, \ldots, \sigma_n$. 
A state change, called a \textit{step}, is a 3-tuple $(\sigma_i, a_i, \sigma_{i+1})$, with $\sigma_i, \sigma_{i+1} \in \states{A_i}$ and $a_i \in \trans{A_i}$. 
The set of input actions is denoted by $\inactions{A_i}$, e.g.,  $a_i \in \inactions{A_i}$ is an input action if it receives a message. The set of output actions is denoted by $\outactions{A_i}$. The input and output actions are also called external actions, denoted by $\extactions{A_i}$ and $\inactions{A_i} \cup \outactions{A_i}=\extactions{A_i}$. If an action $a_i \notin \extactions{A_i}$, then $a_i$ is an internal action.
Communications between any two automata $A_i$ and $A_j$ is modeled by using channel automata $Channel_{i, j}$ for sending messages from $A_i$ to $A_j$; and $Channel_{j, i}$ for sending message from $A_j$ to $A_i$. When $A_i$ sends some
message $m$ to $A_j$ the following sequence of actions occur: $send(m)_{i,j}$  occurs at $A_i$, then 
$send(m)_{i,j}$ followed by $recv(m)_{i, j}$ occur at $Channel_{i,j}$; then finally, $A_j$ receives $m$ via the action $recv_{i,j}(m)$.
In our model,  the communication channels are simple because we assume reliable communication between each pair of processes. Therefore, we ignore the actions in the $Channel_{i, j}$ and instead say   $send(m)_{i,j}$  occurs at $A_i$ and then $A_j$ receives $m$ via the action $recv_{i,j}(m)$.
%
An execution fragment $\alpha$ 
can be either finite, i.e., having finite states, or infinite.
If $\alpha$ is a finite execution and $\beta$ is an execution fragment, such that $\beta$ starts with the final state of $\alpha$ then we use $\alpha \circ \beta$ to denote the concatenation of $\alpha$ and $\beta$.
If $\epsilon$ and $\epsilon'$ are two execution fragments, such that they have the same sequence of states at automaton $A_i$, i.e., $\epsilon|A_i = \epsilon'|A_i$, then $\epsilon$ and $\epsilon'$ are indistinguishable at $A_i$, denoted by $\epsilon \stackrel{A_i}{\sim} \epsilon'$. When the context is clear, we simply use $\epsilon \sim \epsilon'$. 
}
We model a distributed algorithm using the I/O automata modeling framework (see ~\cite{Lynch1996} for a detailed account).
In the rest of this paper, for any execution of an automaton $\mathcal{A}$, 
$\finiteprefix{k-1}{k}\ldots$, where $\sigma$'s and $a$'s are states and actions,
we use the notation $\finiteprefixt{k-1}{k}\ldots$ which shows only the actions to simplify notation. 
We use the notation $\textit{prefix}(\alpha, a)$ to refer to the finite prefix of any execution $\alpha$ ending with action  $a$ such that $a$ occurs within $\alpha$.
In our model, an individual read, such as $read(o)$, in some read transaction $R$ initiated by some read client $r$ consists of the following sequence of actions: after $\INV{R}$ at $r$, $r$ sends a message $m$ (requesting the value stored in $o$) to a server $s$ via action $\send{m}{r, s}$. When  $s$ receives $m$ via  action $\recv{m}{r, s}$ it sends the value $v$ (stored in $o$) to $r$  via action $\send{v}{s, r}$. Then $read(o)$ completes as soon as $r$ receives $v_j$ via action  $\recv{v}{s, r}$; $R$ completes with $\RESP{R}$ after all the  reads in it are complete.

\remove{
\subsection{Transactions Processing System}
\label{subsec:notation}
We consider a transaction processing system that comprises a set $\mathcal{O}$ of read/write objects,
 where each object 
$o \in \mathcal{O}$ is maintained by a separate server process, and we consider a set of client process,  and all processes are connected via  reliable and asynchronous communication channels.
\sloppy 
The distributed algorithm $\mathcal{A}$  for the transaction processing system runs on these set of processes. In this setting,  there are two types of transactions: READ and WRITE transactions. The \textit{invocation} of a \rot{} $R$, denoted by $\INV{R}$, is the external action at a read client process $\textit{READ}(o_{i_1}, o_{i_2}, \ldots, o_{i_q})$, for any subset  of objects $o_{i_1}, o_{i_2}, \ldots, o_{i_q}$ in $\mathcal{O}$.
%
The \textit{response} of $R$, denoted by  $\RESP{R}$,  occurs at the same read client by returning the value read from each of the  objects $o_{i_1}, o_{i_2}, \ldots, o_{i_q}$.
%
 The goal of a write transaction $\textit{WRITE}((o_{i_1}, v_{i_1}), (o_{i_2}, v_{i_2}), \ldots, (o_{i_p}, v_{i_p}))$,  we denote by $W$, is to update the 
 value of object $o_{i}$ to $v_i$.  The invocation of a \wot{} $W$, denoted by $\INV{W}$, is 
 $\textit{WRITE}((o_{i_1}, v_{i_1}), (o_{i_2}, v_{i_2}), \ldots, (o_{i_p}, v_{i_p}))$, which writes $o_i$ with value $v_i$. The response of $W$, denoted by $\RESP{W}$, is simply $\textit{ok}$. 
}

\subsection{SNOW Properties} 
In this subsection, we define the SNOW properties for a transaction 
processing system.  Namely, we require that any fair execution of the
system satisfies the following
 four properties: $(i)$ \emph{Strict serializability} (S), which means there is a total ordering  of the transactions such that  all transactions in the resulting execution  appear to be processed by  a single machine one at a time; $(ii)$ \emph{Non-blocking operations} (N), which  means that the servers respond immediately to the read requests of a READ transaction without waiting for any input from other processes; $(iii)$ \emph{One response per read} (O), which requires that any read operation  consists of one round trip of communication with a server,  and also, that  the server responds with a message that contains exactly  one version of the  object value; and $(iv)$ \emph{WRITE transactions that conflict} (W) implies the  existence of concurrent WRITE transactions that update the data objects  while READ transactions are in progress reading the same objects.  Below we describe the individual properties of the SNOW properties in more detail.

{\bf Strict serializability (S).} By \emph{strict serializability} (for a formal definition please see~\cite{HW90}), we mean each WRITE or READ transaction appears to the clients to be executed atomically, at some point in an execution  between the invocation and response events.

Next, we describe the non-blocking and one-response 
properties.  Both are defined as properties of read operations to an 
individual object. For the purpose of elucidation, we consider an execution $\alpha$ of a transaction processing system $T$ that has a set of objects $\mathcal{O}$, where there is a READ transaction  $\textit{R}(o_{i_1}, o_{i_2}, \ldots, o_{i_q})$, in short $R$, invoked at some reader $r$, such that $R$ contains a read $read(o_j)$ for some $o_j \in \mathcal{O}$ maintained at server $s_j$.
%

{\bf Non-blocking reads (N).} The \emph{non-blocking} property means that if $r$, a reader-client,  sends any message to any $s_i$ ($s_i$ manages object $o_i$) during the transaction then  $s_i$ can  respond to  $r$ without waiting for any external input event, such as the arrival of messages, any mutex operations, time, etc. This property ensures that {\sc READ} transactions are delayed only due to delay in message delivery between $r$ and $s_i$. We define this property formally as follows.

\begin{definition}[Non-blocking read (N)]
 Suppose in  $\alpha$, following the action $\INV{R}$,  the actions  $\recv{m_j^{r}}{r, s_j}$ and   $\send{v_j}{ s_j, r}$
   corresponding to $read(o_j)$, occurs at $s_j$.  Then there  exists an execution $\alpha'$ of $T$ such that
  \begin{enumerate}
  \item[ $(i)$ ] The execution fragments 
 $\textit{prefix}( \alpha, \recv{m_j^{r}}{r, s_j})$ and $\textit{prefix}( \alpha', \recv{m_j^{r}}{r, s_j})$ are identical, where
 $\textit{prefix}(\alpha, a)$ is the prefix of $\alpha$ ending with $a$.
 \item [ $(ii)$] In $\alpha'$ the action  $\send{v_j}{ s_j, r}$ at $s_j$ occurs after $\recv{m_j^{r}}{r, s_j}$ without any input action in between.
\end{enumerate}
\end{definition}
  
{\bf One-response per read (O).} The \emph{one-response} property requires that each read operation, $read(o_i)$,  during any
READ transaction, completes successfully in one round of client-to-server communication and the \emph{one-version}  states that exactly one version of the value is sent by server $s_i$, that manages $o_i$, to $r$. \emph{One-round} consists of 
a read  request from the client initiating the read operation to the server  and  the response containing value sent by the server.

\begin{definition}[One response per read (O)]
 Suppose in  $\alpha$, the action $\INV{R}$ occur at $r$  then in $\alpha$ there exists  exactly a pair of actions $ \recv{m_j^{r}}{r, s_j})$ and   $\send{v_j}{ s_j, r}$, corresponding to $R$,  occur at $s_j$, such that 
 $v_j$ is the object value of $o_j$.
\end{definition}
If the reads,  of some READ transaction, of a transaction processing system respect the \emph{non-blocking} and \emph{one-response} properties then each read includes one-round trip from client to server,  where the server returns only the requested value as soon as it receives the request. It is worth noting that the READ transaction can complete only after all the $read(\cdot)$s in it complete.

\begin{definition}[Conflicting writes (W)]
 Suppose in  $\alpha$, the action $\INV{W}$, the actions  occur
  at a write client $w$ then there is an action $\RESP{W}$ in $\alpha$ that appears after $\INV{W}$.
\end{definition}

{\bf WRITE transactions that conflict (W).} The \emph{conflicting writes}
 property states that READ transactions complete even in the presence of concurrent WRITE
  transactions, where the write operations  might update some objects that are also being read by read operations in READ transaction. 
  This shows that READ transactions can be invoked at any point, even in the presence of ongoing WRITE
   transactions.  Note that the liveness of any WRITE transaction is not implied by any of 
   the SNOW properties;  however, for useful practical systems   the  WRITE transactions must eventually complete. 
    Therefore, we assume that every WRITE
    transaction eventually completes via the $RESP$ event, and think of this constraint as a part of the W property.  

\remove{KMK
\subsection{One-Version, Multi-Round Property (o)} 
We also introduce a weaker version of the O property, denote it as ``o'',  where during a read operation the  server returns exactly one object value but there exists a finite bound on the number of rounds of communication between the client and the  server. We use the acronym SNoW to denote the S, N, o, and W properties.  Moreover,  it is easy to realize that if an algorithm satisfies the SNoW properties then READ transactions always complete, and hence, READ transactions are always live. This is because the N and o properties imply that the server must immediately respond to a read request due to a read operation with exactly one object value without waiting in expectation of incoming messages, and 
there is a fixed upper bound on the number of such rounds for a READ.
}

{\bf The SNOW Theorem.}
\label{sec:snow}
Consider a transaction processing system with an asynchronous network where a set $\mathcal{O}$ of objects are maintained by individual server processes,  with at least one write client and at least two read clients. Then the SNOW Theorem~\cite{SNOW2016} can be stated as follows.


\emph{``For any transaction processing system in an asynchronous setting, with at least one writer and two reader clients, and at least two sharded objects, it is impossible to have an algorithm such that all of its executions guarantee the SNOW properties.''} 




\section{Technical Preliminaries}
\label{sec:summary}
In this section, we present some useful preliminary results and ideas that we will later use to prove the impossibility results.
We assume a simple system with two servers, $s_x$ and $s_y$, denote the stored values as $x$ and $y$, respectively, and either two or three clients.
One of the clients is a writer $w$, which initiates only \wots{}.
One or two of the clients are readers, $r_1$ and $r_2$, which initiate only \rots{}.

{\bf Client-to-client (C2C) communication.}
We consider two types of settings pertinent to communication among the clients: $(i)$ allow C2C communication, where a client can send a message to any other client, and $(ii)$ disallow C2C communications, where a client cannot send any message directly to another client in the system.


Servers $s_x$ and $s_y$ store values for objects $o_x$ and $o_y$, respectively.
the initial values of $o_x$ and $o_y$ are $x_0$ and $y_0$, respectively. Because there is one  object on each server  the server and object identifiers are often used interchangeably to remove redundancy. For instance, we simply say that $s_x$ returns $x_0$ to the client that initiated the transaction, which means that $s_x$ returns the value $x_0$ of object $o_x$ at the end of the READ transaction.  
\remove{
\sloppy Consider an execution of ${\mathcal A}$ with  three transactions: a \wot{} $W \equiv $ $WRITE($$ (o_1, x_1),  (o_2, y_1) )$ initiated by  $w$, where $x_1 \neq x_0$ and $y_1 \neq y_0$; and  \rots{} $R_1 \equiv READ$$(o_1, o_2)$  and   $R_2 \equiv READ$$(o_1, o_2)$
 initiated by $r_1$ and $r_2$, respectively.  Let us denote by 
$op_1^r$ and $op_2^r$  the read operations of types $\readop{o_1}$ and  $ \readop{o_2} $, respectively. Any distributed algorithm $\mathcal{A}$ in this setting can be represented by the composition of the individual process and channel automatons. 
}
\remove{
Now we describe the set of actions corresponding to the  \rots{} in a fair execution $\epsilon$ of $\mathcal{A}$. 
   Clearly,  in $\epsilon$, the actions $\inv{op_i^r}$ and $\resp{op_i^r}$ appear between the actions $\INV{R}$ and $\RESP{R}$. The  action  $\inv{op_i^r}$ is followed by action $send(m_i^r)_{r, s_i}$, which is 
 for  sending  the read object value request to server $s_i$. This request $m_i^r$ is communicated to $s_i$, via the channel automaton $Channel_{r, s_i}$.  Automaton 
 $s_i$ eventually receives the request 
  via the action   
   $recv(m_i^r)_{r, s_i}$,  and subsequently, responds back to $r$, with object value $v_i$,  through the action $send(v_i)_{s_i, r}$.  Next, 
   value $v_i$, $v_i \in V_i$,  communicated by the automaton  $Channel_{s_i, r}$,  is 
received at $r$ via the action  $recv(v_i)_{s_i, r}$, and finally,  $op_i^r$ completes with  response action $\resp{op_i^r}$ and returns $v_i$ to $r$. 
}

Our proofs often use a special type of execution fragment, named non-blocking fragments, that represent the \rot{} algorithm is non-blocking and returns one version of each object. The one-round property is captured by allowing only one non-blocking fragment on each server for a \rot{}. Our proof strategy plays non-blocking fragments against the requirements of strict serializability and write isolation under the freedom of network asynchrony. We explain non-blocking fragments and helper notations in the context of execution $\alpha$, of the system described as above, as follows:


%

%


\begin{enumerate}[leftmargin=*]
%
%
\item \emph{Non-blocking fragments.}
 For a \rot{} $R_i$ by reader $r_i$, $i\in \{1, 2\}$, suppose there is a execution fragment that starts with $recv(m_j^r)_{r_i, s_j}$ and ends with
 $send(v_j)_{s_j, r_i}$, both of which  occur  at $s_j$. Moreover, suppose there is  
 no other input action at $s_j$ in this fragment. Then we call this execution fragment 
   a \emph{non-blocking fragment} for $R_i$ at  $s_j$ and denote by $\frage{F_{i,j}}{\alpha}{v_j}$, $j \in \{x, y\}$(Fig.~\ref{fig:exe3_fragments}).
   When the context is clear, we omit the first subscript of $F$. For instance, for a \rot{} $R$, $\frage{F_{x}}{\alpha}{x_0}$ denotes the non-blocking fragment of $R$ on $s_x$.  

%
%
\item Suppose \rot{} $R_i$ completes in $\alpha$. Consider the execution fragment in $\alpha$  between the event 
$\INV{R_i}$  and  whichever of the events  $\send{m_y^{r_i}}{r_i, s_y}$ and  $\send{m_x^{r_i}}{r_i, s_x}$ that occurs later. If all the 
actions in this fragment occur at  $r_i$, then we denote this fragment as $\frage{I_i}{\alpha}{}$ (Fig.~\ref{fig:exe3_fragments}).  


\item Suppose \rot{} $R_i$  completes in $\alpha$. Consider the execution fragments in $\alpha$ that
occurs between the later of the events   $\recv{x}{ s_x, r_i}$ or  $\recv{y}{s_y, r_i}$, i.e., at the point in $\alpha$ when $r_i$ receives responses from both servers,  and the event $\RESP{R_i}$. If all the 
actions in this fragment occur at $r_i$, then we denote this fragment by $\frage{E_i}{\alpha}{x, y}$, where $R_i$ returns the values $(x, y)$  (Fig.~\ref{fig:exe3_fragments}) to the external client.  

\item We use $R(\alpha)$ and $W(\alpha)$ to denote the READ and WRITE transactions in the context of $\alpha$. When the context is clear, we simply use $R$ and $W$.
\item We use the subscript of a returned value to denote the version identifier, which uniquely identifies a version from a totally ordered set. For instance, $x_0$ is the $0^{\textit{th}}$ version of $x$ (the initial value of object $o_x$) on server $s_x$.

\end{enumerate}

\remove{
 \begin{figure}[t]
      \centering
         \includegraphics[width=4in]{figures/executions_3_2.png}
         \caption{
\small{The relevant actions in the execution fragments $I_i(\alpha)$, $F_{i,x}(\alpha)^{x}$,    $F_{i,y}(\alpha)^{y}$ and   $E_{i}(\alpha)^{(x,y)}$ for any \rot{} $R_i$, $i \in \{1, 2\}$
of a fair execution $\alpha$ of $\mathcal{A}$.}}
         \label{fig:exe3_fragments}
 \end{figure}
}

In our proofs, we frequently use arguments that rely on the existence of  non-blocking fragments and the constraints of strict serializability and write isolation. Below we state a few useful lemmas regarding the executions, of some algorithm $\mathcal{A}$, where all \rots{} are assumed to have all SNOW properties; we will use these lemmas in later sections. Due to space constraints, we explain these lemmas at a high level. 

The following lemma  states that a \rot{} has to return the same version from both servers in order to satisfy strict serializability and write isolation.

\begin{lemma}\label{lem:exec3_consistent} 
Suppose $\alpha$ is any execution of $\mathcal{A}$ such that \rot{} $R$ is in $\alpha$. Suppose the  execution fragment $\frage{I}{\alpha}{} \circ \frage{F_{x}}{\alpha}{x_{t}} \circ \frage{F_{y}}{\alpha}{y_{s}} \circ \frage{E}{\alpha}{x_{t'}, y_{s'}}$ in $\alpha$, corresponds to $R$,  where   $x_t, x_{t'} \in V_1$ and  $y_s, y_{s'} \in V_2$, 
then $s=s'=t=t'$.  
\end{lemma}


\begin{proof}
Suppose $R$ is invoked at reader $r$.  Then, via the action $send(x_t)_{s_x, r}$, in 
execution fragment  $\frage{F_{1}}{\alpha}{x_{t}}$, server $s_x$ sends the value $x_t$ to $r$,   which is received at $r$ through the action 
$recv(x_{t'})_{s_x, r}$ in $\frage{E}{\alpha}{x_{t'}, y_{s'}}$.
 By the assumptions of the reliable channel automata in our model,  we have $x_{t}=x_{t'}$, i.e., $t=t'$. Similar argument for 
$\frage{F_{2}}{\alpha}{y_{s}}$  and $\frage{E}{\alpha}{x_{t'}, y_{s'}}$ leads us to conclude $s=s'$. Next, $R$ responds with $(x_{t'}, y_{s'})$, which implies  by the S property for executions of
 $\mathcal{A}$ that  $x_{t'}$ and $y_{s'}$ must correspond to the same version, i.e., $s'=t'$. 
\end{proof}


\remove{
\begin{corollary}\label{lem:exec3_consistent_corollary} 
Suppose $\alpha$ is any execution of $\mathcal{A}$ such that a \rot{} $R$ is in $\alpha$. Suppose the  execution fragment $\frage{I}{\alpha}{} \circ  X_1 \circ \frage{F_{1}}{\alpha}{x^{t}} \circ X_2 \circ  \frage{F_{2}}{\alpha}{y_{s}} \circ X_3 \circ \frage{E}{\alpha}{x_{t'}, y^{s'}}$ in $\alpha$,  corresponds to $R$,  where   $x_t, x_{t'} \in V_1$ and  $y^s, y^{s'} \in V_2$, $X_1, X_2, X_3$ are some execution fragments that do not contain any action at $r$, $s_x$ or $s_y$, and $s, s', t, t' $ are version identifiers  then $(i)$ $s = s'$  and $ t=t'$ and $(ii)$ $s'=t'$.  
\end{corollary}

\begin{proof} 
The constraint $(i)$ in the statement can be derived  from the fact that $\mathcal{A}$ satisfies the O property,  which implies that any version returned by $R$ for object value $o_1$ (or $o_2$) and this  must be the only version sent by the server $s_x$ (or $s_y$) to $r$. Constraint 
$(ii)$ from the S property of $\mathcal{A}$.
\end{proof}
}

The following lemma states that we can create a new execution $\alpha'$ that is indistinguishable to $\alpha$ by swapping two adjoining fragments, which happen on two distinct automata in $\alpha$ if either $(a)$ both fragments have no input actions or 
$(b)$ one of the fragments have no external (input or output) actions. Our proofs leverage this lemma to create new executions by swapping such fragments and finally derive an execution that violates strict serializability.  
  \begin{lemma}[Commuting fragments] \label{lem:exec3_commute} 
  Let $\alpha$ be an execution of $\mathcal{A}$. Suppose $\frage{G_1}{\alpha}{}$ and $\frage{G_2}{\alpha}{}$ are any  execution fragments in $\alpha$ such that
  all actions in each fragment occur only at one automaton and  either $(a)$ none of the fragments contain input actions, or $(b)$ 
  at least one of the fragments have no external actions. Suppose $\frage{G_1}{\alpha}{}$ 
  and $\frage{G_2}{\alpha}{}$  occur at two distinct automata and the execution
  fragment $\frage{G_1}{\alpha}{}\circ \frage{G_2}{\alpha}{}$ occurs in $\alpha$.
  Then there exists an execution $\alpha'$ of $\mathcal{A}$, where  the execution fragment  $\frage{G_2}{\alpha}{}\circ \frage{G_1}{\alpha}{}$ appears in $\alpha'$, such that 
  $(i)$  $\frage{G_1}{\alpha}{} \sim \frage{G_1}{\alpha'}{}$ and  $\frage{G_2}{\alpha}{} \sim \frage{G_2}{\alpha'}{}$
  $(ii)$ 
the prefix  in $\alpha$  before  $\frage{G_1}{\alpha}{} \circ \frage{G_2}{\alpha}{}$
 is identical to the prefix  in $\alpha'$ before  $\frage{G_1}{\alpha'}{} \circ \frage{G_2}{\alpha'}{}$; and $(ii)$ 
the suffix in $\alpha$  after   
 $\frage{G_1}{\alpha}{} \circ \frage{G_2}{\alpha}{}$
is identical to the suffix in $\alpha'$ after the execution fragment 
 $\frage{G_2}{\alpha'}{} \circ \frage{G_1}{\alpha'}{}$. 
  \end{lemma}

  \begin{proof}
  This is clear because the adversary can move the actions in $G_2$ to occur before $G_1$ at their respective automata, and  because 
   either $(a)$ none of the fragments have any input action or $(b)$ at least one of them has no external actions, and hence 
  the actions in one of these fragments cannot affect the actions in the other fragment.
  \end{proof}

The following lemma states that if there are two fair executions of $\mathcal{A}$ with \rot{} $R$ in each of them, and suppose at any server the non-blocking fragments of $R$ are identical (in terms of the sequence of states and actions), then $R$ returns the similar values in both executions.
\begin{lemma}[Indistinguishability]  \label{lem:exec3_equiv}
Let $\alpha$ and $\beta$ be executions of $\mathcal{A}$ and let $R$ be any \rot{}. Then 
$(i)$  if $\frage{F_{x}}{\alpha}{} \stackrel{s_x}{\sim}\frage{F_{x}}{\beta}{}$ then both 
$R(\alpha)$ and $R(\beta)$ respond with the same value $x$ at $s_x$; and 
$(ii)$ if  $\frage{F_{y}}{\alpha}{} \stackrel{s_y}{\sim}\frage{F_{y}}{\beta}{}$ then both 
$R(\alpha)$ and $R(\beta)$ respond with the same value $y$ at $s_y$.
\end{lemma}

\begin{proof}
Suppose $R$ is invoked at some reader $r$. Let $j \in \{1, 2\}$ and  suppose the fragments  $\frage{F_{j}}{\alpha}{}$ and  $\frage{F_{j}}{\beta}{}$ 
appears in $\alpha$ and $\beta$ respectively, where in $\frage{F_{j}}{\alpha}{}$ server $s_j$
sends  $v_j \in V_j$ to $r$. Then $R(\alpha)$ must return $v_j$ for object $o_j$ by the O property  of $\mathcal{A}$. Then since $\frage{F_{j}}{\alpha}{} \stackrel{s_j}{\sim} \frage{F_{j}}{\beta}{}$ then
in  $\frage{F_{j}}{\beta}{}$  the server $s_j$ 
must also  send $v_j$ to $r$,  therefore,  both $R(\alpha)$ and $R(\beta)$ must return value $v_j$ for $o_j$.
%
\end{proof}

 The following lemma shows that for any finite execution of $\mathcal{A}$ that ends with the invocation of  \rot{} $R_1$, it is always possible to have an extended execution of $\mathcal{A}$ where the fragments $I$, $\frage{F_{x}}{}{}$, $\frage{F_{y}}{}{}$ and $E$ appear consecutively due to the asynchronous network.
\begin{lemma}\label{lem:read_format}
If any finite execution of  $\mathcal{A}$ ends with $\INV{R}$, for a \rot{} $R_1$ then there exists an extension $\alpha$ which is a fair execution of $\mathcal{A}$ and  is  of the form  
   $\frage{P}{\alpha}{} \circ\frage{I}{\alpha}{} \circ \frage{F_{1,x}}{\alpha}{x} \circ \frage{F_{1,y}}{\alpha}{y} \circ \frage{E}{\alpha}{x, y} \circ \frage{S}{\alpha}{}$, where $\frage{P}{\alpha}{}$ is the prefix and $\frage{S}{\alpha}{}$ denotes the rest of the execution. 
 \end{lemma}

 \begin{proof}
 Consider a finite execution of $\mathcal{A}$  that end with  $INV(R)$, which  occurs at some reader  $r$, 
then the  adversary  induces the execution fragment $\frage{I}{\alpha}{}$  by delaying all actions, except the internal and output actions at $r$, between 
the actions $\INV{R}$ and the later of the actions $send(m_1^{r})_{r, s_x}$ and $send(m_2^{r})_{r, s_y}$. Next, the adversary  
delivers $m_1^{r}$ at $s_x$ (via the action $recv(m_1^{r})_{r, s_x}$) and delays all actions, other than internal and output actions at $s_x$, until $s_x$ responds with $x$, via  $send(x)_{s_x, r}$; 
we identify this  execution fragment
 as $\frage{F_{1}}{\alpha}{x}$. Subsequently, in a similar manner, 
the  adversary  delivers the message  $m_2^{r}$ and delays appropriate actions to induce the execution fragment 
$\frage{F_{2}}{\alpha}{y}$. Finally, the adversary  delivers the values $x$ and $y$ to $r$ (via the events $recv(x)_{s_x, r}$ and $recv(y)_{s_y, r}$), 
and delays all actions at other automata until $R$ completes  with action $\RESP{R}$ by returning $(x, y)$.  As a result, we  arrive at a fair execution of $\mathcal{A}$ of the form
 $\frage{I}{\alpha}{} \circ \frage{F_{1}}{\alpha}{x} \circ \frage{F_{2}}{\alpha}{y} \circ \frage{E}{\alpha}{x, y} \circ \frage{S}{\alpha}{}$.
\end{proof}

\remove{
 \begin{figure}[t]
      \centering
        \includegraphics[scale=0.4]{figures/executions_3_1-crop.pdf} 
         \caption{
        \small{Executions of $\mathcal{A}$ with three clients and operations $W$, $R_1$ and $R_2$ leading to the contradiction of S in $\alpha_{10}$.
        Arrows show the transposition of execution fragments from the previous execution.
         }} 
         \label{fig:executions3}
 \end{figure}
}

\begin{figure}[t]
	\hspace*{-0.0cm}

		\centering
	     \includegraphics[width=5.0in]{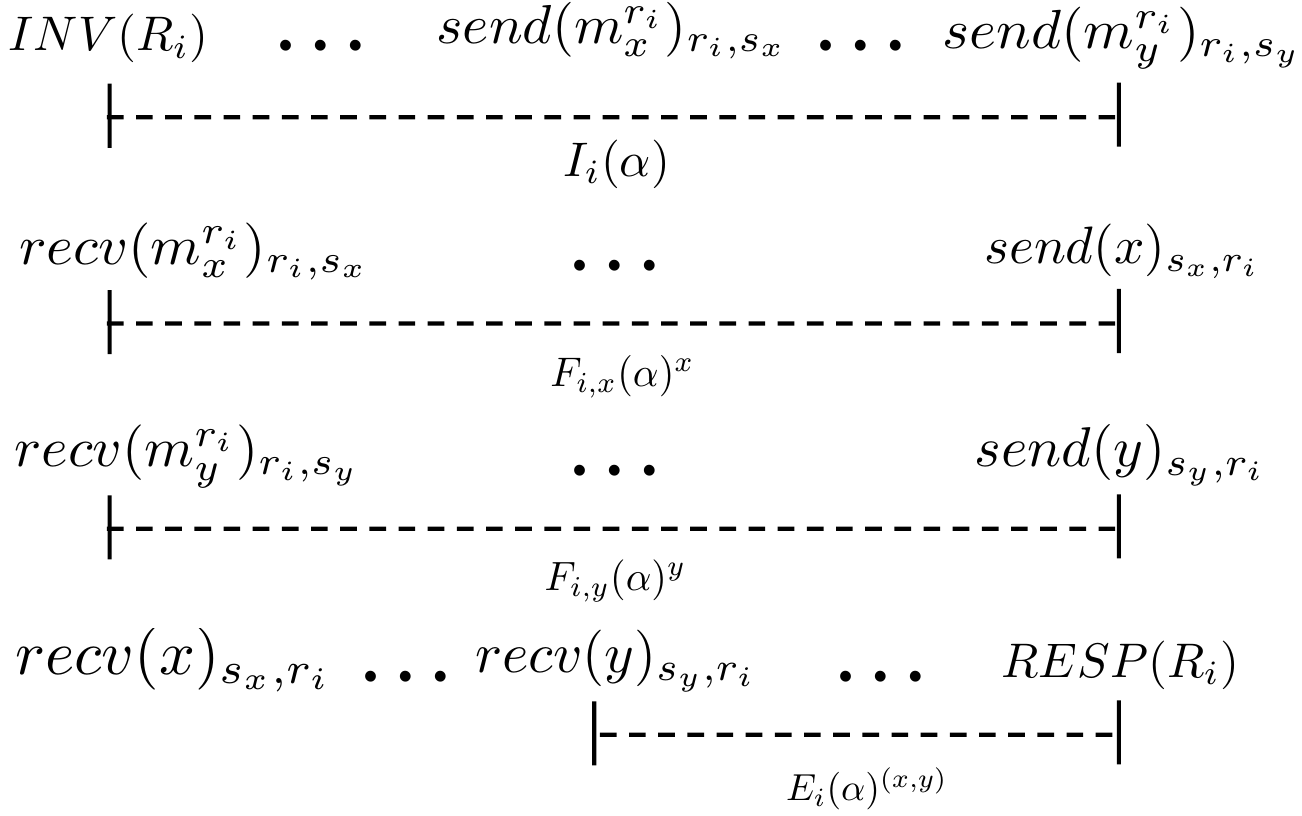}
	     \caption{\small{
	      The relevant actions in the execution fragments $I_i(\alpha)$, $F_{i,x}(\alpha)^{x}$,    $F_{i,y}(\alpha)^{y}$ and   $E_{i}(\alpha)^{(x,y)}$ for any \rot{} $R_i$, $i \in \{1, 2\}$
			of a fair execution $\alpha$ of $\mathcal{A}$.
	     }}
	     \label{fig:exe3_fragments}
\end{figure}

\begin{figure}[t]
		\centering
		\includegraphics[width=5.0in]{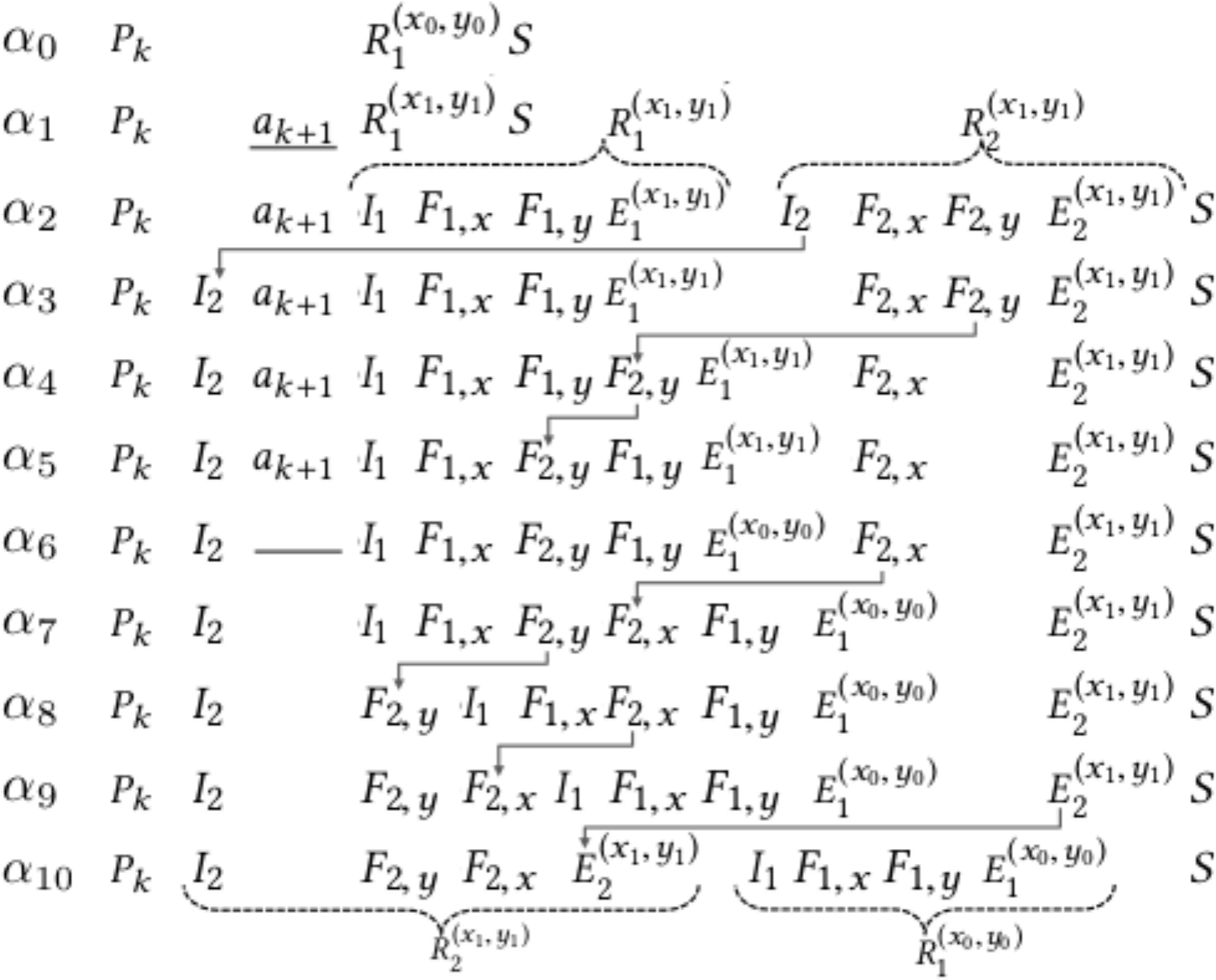}		
		\caption{\small{Executions of $\mathcal{A}$ with three clients and operations $W$, $R_1$ and $R_2$ leading to the contradiction of S in $\alpha_{10}$.
			Arrows show the transposition of execution fragments from the previous execution.
		}}
           	\label{fig:executions3}	
\end{figure}

\section{No SNOW with Three Clients and C2C}
\label{sec:formal_proof}
\sloppy This section provides the sketch of a formal proof of the SNOW Theorem with 3 clients, i.e., SNOW is impossible in a system with 3 or more clients even when client-to-client communication is allowed. The main result of this section is captured by the following theorem. 
\begin{theorem}\label{thm:snow3}
	The SNOW properties cannot be implemented in a system with two readers and one writer, for two servers even in the presence of client-to-client communication.
\end{theorem}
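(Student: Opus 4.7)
The plan is to obtain the theorem as an essentially immediate corollary of Lemma~\ref{lem:exec3_alpha11}, which has already done the delicate work. That lemma exhibits a fair execution $\alpha_{10}$ of $\mathcal{B}$ of the form
\[
\finiteprefixA{k-1}{k} \circ \frage{R_2}{}{v_1^1, v_2^1} \circ \frage{R_1}{}{v_1^0, v_2^0} \circ \fragt{S}{\alpha_{10}}{},
\]
in which $\RESP{R_2}$ occurs strictly before $\INV{R_1}$. So in the real-time (precedence) order on the transactions appearing in $\alpha_{10}$ we have $R_2 \rightarrow R_1$, with $R_2$ returning the updated values $(v_1^1, v_2^1)$ and $R_1$ returning the initial values $(v_1^0, v_2^0)$. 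My goal is to show that this configuration is incompatible with the S property, thereby contradicting the assumption that $\mathcal{B}$ satisfies SNOW.

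The argument I would give proceeds in two short steps. First, the S property guarantees a total order $\prec$ on the transactions $\{W, R_1, R_2\}$ that extends the real-time precedence order, together with serialization points at which each transaction appears to take effect atomically. Since $R_2 \rightarrow R_1$ in $\alpha_{10}$, any such $\prec$ must satisfy $R_2 \prec R_1$. Second, because the initial values of $o_1$ and $o_2$ are $v_1^0$ and $v_2^0$ with $v_1^1 \neq v_1^0$ and $v_2^1 \neq v_2^0$, and because $W$ is the unique write transaction in $\alpha_{10}$, the only way for $R_2$ to return $(v_1^1, v_2^1)$ consistently with $\prec$ is to have $W \prec R_2$. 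Chaining, $W \prec R_2 \prec R_1$, so the state seen at the serialization point of $R_1$ must already reflect $W$, forcing $R_1$ to return $(v_1^1, v_2^1)$. This contradicts the fact, established by Lemma~\ref{lem:exec3_alpha11}, that $R_1$ returns $(v_1^0, v_2^0)$ in $\alpha_{10}$.

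I do not expect to need any further appeal to the N, O, or W properties in the proof of this theorem itself; those were consumed during the construction of the auxiliary executions $\alpha_0, \ldots, \alpha_{10}$. The only genuinely new ingredient is spelling out the transitivity argument on $\prec$ from the definition of strict serializability and observing that it is incompatible with the versions returned by $R_1$ and $R_2$ in $\alpha_{10}$. The true obstacle in this line of work is not the final step but the construction of $\alpha_{10}$ itself, which relied on Lemmas~\ref{lem:exec3_commute}--\ref{lem:exec3_alpha8} to repeatedly commute non-interfering fragments and to use indistinguishability at individual server automata; granting those, the concluding contradiction is a one-paragraph syllogism.
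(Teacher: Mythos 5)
Your proposal is correct and follows essentially the same route as the paper: the theorem is obtained as an immediate corollary of Lemma~\ref{lem:exec3_alpha11}, whose execution $\alpha_{10}$ has $R_2$ completing before $R_1$ is invoked yet returning the newer version, which contradicts the S property. Your explicit spelling-out of the serialization order $W \prec R_2 \prec R_1$ is a slightly more detailed justification of the final contradiction than the paper gives, but it is the same argument.
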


Our proof strategy is to assume the existence of an algorithm $\mathcal{A}$ that satisfies all SNOW properties and create an execution $\alpha$ of $\mathcal{A}$ that contradicts the S property. 
We begin with an execution of $\mathcal{A}$ that contains \rots{} $R_1$ and $R_2$, which both read $s_x$ and $s_y$, and \wot{} $W$ that writes $(x_1, y_1)$ to $s_x$ and $s_y$ respectively (both servers have initial values $x_0$, $y_0$). $R_1$ begins after $W$ completes, and $R_2$ begins after $R_1$ completes. By the S property both  $R_1$   and $R_2$ should return $(x_1, y_1)$.
Then we create a sequence of 
executions of $\mathcal{A}$  (Fig.~\ref{fig:executions3}), where we interchange the fragments until we  finally reach
 an execution in which $R_2$ completes before $R_1$ begins, but $R_2$ returns 
$(x_1, y_1)$ and $R_1$ returns $(x_0, y_0)$ which contradicts the  S property.

The following lemma shows that in an execution of $\mathcal{A}$ with a \wot{} $W$ and a \rot{} $R_1$, there exists a point  in the execution
such that if $R_1$ is invoked before that point then $R_1$ returns $(x_0, y_0)$ and if $R_1$ invoked after 
that point then $R_1$ returns $(x_1, y_1$).

\begin{lemma}[Existence of $\alpha_0$ and $\alpha_1$] \label{lem:exec3_alpha1} 
There exist  executions $\alpha_0$ and $\alpha_1$ of $\mathcal{A}$ that contain transactions $W$ and  $R_1$ 
that satisfy the following properties where $k$ is some positive integer and $\finiteprefixt{k-1}{k}$ is a prefix of $\finiteprefixt{k}{k+1}$:
$(i)$ $\alpha_0$ can be written as 
$\finiteprefixt{k-1}{k} \circ 
\frage{R_1}{\alpha_0}{x_0, y_0}
 \circ \frage{S}{\alpha_0}{}$ ;
$(ii)$   $\alpha_1$ can be written as 
$\finiteprefixt{k}{k+1} \circ 
\frage{R_1}{\alpha_1}{x_1, y_1}  \circ \frage{S}{\alpha_1}{}$; and  
$(iii)$ $a_{k+1}$ in $\alpha_1$  occurs at  $r_1$. 

\end{lemma}

\begin{proof}
Now we describe the construction of a sequence  of finite executions of $\mathcal{A}$, $\{\gamma_k\}_{k=0}^{\infty}$  such that each $\gamma_k$ contains $W$ and $R_1$.
Consider an execution $\alpha$ of $\mathcal{A}$ that contains $W$. 
Suppose $R_1$ is invoked at $r_1$ after the execution fragment 
$\finiteprefixt{k}{k+1}$,  a
prefix of $\alpha$. Allowed by network asynchrony, let $\INV{R}$ be followed by only internal and external actions at $r_1$ until both
 $\send{m_x^{r_1}}{ r_1, s_x}$ and $\send{m_y^{r_1}}{ r_1, s_y}$ occur, thereby creating an execution fragment of the form
$\finiteprefixt{k}{k+1}\circ I_1(\alpha)$. We denote $\finiteprefixt{k}{k+1}$ by $\finiteprefixA{k}{k+1}$.

Next, consider the network delivers the message $m_x^{r_1}$ at $s_x$, and delays 
all actions at other automata and also any  input action at $s_x$ until $s_x$ sends $x$ to $r_1$. Therefore, we achieve the execution fragment   
$\finiteprefixA{k}{k+1}\circ \frage{ I_{1,x}}{\alpha}{} \circ \frage{F_{1,x}}{\alpha}{}$  of  ${\mathcal A}$. 
Next, the network delivers $m_y^{r_1}$ at $s_y$ and delays all actions at other automata and input actions at $s_y$ until $s_y$ sends $y$ to $r_1$. 
Then the network delivers $x$ and $y$ at $r_1$ but it delays  actions at other automata and any other input action at $r_1$ until $\RESP{R_1}$ occurs. Now we have an execution fragment of $\mathcal{A}$, which can be written as 
 $\finiteprefixA{k}{k+1}\circ \frage{I_1}{\alpha}{} \circ \frage{F_{1,x}}{\alpha}{x} \circ \frage{F_{1,y}}{\alpha}{y} \circ \frage{E_1}{\alpha}{x,y}$, where $R_1$ responds with $(x, y)$ such that $(x, y) \in \{ (x_0, y_0), (x_1, y_1) \}$.  We denote this finite execution prefix as $\gamma_k$. Therefore, there exists a sequence of such finite executions  
 $\{\gamma_k\}_{k=0}^{\infty}$.

Because $R_1$ precedes  $W$, by the S property $R_1$ must respond with $(x_0, y_0)$ in $\gamma_0$. If $k$ is large enough such that $a_{k}$ occurs in $\alpha$ after the completion of $W$ then by  the S property,  $R_1$ must return  $(x_1, y_1)$ in 
$\gamma_{k+1}$.  Therefore, there exists a minimum $k$  where in $\gamma_k$
\rot{} $R_1$ returns  $(x_0, y_0)$ and 
 in $\gamma_{k+1}$,
$R_1$ returns  $(x_1, y_1)$. We denote this minimum by $k^*$.
Note that $\gamma_{k^*}$ corresponds to $\alpha_0$ and $\gamma_{k^*+1}$ corresponds to $\alpha_1$ in $(i)$ and $(ii)$ respectively. 

Now, we prove case $(iii)$ by eliminating the possibility of  $a_{k^*+1}$ occurring at $s_x$, $s_y$, $w$ or $r_2$.
%
The S property requires that $R_1$ must retrieve the same version from both $s_x$ and $s_y$, which implies that $s_x$ and  $s_y$ must send values of the same version.
%
%
Observe that $R_1$ returns the $0^{\textit{th}}$ version in $\alpha_0$ and the $1^{\textit{st}}$ version in $\alpha_1$,
 while the prefixes $\fragt{P_{k^*}}{\alpha_0}{}$ and 
 $\fragt{P_{k^*+1}}{\alpha_1}{}$ differ by a single action $a_{k^*+1}$. Importantly, just one action at any of  $s_x$, $s_y$, $r_2$
or $w$ is not enough for $s_x$ and $s_y$ to coordinate the same version to send. Therefore, $a_{k^*+1}$ must occur at $r_1$, which can 
possibly help coordinate by sending some information via $m_x$ and $m_y$ sent to $s_x$ and $s_y$ respectively.

\emph{ \underline{Case $a_{k^*+1}$ occurs at $s_x$}:} Consider the  prefix of execution $\alpha_0$ up to $a_{k^*}$.  Suppose the network invokes 
$R_1$ immediately after action $a_{k^*}$ via $\INV{R_1}$. By Lemma~\ref{lem:read_format} there exists  
an execution $\alpha'$ that contains an  execution fragment of the form
$\fragt{P_{k^*}}{\alpha'}{}
  \circ\frage{I_1}{\alpha'}{} \circ  \frage{F_{1,x}}{\alpha'}{x} 
  \circ\frage{F_{1,y}}{\alpha'}{y}
  \circ \frage{E}{\alpha'}{x, y}$.
Then, $\frage{I_1}{\alpha_1}{} \stackrel{r_1}{\sim} \frage{I_1}{\alpha'}{}$ and
  $\frage{F_{1,y}}{\alpha_1}{} \stackrel{s_y}{\sim} \frage{F_{1,y}}{\alpha'}{}$
  because in both executions the actions of $\frage{I_1}{}{}$ occur entirely at $r_1$
  and those of $\frage{F_{1,y}}{}{}$ occur entirely at $s_y$,
  and thus they are unaffected by the addition of the single action $a_{k^*+1}$ at $s_x$.  
As a result,  $\frage{F_{1,y}}{\alpha'}{}$ must send the same value $y_1$ to $r_1$ as 
 in  $\frage{F_{1,y}}{\alpha_1}{}$.
%
Then in $\alpha'$, $R_1(\alpha')$ returns $y_1$ by Lemma~\ref{lem:exec3_equiv}, and thus $R_1(\alpha')$ returns $(x_1, y_1)$ by the 
S property. However, this contradicts the 
fact that in $\gamma_{k^*}$  $R_1$ responds with $(x_0, y_0)$.

\emph{ \underline{Case $a_{k^*+1}$ occurs at $s_y$}:} A  contradiction can be shown by following a line of reasoning similar to the  preceding case.

\emph{\underline{Case $a_{k^*+1}$ occurs at $w$}:}
This can be argued in a similar manner as the previous case with the trivial fact that $\frage{F_{1,x}}{\alpha_1}{} \stackrel{s_x}{\sim} \frage{F_{1,x}}{\alpha'}{}$ and $\frage{F_{1,y}}{\alpha_1}{} \stackrel{s_y}{\sim} \frage{F_{1,y}}{\alpha'}{}$.

\emph{\underline{Case $a_{k^*+1}$ occurs at $r_2$}:} 
A contradiction can be derived using a line of  reasoning as in the previous case.

So we conclude that $a_{k^*+1}$ must occur at $r_1$ in $\alpha_1$.
\end{proof}

In the remainder of the section, we suppress the explicit reference to the execution. For instance, we use $\fragt{I_i}{\alpha}{}$, $\fragt{F_{i,x}}{\alpha}{x}$, 
$\fragt{F_{i,y}}{\alpha}{y}$, $\fragt{E_i}{\alpha}{x, y}$ and $ \fragt{S}{\alpha}{}$, where we drop $\alpha$, 
instead of $\frage{I_i}{\alpha}{}$, $\frage{F_{i,x}}{\alpha}{x}$, 
$\frage{F_{i,y}}{\alpha}{y}$, $\frage{E_i}{\alpha}{x, y} $ and $ \frage{S}{\alpha}{}$.
If a 
\rot{} $R_i$ has an  execution fragment of the form 
$\fragt{I_i}{\alpha}{} \circ \fragt{F_{i,x}}{\alpha}{x} \circ 
\fragt{F_{i,y}}{\alpha}{y} \circ \fragt{E_i}{\alpha}{x, y} $
we denote it  as  $\fragt{R_i}{}{x, y}$. 
In the rest of the section, $\alpha_0$, $\alpha_1$, and the value of $k$ are the same as in the discussion above. We denote  the execution fragments  $\finiteprefixt{k}{k}$ and $\finiteprefixt{k}{k+1}$ as 
$\finiteprefixA{k}{k}$ and $\finiteprefixA{k}{k+1}$ respectively.
%
Our proof proceeds by stating a sequence set of lemmas 
(Fig.~\ref{fig:executions3}). The first lemma states there exists an execution in which two consecutive \rots{} follow a \wot{}, and both \rots{}
return the new values by the \wot{}.

\begin{lemma}[Existence of $\alpha_2$] \label{lem:exec3_alpha2} 
\sloppy There exists an execution $\alpha_2$  of $\mathcal{A}$ that contains 
$W$, $R_1$, and $R_2$, and can be written in the form 
$\finiteprefixA{k}{k+1} \circ 
\frage{R_1}{}{x_1, y_1}
 \circ
 \frage{R_2}{}{x_1, y_1}
\circ \fragt{S}{\alpha_2}{}$, where both $R_1$ and $R_2$ return $(x_1, y_1)$.
\end{lemma}


\begin{proof}
We can construct an execution  $\alpha_2$ of $\mathcal{A}$ as follows. Consider the prefix 
$\finiteprefixt{k}{k+1} \circ 
\frage{R_1}{\alpha_1}{x_1, y_1}$
of the execution $\alpha_1$, from Lemma~\ref{lem:exec3_alpha1}. At the end of this prefix, the network invokes $R_2$. Now, by 
Lemma~\ref{lem:read_format}, due to $\INV{R_2}$
there is an extension of the prefix of the form 
$\finiteprefixt{k}{k+1} \circ  \frage{R_1}{\alpha_1}{x_1, y_1} \circ 
\frage{I}{\alpha}{} \circ \frage{F_{1}}{\alpha}{x} \circ \frage{F_{2}}{\alpha}{y} \circ \frage{E}{\alpha}{x, y}$. 
By the S property, we have $x = x_1$ and $y= y_1$. Therefore, $\alpha_2$ (Fig.~\ref{fig:executions3}) can be written in the form
$\finiteprefixA{k}{k+1} \circ 
\frage{R_1}{}{x_1, y_1}
 \circ
 \frage{R_2}{}{x_1, y_1}
\circ \fragt{S}{\alpha_2}{}$, where $\fragt{S}{\alpha_2}{}$ is the rest of the execution.
\end{proof}

Based on the previous execution, the following lemma proves that there is an execution of 
$\mathcal{A}$ where $I_2$ occurs earlier than the action $a_{k+1}$ and the invocation of  $R_1$.
\begin{lemma}[Existence of $\alpha_3$] \label{lem:exec3_alpha3} 
\sloppy There exists  execution $\alpha_3$  of $\mathcal{A}$ that contains transactions $W$, $R_1$ and $R_2$, and can be written in the form  
$\finiteprefixA{k-1}{k} \circ  \fragt{I_2}{\alpha_3}{} \circ  a_{k+1} \circ 
\frage{R_1}{}{x_1, y_1}
\circ \fragn{F_{2,x}}{\alpha_3}{x_1} \circ \fragn{F_{2,y}}{\alpha_3}{y_1} \circ \fragn{E_2}{\alpha_3}{x_1, y_1}
\circ \fragt{S}{\alpha_3}{}$,  where both $R_1$ and $R_2$ return $(x_1, y_1)$.
\end{lemma}


\begin{proof}
Consider the execution $\alpha_2$ as in  Lemma~\ref{lem:exec3_alpha2}. 
In the execution fragment $ \fragt{I_1}{\alpha_2}{} \circ  \fragt{F_{1,x}}{\alpha_2}{x_1} \circ \fragt{F_{1,y}}{\alpha_2}{y_1} 
\circ \fragt{E_1}{\alpha_2}{x_1, y_1}$  in $\alpha_2$,  none of the actions occur at $r_2$ and  by Lemma~\ref{lem:exec3_alpha1}, $a_{k+1}$ occurs at $r_1$, 
also the actions in $\fragt{I_2}{}{}$ occur only at $r_2$.
Starting with $\alpha_2$, and by repeatedly using Lemma~\ref{lem:exec3_commute}, we create a sequence of four executions of $\mathcal{A}$ by repeatedly swapping 
 $\fragt{I_2}{\alpha_2}{}$ with the execution fragments 
$\fragt{E_1}{\alpha_2}{x_1, y_1}$, $\fragt{F_{1,y}}{\alpha_2}{y_1}$, $ \fragt{F_{1,x}}{\alpha_2}{x_1}$
and $\fragt{I_1}{\alpha_2}{}$, which appears in   $ \fragt{I_1}{\alpha_2}{} \circ  \fragt{F_{1,x}}{\alpha_2}{x_1} \circ \fragt{F_{1,y}}{\alpha_2}{y_1} 
\circ \fragt{E_1}{\alpha_2}{x_1, y_1}
\circ \fragt{I_2}{\alpha_2}{}$, 
where the following sequence of execution fragments
 $ \fragt{I_1}{\alpha_2}{} \circ  \fragt{F_{1,x}}{\alpha_2}{x_1} \circ \fragt{F_{1,y}}{\alpha_2}{y_1}  \circ \fragt{I_2}{\alpha_2}{} \circ \fragt{E_1}{\alpha_2}{x_1, y_1}$ (by commuting  $\fragt{I_2}{\alpha_2}{}$  and $\fragt{E_1}{\alpha_2}{x_1, y_1}$); ~
 $ \fragt{I_1}{\alpha_2}{} \circ  \fragt{F_{1,x}}{\alpha_2}{x_1} \circ \fragt{I_2}{\alpha_2}{}  \circ \fragt{F_{1,y}}{\alpha_2}{y_1} \circ \fragt{E_1}{\alpha_2}{x_1, y_1}$ (by commuting $\fragt{I_2}{\alpha_2}{}$  and $\fragt{F_{1,y}}{\alpha_2}{y_1}$); ~
 $ \fragt{I_1}{\alpha_2}{} \circ  \fragt{I_2}{\alpha_2}{}  \circ \fragt{F_{1,x}}{\alpha_2}{x_1} \circ \fragt{F_{1,y}}{\alpha_2}{y_1} \circ \fragt{E_1}{\alpha_2}{x_1, y_1}$ (by commuting   $\fragt{I_2}{\alpha_2}{}$  and $\fragt{F_{1,x}}{\alpha_2}{x_1}$)
appear. Finally, we have an  execution $\alpha'$ of the form 
$ \finiteprefixA{k}{k+1} \circ  \fragt{I_2}{\alpha''}{} \circ 
\frage{R_1}{}{x_1, y_1}
\circ \fragt{F_{2,x}}{\alpha'}{x_1} \circ \fragt{F_{2,y}}{\alpha'}{y_1} \circ \fragt{E_2}{\alpha_3'}{x_1, y_1}
\circ \fragt{S}{\alpha'}{}$ (by commuting $\fragt{I_2}{\alpha_2}{}$ and $\fragt{I_1}{\alpha_2}{}$)
Next, from $\alpha'$, by using Lemma~\ref{lem:exec3_commute}  and swapping $a_{k+1}$ with $\fragt{I_2}{\alpha_2}{}$ 
we have shown the existence  of  an execution 
 $\alpha_3$.
\end{proof}

In the following lemma, we show that we can create an execution $\alpha_4$ of $\mathcal{A}$, where 
$\fragt{F_{2,y}}{\alpha_4}{}$ occurs immediately before  $ \fragt{E_1}{\alpha_4}{x_1, y_1}$, while $\fragt{R_1}{\alpha_4}{}$ and 
$\fragt{R_2}{\alpha_4}{}$ both return 
$(x_1, y_1)$.
\begin{lemma}[Existence of $\alpha_4$] \label{lem:exec3_alpha4} 
\sloppy There exists  execution $\alpha_4$  of $\mathcal{A}$ that contains transactions $W$, $R_1$ and $R_2$ and  can be written in the form
$ \finiteprefixA{k-1}{k} \circ \fragt{I_2}{\alpha_4}{}  \circ a_{k+1} $$ \circ \fragt{I_1}{\alpha_4}{}
\circ \fragn{F_{1,x}}{\alpha_4}{x_1} \circ   \fragn{F_{1,y}}{\alpha_4}{y_1} 
\circ \fragn{F_{2,y}}{\alpha_4}{y_1} 
\circ \fragn{E_1}{\alpha_4}{x_1, y_1}
$$\circ \fragn{F_{2,x}}{\alpha_4}{x_1} \circ  \fragn{E_2}{\alpha_4}{x_1, y_1}\circ \fragt{S}{\alpha_4}{}$, where both $R_1$ and $R_2$ return $(x_1, y_1)$.
\end{lemma}


\begin{proof}
We start with an execution  $\alpha_3$, as in  Lemma~\ref{lem:exec3_alpha3}, and apply Lemma~\ref{lem:exec3_commute} twice.  

First, by  Lemma~\ref{lem:exec3_commute}, we know there exists an execution $\alpha'$ of $\mathcal{A}$
where  $\fragt{F_{2,x}}{\alpha_3}{}$ (identify as $G_1$) and 
$\fragt{F_{2,y}}{\alpha_3}{}$ (identify as $G_2$)  are interchanged since actions of 
$\fragt{F_{2,x}}{\alpha_3}{}$ occurs solely at $s_x$ and those of $\fragt{F_{2,y}}{\alpha_3}{}$ at $s_y$, 
and  $\fragt{F_{2,x}}{\alpha_3}{}$ and $\fragt{F_{2,y}}{\alpha_3}{}$ return $x_1$ and $y_1$, respectively, to $r_2$.

Next, by Lemma~\ref{lem:exec3_commute} there is 
execution  of $\mathcal{A}$,  say $\alpha_4$ where
the fragments 
$\fragt{E_1}{\alpha_3}{}$ (identify as $G_1$)  and 
$\fragt{F_{2,y}}{\alpha_3}{}$ (identify as $G_2$) are interchanged, with respect to $\alpha'$, because  the actions in  $\fragt{E_1}{\alpha'}{}$ occur at $r_1$ and 
those of $\fragt{F_{2,y}}{\alpha_3}{}$ at $s_y$. Furthermore, $\alpha_4$
  can be written in the form
 $  \finiteprefixA{k-1}{k}  \circ  \fragt{I_2}{\alpha''}{} \circ  a_{k+1}  \circ $$
  \fragt{I_1}{\alpha''}{} \circ \fragt{F_{1,x}}{\alpha''}{x_1}
   \circ \fragt{F_{1,y}}{\alpha''}{y_1} 
  \circ \fragt{F_{2,y}}{\alpha''}{y_1} 
\circ \fragt{E_1}{\alpha''}{x_1, y_1}
\circ \fragt{F_{2,x}}{\alpha''}{x_1} \circ \fragt{E_2}{\alpha''}{x_1, y_1}
\circ \fragt{S}{\alpha''}{}$.
\end{proof}

 Next, we create an execution  $\alpha_5$ where 
    $\fragt{F_{2,y}}{\alpha_5}{}$ occurs before  $ \fragt{F_{1,y}}{\alpha_5}{}$. 

\begin{lemma}[Existence of $\alpha_5$] \label{lem:exec3_alpha4b} 
\sloppy There exists  execution $\alpha_5$  of $\mathcal{A}$ that contains transactions $W$, $R_1$ and $R_2$ and  can be written in the form
$ \finiteprefixA{k-1}{k} \circ \fragt{I_2}{\alpha_5}{}  \circ a_{k+1} $$ \circ \fragt{I_1}{\alpha_5}{}
\circ \fragn{F_{1,x}}{\alpha_5}{x_1} \circ \fragn{F_{2,y}}{\alpha_5}{y_1} \circ  \fragn{F_{1,y}}{\alpha_5}{y_1} \circ \fragn{E_1}{\alpha_5}{x_1, y_1}
$$\circ \fragn{F_{2,x}}{\alpha_5}{x_1} \circ  \fragn{E_2}{\alpha_5}{x_1, y_1}\circ \fragt{S}{\alpha_5}{}$, 
where both $R_1$ and $R_2$ return $(x_1, y_1)$.
\end{lemma}

\begin{proof}
Given all actions  in $ \fragt{F_{1,y}}{\alpha_4}{} $ and 
  $\fragt{F_{2,y}}{\alpha_4}{}$ occur at $s_y$ in $\alpha_4$, consider the prefix of $\alpha_4$ that ends with $\fragn{F_{1,x}}{\alpha''}{x_1}$. We extend this prefix as follows.
In this prefix,  the actions $send(m_y^{r_2})_{r_2, s_y}$ and 
 $send(m_y^{r_1})_{r_1, s_y}$ do not have their corresponding $recv$ actions.  Suppose  the network 
delivers $m_y^{r_2}$ at $s_y$ (via the action $recv(m_y^{r_2})_{r_2, s_y}$) and delays all actions, other than internal and output actions at $s_y$, until $s_y$ responds with $y$, via  action $send(y)_{s_y, r_2}$. This extended execution fragment is of the form $\fragt{F_{2,y}}{\alpha_4}{}$. Similarly, the network further extends the execution by placing  the 
action $recv(m_y^{r_1})_{r_1, s_y}$ at $s_y$  and creates the execution fragment of the form $\fragt{F_{1,y}}{\alpha_4}{}$. Note that, so far, the actions due to the above extensions are entirely at $s_y$. Suppose the network makes the execution 
fragments $\fragt{E_1}{\alpha''}{}$ happen next by delivering values sent during $\fragt{F_{1,x}}{\alpha_4}{}$ and $\fragt{F_{1,y}}{\alpha_4}{}$ via the actions  $recv(x)_{s_x, r_1}$ and $recv(y)_{s_y, r_1}$ respectively at $r_1$. Then $\fragt{F_{2,x}}{\alpha''}{}$ occurs next, such that this fragment contains exactly the same sequence of actions as  in the corresponding execution fragment in  $\alpha_4$.
This is possible because they are not influenced by  any output action in $\fragt{F_{2,y}}{\alpha_4}{}$  or $\fragt{F_{1,y}}{\alpha_4}{}$. 
Suppose the network places the execution fragment  $\fragt{E_2}{\alpha''}{}$ next.
Let us denote the execution that is an extension of this finite execution so far as $\alpha_5$, which is of the form $ \finiteprefixA{k-1}{k} \circ \fragt{I_2}{\alpha_5}{}  \circ a_{k+1} $$ \circ \fragt{I_1}{\alpha_5}{}
\circ \fragt{F_{1,x}}{\alpha_5}{} \circ \fragt{F_{2,y}}{\alpha_5}{} \circ  \fragt{F_{1,y}}{\alpha_5}{} \circ \fragt{E_1}{\alpha_5}{}
$$\circ \fragt{F_{2,x}}{\alpha_5}{} \circ  \fragt{E_2}{\alpha_5}{}\circ \fragt{S}{\alpha_5}{}$.
 Now we need to argue about the values returned by the reads.
 
Note that both $\alpha_4$ and $\alpha_5$ have the same execution fragment  $\frage{F_{1,x}}{\alpha_4}{}$. Therefore,   $\frage{F_{1,x}}{\alpha_4}{} \stackrel{s_x}{\sim} \frage{F_{1,x}}{\alpha_5}{}$, and thus $s_x$ also returns $x_1$ in $\fragt{F_{1,x}}{\alpha_4}{}$ in $\alpha_5$. Next  by Lemma~\ref{lem:exec3_equiv} for $R_1$,  $s_y$ returns $y_1$ in $\fragt{F_{1,y}}{\alpha_5}{}$  and  hence by the S property, $R_1(\alpha_5)$ returns $(x_1, y_1)$, i.e.,   that $r_1$ returns the new version of object values. Therefore, $\frage{F_{1,x}}{\alpha_4}{}$,  $\fragt{F_{1,y}}{\alpha_5}{}$ and $\fragt{E_1}{\alpha_5}{}$ are of the form $ \fragt{F_{1,x}}{\alpha_5}{x_1}$, $\fragt{F_{1,y}}{\alpha_5}{y_1}$ and  $\fragt{E_1}{\alpha_5}{x_1, y_1}$, respectively.

Note that by construction of $\alpha_5$ above,  the execution fragment  $\fragt{F_{2,x}}{\alpha_4}{}$ in both $\alpha_4$ and $\alpha_5$ is the same, therefore,   $\frage{F_{2,x}}{\alpha_4}{} \stackrel{s_x}{\sim} \frage{F_{2,x}}{\alpha_5}{}$. Hence as in 
$\alpha_4$,  $s_x$  returns $x_1$  in the execution fragment  $\frage{F_{2, 1}}{\alpha_5}{}$  in   $\alpha_5$,  i.e.,  of the form $\frage{F_{2, 1}}{\alpha_5}{x_1}$.
Since $s_x$ returns $x_1$ in $\fragt{F_{2, 1}}{\alpha_5}{}$ in $\alpha_5$,  by   
Lemma~\ref{lem:exec3_equiv}  and   the S property,
  $R_2$ returns $(x_1, y_1)$  and hence $\fragt{E_2}{\alpha_5}{}$ is of the form $\fragt{E_2}{\alpha_5}{x_1, y_1}$.

From the above argument we know that  $\alpha_5$ is of the form $ \finiteprefixA{k-1}{k} \circ \fragt{I_2}{\alpha_5}{}  \circ a_{k+1} $$ \circ \fragt{I_1}{\alpha_5}{}
\circ \fragt{F_{1,x}}{\alpha_5}{x_1} \circ \fragt{F_{2,y}}{\alpha_5}{y_1} \circ  \fragt{F_{1,y}}{\alpha_5}{y_1} \circ \fragt{E_1}{\alpha_5}{x_1, y_1}
$$\circ \fragt{F_{2,x}}{\alpha_5}{x_1} \circ  \fragt{E_2}{\alpha_5}{x_1, y_1}\circ \fragt{S}{\alpha_5}{}$.
\end{proof}

In the next lemma, we show the existence of an execution of $\mathcal{A}$ where $R_1$ returns $(x_0, y_0)$  and $I_2$ occurs immediately after $a_{k}$ and 
 $R_2$ responds with $(x_1, y_1)$. 
\begin{lemma}[Existence of $\alpha_6$] \label{lem:exec3_alpha5} 
\sloppy There exists  execution $\alpha_6$  of $\mathcal{A}$ that contains transactions $W$, $R_1$ and $R_2$ and can be
written in the form 
$ \finiteprefixA{k-1}{k} \circ\fragt{I_2}{\alpha_6}{} \circ  \fragt{I_1}{\alpha_6}{} \circ\fragn{F_{1,x}}{\alpha_6}{x_0}
\circ\fragn{F_{2,y}}{\alpha_6}{y_1}\circ \fragn{F_{1,y}}{\alpha_6}{y_0} \circ\fragn{E_1}{\alpha_6}{x_0, y_0}
\circ\fragn{F_{2,x}}{\alpha_6}{x_1} \circ \fragn{E_2}{\alpha_6}{x_1, y_1}\circ \fragt{S}{\alpha_6}{}$, 
where  $R_1$ returns $(x_0, y_0)$  and $R_2$ returns $(x_1, y_1)$.
\end{lemma}

\begin{proof}
The crucial part of this proof is to carefully use the result of  Lemma~\ref{lem:exec3_alpha1} so that $R_1$ returns $(x_0, y_0)$, instead of $(x_1, y_1)$.
Note that the same prefix $\finiteprefixA{k-1}{k}$ appears in $\alpha_5$ of Lemma~\ref{lem:exec3_alpha4b} as well as in $\alpha_0$ and $\alpha_1$ of Lemma~\ref{lem:exec3_alpha1}, where
 $k$ is defined as in Lemma~\ref{lem:exec3_alpha1}.

By Lemma~\ref{lem:exec3_alpha1}, action $a_{k+1}$ occurs at $r_1$.
 In the execution fragment 
 $a_{k+1} $$ \circ \fragt{I_1}{\alpha_5}{}
\circ \fragt{F_{1,x}}{\alpha_5}{x_1} \circ \fragt{F_{2,y}}{\alpha_5}{y_1}$ of $\alpha_5$, the actions in
 $a_{k+1} $$ \circ \fragt{I_1}{\alpha_4}{}$ occur at $r_1$; actions in 
$\fragt{F_{1,x}}{\alpha_4}{x_1}$ occur at $s_x$; and  actions in $\fragt{F_{2,y}}{\alpha_4}{y_1}$ occur at $s_y$. 
Now consider the  prefix of execution $\alpha_4$ ending with $I_2$ and the network invokes $R_1$ immediately after $I_2$  (instead of 
after $a_{k+1}$) and  extends it by the execution fragment  $ \fragt{I_1}{\epsilon}{} \circ \fragt{F_{1,x}}{\epsilon}{} \circ  \fragt{F_{2,y}}{\epsilon}{}$ to create a new finite execution $\epsilon$, which is of the form $\finiteprefixA{k-1}{k}  \circ \fragt{I_2}{\epsilon}{} \circ \fragt{I_1}{\epsilon}{} \circ \fragt{F_{1,x}}{\epsilon}{} \circ  \fragt{F_{2,y}}{\epsilon}{}$. As a result, $a_{k+1}$ may not be in $\epsilon$ because we introduce changes before $a_{k+1}$ occurs.

Note  that if  in the prefix
 $\finiteprefixA{k-1}{k}  \circ \frage{I_2}{\epsilon}{} \circ \frage{I_1}{\epsilon}{} \circ \frage{F_{1,x}}{\epsilon}{} \circ  \frage{F_{2,y}}{\epsilon}{}$  of $\epsilon$ 
we ignore the actions in $\frage{I_2}{\epsilon}{}$
 then the remaining execution is the same as the prefix  
 $\finiteprefixA{k-1}{k}  \circ  \frage{I_1}{\alpha_0}{} \circ \frage{F_{1,x}}{\alpha_0}{} \circ  \frage{F_{2,y}}{\alpha_0}{}$ of $\alpha_0$ in 
Lemma~\ref{lem:exec3_alpha1}. Here we explicitly use the notations $\epsilon$ and $\alpha_0$ to avoid confusion. Since the actions in 
$\frage{I_2}{\epsilon}{}$ have no impact on the actions in 
 $\frage{I_1}{\epsilon}{} \circ \frage{F_{1,x}}{\epsilon}{} \circ  \frage{F_{2,y}}{\epsilon}{}$,
we have $\frage{F_{1,x}}{\epsilon}{} \stackrel{s_x}{\sim} \frage{F_{1,x}}{\alpha_0}{}$. Therefore, by Lemma~\ref{lem:exec3_consistent} 
  $\frage{F_{1,x}}{\epsilon}{}$ returns $x_0$ as in  $\frage{F_{1,x}}{\alpha_0}{}$, i.e., $s_x$ returns $x_0$ in $\fragn{F_{1,x}}{\epsilon}{x_0}$. 
Now by Lemma~\ref{lem:exec3_equiv} we  conclude that 
for  any  extension of $\epsilon$, say $\gamma$, \rot{} 
 $R_1(\gamma)$ returns $x_0$ at $s_x$ and by the S property  $R_1(\gamma)$ returns  $(x_0, y_0)$. Also, since 
$\frage{F_{2,y}}{\alpha_5}{} \stackrel{s_y}{\sim} \frage{F_{2,y}}{\epsilon}{} \stackrel{s_y}{\sim} \frage{F_{2,y}}{\gamma}{}$  by Lemma~\ref{lem:exec3_equiv} and the S property,  $R_2(\gamma)$ must return $(x_1, y_1)$.  Therefore, $\gamma$ has an extension of $\alpha_6$ (Fig.~\ref{fig:executions3}) which is of  the form $ \finiteprefixA{k-1}{k}  \circ\fragt{I_2}{\alpha_6}{} \circ  \fragt{I_1}{\alpha_6}{} \circ\fragt{F_{1,x}}{\alpha_6}{x_0}
\circ\fragt{F_{2,y}}{\alpha_6}{y_1}\circ \fragt{F_{1,y}}{\alpha_6}{y_0} \circ\fragt{E_1}{\alpha_6}{x_0, y_0}
\circ\fragt{F_{2,x}}{\alpha_6}{x_1} \circ \fragt{E_2}{\alpha_6}{x_1, y_1}\circ \fragt{S}{\alpha_6}{}$ as in the statement of the lemma.
\end{proof}

The  following lemma shows that there exists  an execution $\alpha_7$ for $\mathcal{A}$ where $\fragt{F_{2,x}}{\alpha_7}{}$ 
appears  before $\fragn{F_{1,y}}{\alpha_7}{y_0} \circ\fragn{E_1}{\alpha_7}{x_0, y_0}$, where $R_1$ returns $(x_0, y_0)$ and $R_2$ returns $(x_1, y_1)$.  The lemma can be proven by  starting from $\alpha_6$ in Lemma~\ref{lem:exec3_alpha5} and  moving the execution fragments  of $R_2$ earlier, a little at a time, until finally we have $R_2$ finishing before $R_1$ starts.  This simply uses commutativity since the actions in the swapped execution fragments occur at different automata.


\begin{lemma}[Existence of $\alpha_7$] \label{lem:exec3_alpha6} 
\sloppy There exists  execution $\alpha_7$  of $\mathcal{A}$ that contains transactions $W$, $R_1$ and $R_2$, and   can be written in the form
$\finiteprefixA{k-1}{k} \circ \fragt{I_2}{\alpha_7}{}  \circ \fragt{I_1}{\alpha_7}{} \circ \fragn{F_{1,x}}{\alpha_7}{x_0} 
\circ \fragn{F_{2,y}}{\alpha_7}{y_1}\circ \fragn{F_{2,x}}{\alpha_7}{x_1} 
 \circ \fragn{F_{1,y}}{\alpha_7}{y_0} \circ\fragn{E_1}{\alpha_7}{x_0, y_0}\circ
 \fragn{E_2}{\alpha_7}{x_1, y_1}\circ \fragt{S}{\alpha_7}{}$ where $R_1$ returns $(x_0, y_0)$ and $R_2$ returns $(x_1, y_1)$.
\end{lemma}

\begin{proof}
\sloppy This result is proved by applying the result of Lemma~\ref{lem:exec3_commute}  to the execution created in Lemma~\ref{lem:exec3_alpha5}. 
Suppose,  $\alpha_6$ (Fig.~\ref{fig:executions3}) is a  execution as  in Lemma~\ref{lem:exec3_alpha5}, 
where in  the   execution fragment 
$\fragt{E_1}{\alpha_6}{x_0, y_0}\circ\fragt{F_{2,x}}{\alpha_6}{}$  we identify 
 $\fragt{E_1}{\alpha_6}{x_0, y_0}$  as $G_1$ and 
$\fragt{F_{2,x}}{\alpha_6}{y_0}$  as $G_2$. The actions of $G_1$ and $G_2$ occur at two distinct automata, therefore,  we  can use the result of Lemma~\ref{lem:exec3_commute}, to argue that there exists an execution $\alpha'$ of $\mathcal{A}$  that  contains the  execution fragment 
$\fragt{F_{2,x}}{\alpha_7}{}  \circ\fragt{E_1}{\alpha_7}{x_0, y_0}$, and $\alpha_6$ and $\alpha'$ are identical in the prefixes and suffixes corresponding to $G_1$ and $G_2$.

 Now,  $\alpha'$ contains  $\fragt{F_{1,y}}{\alpha''}{} \circ \fragt{F_{2,x}}{\alpha'}{}$, where  
  the actions in  $\fragt{F_{1,y}}{\alpha''}{}$ (identified as $G_1$) and $\fragt{F_{2,x}}{\alpha''}{}$ (identify as $G_2$)  occur  at distinct automata. Hence, by  Lemma~\ref{lem:exec3_commute} there exists  an execution $\alpha_7$ of the form
$\finiteprefixA{k-1}{k}  \circ \fragt{I_2}{\alpha_7}{}  \circ \fragt{I_1}{\alpha_7}{} \circ \fragt{F_{1,x}}{\alpha_7}{x_0} 
\circ \fragt{F_{2,y}}{\alpha_7}{y_1}\circ \fragt{F_{2,x}}{\alpha_7}{x_1} 
 \circ \fragt{F_{1,y}}{\alpha_7}{y_0} \circ\fragt{E_1}{\alpha_7}{x_0, y_0}\circ
 \fragt{E_2}{\alpha_7}{x_1, y_1}\circ \fragt{S}{\alpha_7}{}$.
\end{proof}

The following lemma leverages Lemma~\ref{lem:exec3_commute} to show the existence of an execution $\alpha_8$ of $\mathcal{A}$ where $\fragt{F_{2,y}}{\alpha_8}{}$ appears before $\fragt{I_1}{\alpha_8}{} \circ \fragt{F_{1,x}}{\alpha_8}{}$, and $R_1$ returns $(x_0, y_0)$ while $R_2$ returns $(x_1, y_1)$.

\begin{lemma} [Existence of $\alpha_8$]  \label{lem:exec3_alpha7} 
\sloppy There exists  execution $\alpha_8$  of $\mathcal{A}$ that contains transactions $W$, $R_1$ and $R_2$
and   can be written in the form 
$\finiteprefixA{k-1}{k}  \circ  \fragt{I_2}{\alpha_8}{} \circ \fragn{F_{2,y}}{\alpha_8}{y_1} \circ 
$$\fragt{I_1}{\alpha_8}{} \circ \fragn{F_{1,x}}{\alpha_8}{x_0} 
\circ  \fragn{F_{2,x}}{\alpha_8}{x_1} 
$$ \circ \fragn{F_{1,y}}{\alpha_8}{y_0} \circ \fragn{E_1}{\alpha_8}{x_0, y_0}
\circ \fragn{E_2}{\alpha_8}{x_1, y_1}\circ \fragt{S}{\alpha_8}{}$, where $R_1$ returns $(x_0, y_0)$ and $R_2$ returns $(x_1, y_1)$.
\end{lemma}

\begin{proof}
 Consider the execution $\alpha_7$ of $\mathcal{A}$ as in Lemma~\ref{lem:exec3_alpha6}. In the context of 
  of  Lemma~\ref{lem:exec3_commute}, in $\alpha_7$ (Fig.~\ref{fig:executions3}) the actions in  
 $ \fragt{F_{1,x}}{\alpha_8}{}$ (identify as $G_1$)  occur at $s_x$  and  those in $\fragt{F_{2,y}}{\alpha_8}{}$ (identify as $G_2$) at  $s_y$. Then by  
 Lemma~\ref{lem:exec3_commute} there exists an execution $\alpha'$ of $\mathcal{A}$,  of the form 
 $ \finiteprefixA{k-1}{k}  \circ \fragt{I_2}{\alpha'}{}  \circ \fragt{I_1}{\alpha''}{} 
 \circ \fragn{F_{2,y}}{\alpha'}{y_1} 
\circ \fragn{F_{1,x}}{\alpha'}{x_0} 
\circ \fragn{F_{2,x}}{\alpha'}{x_1} 
 \circ \fragn{F_{1,y}}{\alpha'}{y_0} \circ\fragn{E_1}{\alpha'}{x_0, y_0}\circ
 \fragn{E_2}{\alpha''}{x_1, y_1}\circ \fragt{S}{\alpha'}{}$, where $ \fragt{F_{2,y}}{\alpha_8}{}$ and 
  $ \fragt{F_{1,x}}{\alpha_8}{}$  are interchanged.
 
 Since  actions in  $ \fragt{F_{2,y}}{\alpha_8}{}$ (identify as $G_1$)  occur at $s_y$ and those in 
  $ \fragt{I_{1}}{\alpha_8}{}$ (identify as $G_1$)  occur at $r_1$ then by Lemma~\ref{lem:exec3_commute}
  there is a  execution of $\mathcal{A}$, $\alpha_8$ where $\fragt{F_{2,y}}{\alpha_8}{}$ appear before $\fragt{I_1}{\alpha_8}{}$, i.e., of the form  $\finiteprefixA{k-1}{k}  \circ \fragt{I_2}{\alpha_8}{} \circ \fragt{F_{2,y}}{\alpha_8}{}  \circ
 \fragt{I_1}{\alpha_7}{} \circ \fragt{F_{1,x}}{\alpha_8}{} 
\circ \fragt{F_{2,x}}{\alpha_8}{} 
 \circ \fragt{F_{1,y}}{\alpha_8}{} \circ\fragt{E_1}{\alpha_8}{} \circ
 \fragt{E_2}{\alpha_8}{}\circ \fragt{S}{\alpha_8}{}$, where $ \fragt{F_{2,y}}{\alpha_8}{}$ and 
  $ \fragt{I_{1}}{\alpha_8}{}$  are interchanged.
 
 By $(ii)$ of Lemma~\ref{lem:exec3_equiv} we have 
$\frage{F_{2,x}}{\alpha'}{} \stackrel{s_x}{\sim} \frage{F_{2,x}}{\alpha_8}{}$ hence $\fragt{F_{2,x}}{\alpha_8}{}$ sends $x_1$ and 
$\fragt{F_{1,x}}{\alpha_8}{}$ and $\fragt{F_{1,y}}{\alpha_8}{}$ sends $x_0$ and $y_0$, respectively. So considering these returned values we have $\alpha_8$ 
(Fig.~\ref{fig:executions3}) in the form as stated in the lemma.
\end{proof}


The following lemma shows the existence of an execution  $\alpha_9$, of $\mathcal{A}$, where 
$\fragt{F_{2,x}}{}{}$ appears before $\fragt{F_{1,x}}{}{}$.
\begin{lemma} [Existence of $\alpha_9$]  \label{lem:exec3_alpha8} 
\sloppy There exists  execution $\alpha_9$  of $\mathcal{A}$ that contains transactions $W$, $R_1$ and $R_2$
and   can be written in the form 
$\finiteprefixA{k-1}{k}  \circ  \fragt{I_2}{\alpha_8}{} \circ \fragt{F_{2,y}}{\alpha_8}{y_1} \circ 
$$\fragt{I_1}{\alpha_8}{} \circ \fragt{F_{2,x}}{\alpha_8}{x_1} 
\circ  \fragt{F_{1,x}}{\alpha_8}{x_0} 
$$ \circ \fragt{F_{1,y}}{\alpha_8}{y_0} \circ \fragt{E_1}{\alpha_8}{x_0, y_0}
\circ \fragt{E_2}{\alpha_8}{x_1, y_1}\circ \fragt{S}{\alpha_8}{}$
where $R_1$ returns $(x_0, y_0)$ and $R_2$ returns $(x_1, y_1)$.
\end{lemma}

\begin{proof}
In $\alpha_8$ from Lemma~\ref{lem:exec3_alpha7}, all the actions in  
 $ \fragn{I_1}{\alpha_8}{}$  occur at $r_1$; those in   $ \fragn{F_{1,x}}{\alpha_8}{x_1}$  occur at $s_x$; and  the  
actions in $ \fragn{F_{2,x}}{\alpha_8}{y_1}$ occur only  at $s_x$. 
Note that actions of  both execution fragments $ \fragt{F_{2,x}}{\alpha_8}{}$  and   
$\fragt{F_{1,x}}{\alpha_8}{}$ occur at $r_1$.  
Consider the prefix of $\alpha_8$ that ends with $\fragt{I_1}{}{}$ then suppose the network  extends this prefix by adding an execution fragment of the form
$ \fragt{F_{2,x}}{\alpha_8}{} \circ  \fragt{F_{1,x}}{\alpha_8}{}$ as follows.  
First note that  the actions $send(m_x^{r_2})_{r_2, s_x}$ and 
 $send(m_x^{r_1})_{r_1, s_x}$ appears in the prefix but do not have corresponding $recv$ actions.
The  network places action $recv(m_x^{r_2})_{r_2, s_x}$, and allows an execution fragment of the form 
$ \fragt{F_{2,x}}{\alpha_8}{}$ to appear. Now, immediately after this the network further extends it with 
an execution fragment of the form  $\fragt{F_{1,x}}{\alpha_8}{}$  by placing action $recv(m_x^{r_1})_{r_1, s_x}$. 
 Next the fragment $\fragn{F_{1,y}}{\alpha_8}{y_0}$ is added  and is the same as $ \frage{F_{1,y}}{\alpha_8}{}$.
This last step can be argued by the fact that none of the actions in  $ \fragn{F_{1,y}}{\alpha_8}{y_0}$ can be affected by 
any of the output actions at $ \fragt{F_{2,x}}{\alpha_8}{}$ and  $\fragt{F_{1,x}}{\alpha_8}{}$. 
Following this the network allows the rest of the execution by adding an execution fragment of the form $\fragn{E_1}{\alpha_8}{x_0, y_0}
\circ \fragn{E_2}{\alpha_8}{x_1, y_1}\circ \fragn{S}{\alpha_8}{}$. The resulting execution is of the form 
$\finiteprefixA{k-1}{k}  \circ  \fragt{I_2}{\alpha_8}{} \circ \fragt{F_{2,y}}{\alpha_8}{y_1} \circ 
$$\fragt{I_1}{\alpha_8}{} \circ \fragn{F_{2,x}}{\alpha_8}{x_1} 
\circ  \fragn{F_{1,x}}{\alpha_8}{x_0} 
$$ \circ \fragt{F_{1,y}}{\alpha_8}{y_0} \circ \fragn{E_1}{\alpha_8}{x_0, y_0}
\circ \fragn{E_2}{\alpha_8}{x_1, y_1}\circ \fragt{S}{\alpha_8}{}$, where we retained the values wherever it is known, and we denote this  execution by $\alpha_9$.

Now, we argue about the return values in $\alpha_9$. 
Applying Lemma~\ref{lem:exec3_equiv} to $R_2$ and $\fragn{F_{2,y}}{\alpha_8}{y_1}$ implies that $R_2$ 
returns $(x_1, y_1 )$. Similarly, 
applying Lemma~\ref{lem:exec3_equiv} to $R_1$ and $\fragn{F_{1,y}}{\alpha_8}{y_0}$ implies that $R_1$ 
must return $(x_0, y_0 )$ in $\alpha_9$.
\end{proof}


Now by constructing a sequence of executions, $\alpha_3$ through $\alpha_{10}$ (Fig.~\ref{fig:executions3}, lemmas and proofs omitted due to lack of space) realize the existence of an execution $\alpha_{10}$  of $\mathcal{A}$ where the execution fragments corresponding to $R_2$ appears 
before $R_1$, where  $R_1$ returns $(x_0, y_0)$ and $R_2$ completes by returning $(x_1, y_1)$
but $R_1$ is in real time after $R_2$, therefore, violates the S property.
\begin{lemma} [Existence of $\alpha_{10}$]  \label{lem:exec3_alpha11} 
\sloppy There exists an execution $\alpha_{10}$  of $\mathcal{A}$ that contains transactions $W$, $R_1$ and $R_2$
and   can be written in the form 
$\finiteprefixA{k-1}{k}   \circ
\frage{R_2}{}{x_1, y_1} \circ
 \frage{R_1}{}{x_0, y_0}
 \circ \fragt{S}{\alpha_{10}}{}$.
where $R_1$ returns $(x_0, y_0)$ and $R_2$ returns $(x_1, y_1)$.
\end{lemma}
\begin{proof}
Now, by applying Lemma~\ref{lem:exec3_commute} to $\alpha_{9}$, we can swap 
 $\fragt{F_{2,x}}{}{}$  and $ \fragt{I_1}{}{}$ to create an execution $\alpha_{10}$  (Fig.~\ref{fig:executions3}) of $\mathcal{A}$, which is
of the form 
$ \finiteprefixA{k-1}{k}   \circ 
\fragt{I_2}{\alpha_9}{} \circ \fragt{F_{2,y}}{\alpha_9}{y_1} 
\circ   \fragt{F_{2,x}}{\alpha_9}{x_1}  \circ 
\frage{R_1}{}{x_0, y_0} \circ \fragt{E_2}{\alpha_8}{x_1, y_1}
 \circ \fragt{S}{\alpha_9}{}$, where the returned values are determined by  Lemma~\ref{lem:exec3_consistent}.
 
Note that none of the actions in 
$\fragt{I_1}{\alpha_9}{} \circ \fragt{F_{1,x}}{\alpha_9}{x_0} 
 \circ \fragt{F_{1,y}}{\alpha_9}{y_0} \circ \fragt{E_1}{\alpha_9}{x_0, y_0}$
occur at $r_2$ and all actions in $\fragt{E_2}{\alpha_9}{x_1, y_1} $ occur at $r_2$. Therefore, by 
applying Lemma~\ref{lem:exec3_commute}, we can consecutively swap $\fragt{E_2}{}{}$ with 
$\fragt{E_1}{}{}$, $\fragt{F_{1,y}}{}{}$,  $\fragt{I_1}{}{}$,  and $\fragt{F_{1,x}}{}{}$. Therefore, we create a sequence of 
four executions of $\mathcal{A}$ to arrive at execution $\alpha_{10}$ (Fig.~\ref{fig:executions3}) of the form
$\finiteprefixA{k-1}{k}   \circ
\frage{R_2}{}{x_1, y_1}
 \circ
 \frage{R_1}{}{x_0, y_0}
 \circ \fragt{S}{\alpha_{10}}{}$.
\end{proof}

Using the above results,  we prove Theorem ~\ref{thm:snow3} for 3-clients by showing the existence of $\alpha_{10}$,  where $R_2$ completes before $R_1$ is invoked and  $R_2$ returns $(x_1, y_1)$ whereas $R_1$ returns $(x_0, y_0)$, which  violates the S property. 

\sloppy 

\setcounter{algorithm}{3}

\section{Two Client Open Question}
\label{sec:2c2s}
This section closes the open question of whether SNOW properties can be implemented with two clients. We first prove that SNOW remains impossible in a 2-client 2-server system if the clients cannot directly send messages to each other.
 However, in the presence of client to client communication, it is possible to have all SNOW properties with two clients and at least two servers.
 
 \subsection{No  SNOW Without C2C  Messages}
 \label{subsec:no_snow_no_c2c}

 In this section, we prove the following results that states it is impossible guarantee the SNOW properties in a transaction processing system with two clients, without client-to-client communication.
  \begin{theorem}\label{thm:two-snow}
  	The SNOW properties cannot be implemented in a system with two clients and two servers, where the clients do not communicate with each other.
  \end{theorem}

 We use system model with two servers $s_x$ and $s_y$ with two clients, a reader $r_1$ that issues only 
\rots{} and a writer $w$ that issues only \wots{}. A \wot{} $W$ writes $(x_1, y_1)$ to $s_x$ and $s_y$, and a \rot{} $R$ reads both servers. We assume that there is a bi-directional communication channel between any pair of client and server and any pair of servers. There is no communication channel between clients. We assume that each transaction can be identified by a unique number, e.g., transaction identifier.

Our strategy is still proof by contradiction: We assume there exists some algorithm $\mathcal{A}$ that satisfies all SNOW properties, and then we show the existence of a sequence of executions of $\mathcal{A}$, eventually leading to an execution that contradicts  the S property.
 First,  we show the existence of an execution $\alpha$ of $\mathcal{A}$ where 
$R_1$ is invoked after $W$ completes, where the  send actions $send(m_x^{r_1})_{r_1, s_x}$ and $send(m_y^{r_1})_{r_1, s_y}$  at the  $r_1$  occur consecutively 
in  $P(\alpha)$, which is a prefix of $\alpha$. 
 Then we show that $\alpha$  can be written in the form  
$\prefix{\alpha} \circ \frag{1,x}{\alpha}$ (Fig.~\ref{fig:execution1} $(a)$, Lemma~\ref{lem:exec_alpha}). 
 We then prove the existence of  another  execution $\beta$, which can be written in the form $\prefix{\beta} \circ \frag{1,x}{\beta} \circ \frag{1,y}{\beta}$ by extending $\alpha$ with an execution fragment $\frag{1,y}{\beta}$, such that $\frag{1,x}{\beta} \stackrel{s_x}{\sim} \frag{1,x}{\alpha}$
  (Fig.~\ref{fig:execution1} $(b)$; Lemma~\ref{lem:exec_beta}).
Note that in  any arbitrary extension of $\beta$, $R_1$ eventually returns $(x_1, y_1)$.
Next, we show the existence of an execution $\gamma$ of the form  $\prefix{\gamma} \circ \frag{1,x}{\gamma} \circ \frag{1,y}{\gamma}$,  
 where the send actions $send(m_x^{r_1})_{r_1, s_x}$ and $send(m_y^{r_1})_{r_1, s_y}$  at  $r_1$ occur before $W$ is invoked (Fig.~\ref{fig:execution1} $(c)$; $\gamma$), but 
$\frag{1,x}{\gamma}$  and $\frag{1,y}{\gamma}$
occur after $\RESP{W}$ as in $\beta$.
Based on $\gamma$, we show the existence of an execution $\delta$ 
of the form  $\prefix{\eta} \circ \frag{1,x}{\eta} \circ \frag{1,y}{\eta}\circ \suffix{\eta}$, where 
  $R_1$ responds with $(x_1, y_1)$.
%
   %
    Finally, starting with  $\eta$, we create a sequence of executions $\delta (\equiv \eta)$, $\delta^{(1)}$, $\cdots$ $\delta^{(f)}$, of $\mathcal{A}$, 
     where in each of them $R_1$ responds with $(x_1, y_1)$  (Fig.~\ref{fig:execution1} $(e)$ and $(g)$; Lemma~\ref{thm:two-snow}). Additionally,  for any $\delta^{(i)}$, the fragments   $\frag{1,x}{\delta^{(i)}}$ and
    $\frag{1,y}{\delta^{(i)}}$  appear before $\delta^{(i-1)}$.
    Based on $\delta^{(f)}$, we prove the existence of an execution $\phi$, where $R_1$ returns 
    $(x_1, y_1)$ even before $W$ begins, which violates the $S$ property.

  Now, we explain the relevant lemmas used in the main proof. 
  The first lemma states that there exists a finite execution of ${\mathcal A}$, where $R_1$ begins after  
  $W$ completes, and the actions $send(m_x^{r_1})_{r_1, s_x}$ and $send(m_y^{r_1})_{r_1, s_y}$  occur 
before either of the servers $s_x$ and  $s_y$ 
  receives $m_x^{r_1}$ or $m_y^{r_1}$ from $r_1$;  also,  server $s_x$ responds to $r_1$ in a non-blocking manner (execution $\alpha$ in Fig.~\ref{fig:execution1}).  
\begin{lemma}\label{lem:exec_alpha} 
There exists a finite execution $\alpha$ of $\mathcal{A}$ that contains transactions $R_1(\alpha)$  
 and $W(\alpha)$ where $\INV{R}$ appears after $\RESP{W}$, and the following conditions hold:
\begin{enumerate}
\item[$(i)$] The actions $send(m_x^{r_1})_{r_1, s_x}$ and  $send(m_y^{r_1})_{r_1, s_y}$ appear consecutively in $trace(\alpha)|r_1$; and 
\item[$(ii)$]   $\alpha$ contains the execution fragment  $\frag{1,x}{\alpha}$.
\end{enumerate} 
\end{lemma}

\begin{proof}
Consider a finite  execution fragment  of $\mathcal{A}$ with a completed transaction $W$, where  after $W$ 
completes  the adversary invokes $R$, i.e., $\INV{R_1}$ occurs.
 Note that each of the  read operations $op_1^{r_1}$ and $op_2^{r_1}$,  in $R_1$,  can be invoked by the adversary at any point in the execution.  
 Following $\INV{R}$,  the adversary introduces the invocation action $\inv{op_1^r}$;  by the O property of the read operations of $\mathcal{A}$  
the action $send(m_x^{r_1})_{r_1, s_x}$ eventually occurs. Next, the adversary introduces 
 $\inv{op_2^{r_1}}$ and also, delays the arrival of  $m_x^{r_1}$ until  action $send(m_y^{r_1})_{r_1, s_y}$ eventually occurs, 
 which  must occur in accordance with  the property O  of read operations. Let us call this finite execution $\alpha^0$. 
 
 Next, suppose at the end of $\alpha^0$  the adversary delivers the message $m_x^{r_1}$, which has been delayed so far, 
 via the action  $recv(m_x^{r_1})_{r, s_x}$, 
 at  $s_x$,   but  it delays any  other input actions at $s_x$. Note that by the $N$ property of read operations $s_x$ eventually responds with  $send(x)_{s_x, r_1}$, with one value $x$ by o property,  where $x = x_1$ by the S property, since $R$ begins after $W$ completes. 
 Let us call this execution $\alpha$. Note that $\alpha$ satisfies conditions $(i)$ and $(ii)$ by the design of the execution.
\end{proof}

 %

   The following lemma states that there is an execution $\beta$ where $R$ begins after  $W$ completes, and the two send events at $r_1$ occurs before  $\frag{1,x}{\beta}$, which thus occurs before $\frag{1,y}{\beta}$ ($\beta$ in Fig.~\ref{fig:execution1}).

\begin{lemma}\label{lem:exec_beta} 
There exists an  execution $\beta$ of $\mathcal{A}$ that contains transactions $R_1$ and $W$ where $\INV{R_1}$ appears after $\RESP{W}$ and the following conditions hold:
\begin{enumerate}
\item[$(i)$] The actions $send(m_x^{r_1})_{r_1, s_x}$ and  $send(m_y^{r_1})_{r_1, s_y}$ appear consecutively in $trace(\beta)|r_1$; and
\item[$(ii)$]   $\beta$ contains the execution fragment  $\frag{1,x}{\beta} \circ \frag{1,y}{\beta}$.
\end{enumerate} 
\end{lemma}

\begin{proof}
Consider the  execution $\alpha$ of $\mathcal{A}$ as constructed in Lemma~\ref{lem:exec_alpha}.  
At the end of the execution fragment $\alpha$,  the adversary  delivers the previously delayed message
 $m_y^r$, which is sent via the action $send(m_y^r)_{r, s_y}$,  by  introducing the action $recv(m_y^r)_{r, s_y}$. The adversary then delays any other  input action in $\mathcal{A}$. By the N property,  
server $s_y$ must respond to $r$, with some value $y$,  and hence 
the output action $send(y)_{s_y, r}$ must eventually occur at $s_y$.  
Let us call this finite execution as $\beta$. Note that $\beta$ satisfies the  properties $(i)$ and $(ii)$ in the statement of the lemma.
\end{proof}

 The following result shows  that starting with $\beta$  there is an execution 
 $\gamma$, where  $R_1$ is invoked before $W$ is invoked while
  $send(m_x^{r_1})_{r_1, s_x}$ and  $send(m_y^{r_1})_{r_1, s_y}$  occur before $\INV{W}$ (Fig.~\ref{fig:execution1} $(c)$) and $m_x^{r_1}$ and $m_y^{r_1}$ from $r_1$ reach $s_x$ and $s_y$ after the action  $\RESP{W}$.

\begin{lemma}\label{lem:exec_gamma} 
There exists an execution $\gamma$ that contains $R_1$ and $W$, where the action $\INV{R_1}$ appears before $\INV{W}$ and $\RESP{R_1}$ appears after  $\RESP{W}$, and the following conditions hold for $\gamma$:
\begin{enumerate}
\item[$(i)$] The actions $send(m_x^{r_1})_{r_1, s_x}$ and  $send(m_y^{r_1})_{r_1, s_y}$  appear before $\INV{W}$ and they appear consecutively in $trace(\gamma)|r_1$;
\item[$(ii)$]   $\gamma$ contains the execution fragment  $\frag{1,x}{\gamma} \circ \frag{1,y}{\gamma}$; and 
\item[$(iii)$] action $\RESP{W}$ occurs before $\frag{1,x}{\gamma}$.
\end{enumerate} 
\end{lemma}

\begin{proof}
Consider the execution $\beta$ of $\mathcal{A}$  as in Lemma~\ref{lem:exec_beta}. Note that $\beta$
is an execution of the composed automaton $\mathcal{A}$ ($\equiv S_1 \times r$).  In $\beta$,  the actions 
 $send(m_x^r)_{r, s_x}$ and   $send(m_y^r)_{r, s_y}$ occur at $r$; and following that,  the actions 
 $recv(m_x^r)_{r, s_x}$, $recv(m_y^r)_{r, s_y}$, $send(x)_{s_x, r}$ and  $send(y)_{s_y, r}$ occur at $S_1$. 
 Consider the executions $\alpha_r \equiv \beta|r$ and  $\alpha_{S_1} \equiv \beta|S_1$. Let $s_{\beta}$ denote  $trace(\beta)$.
 
 In $s_{\beta}$, $send(m_x^r)_{r, s_x}$,  $send(m_y^r)_{r, s_y}$ appear after $\RESP{W}$, as in $trace(\beta)$. Let $s'_{\beta}$ be the sequence of
 external actions of $S_2$ which we construct from $s_{\beta}$ by moving $send(m_x^r)_{r, s_x}$,  $send(m_y^r)_{r, s_y}$ before $\INV{W}$, which  is also 
 an external action of $\mathcal{A}$, and leaving the rest of the actions in $s_{\beta}$ as it is.

 In $\beta$,  $\INV{R}$, $recv(x)_{ s_x, r}$ and  $recv(y)_{s_y, r}$ 
  are the only input actions at $r$, therefore,   $s'_{\beta} |r= trace(\alpha_r)$. 
  On the other hand,  $recv(m_x^r)_{r, s_x}$, $recv(m_y^r)_{r, s_y}$ are the only input actions at $s_x$ ,
   therefore, $s'_{\beta} |S_1 = trace(\alpha_{S_1})$.  Now, by Theorem ~\ref{thm:paste}, there exists an execution 
 $\gamma$ of $\mathcal{A}$ such that, $s'_{\beta} = trace(\gamma)$ 
 and $\alpha_r = \gamma|r$ and $\alpha_{S_1} = \gamma|S_1$. Therefore, 
 in $\gamma$, $send(m_x^r)_{r, s_x}$,  $send(m_y^r)_{r, s_y}$ appear before
  $\INV{W}$ (condition $(i)$) and since $s'_{\beta} = trace(\gamma)$ condition $(ii)$  holds. Conditions $(iii)$ holds trivially.
\end{proof}

In the following lemma we show  that in any   execution, of ${\mathcal A}$,  that is  an extension of either execution $\beta$ or execution  $\gamma$, as in  the preceding lemmas, $R_1$ eventually returns $(x_1, y_1)$.

\begin{lemma}\label{lem:exec_xi}
Let $\xi$  be an execution of $\mathcal{A}$ that is an   extension of the either  execution $\beta$ from Lemma~\ref{lem:exec_beta} or execution  $\gamma$ from Lemma~\ref{lem:exec_gamma},  then  $R(\xi)$  responds with $(x_1, y_1)$.
\end{lemma}

\begin{proof}
Note that in executions $\beta$ and $\gamma$,  the traces $trace(\beta)|s_x$  is a prefix of $trace(\gamma)|s_x$, since  $\beta$ ends with
$\frag{1,x}{\beta}$ and $\gamma$ ends with $\frag{1,y}{\gamma}$
Therefore, in both $\beta$ and $\gamma$, the  respective $send(x)_{s_x, r}$ actions have the same value for their  $x$'s. 
Now, in  any extended  execution $\eta$ of $\mathcal{A}$, which starts 
with $\beta$ or $\gamma$, by the properties $N$ and $O$ the transaction $R$ completes;  and  by the property $S$,   $R$ returns  $(x_1, y_1)$.
Therefore,  $R_1(\xi)$  returns $(x_1, y_1)$.
\end{proof}

 In the following lemma, we show there exists an    execution $\eta$  of  
$\mathcal{A}$  of the form 
$\prefix{\eta}\circ \frag{1,x}{\eta} \circ \frag{1,y}{\eta} \circ \suffix{\eta}$
where  $\RESP{R}$ appears in  $\suffix{\eta}$  (Fig.~\ref{fig:execution1} $(d)$) and 
$R(\eta)$ returns $(x_1, y_1)$. 

\begin{lemma}\label{lem:exec_delta} 
There exists an execution $\eta$ of $\mathcal{A}$ that contains transactions $R$ and $W$ where $\INV{R}$ appears before $\INV{W}$;  $\RESP{R_1}$ appears after  $\RESP{W}$ and the following conditions hold for $\eta$:
\begin{enumerate}
\item[ $(i)$] $\eta$ can be written in the form $\prefix{\eta}\circ \frag{1,x}{\eta} \circ \frag{1,y}{\eta} \circ \suffix{\eta}$, for some 
   $\prefix{\eta}$ and $\suffix{\eta}$;
\item[$(ii)$] The actions $send(m_x^r)_{r_1, s_x}$ and  $send(m_y^{r_1})_{r_1, s_y}$  
appear before  $\INV{W}$ and they appear consecutively in $trace(\eta)|r_1$;
\item[$(iii)$] action $\RESP{W}$ occurs before $\frag{1,x}{\eta}$; and 
 \item[$(iv)$] $R_1(\eta)$ returns $(x_1, y_1)$. 
\end{enumerate} 
\end{lemma}

\begin{proof}
Let $\gamma$ be an execution of $\mathcal{A}$, as described in Lemma~\ref{lem:exec_gamma}. 
 Let $\gamma^0$ be the execution fragment of $\gamma$
up to the action $send(y)_{s_y, r}$. Now, by Theorem~\ref{thm:extension} (1), there exists an execution $\gamma^0 \circ \mu$, of 
$\mathcal{A}$, where $\mu$ denotes the extended portion of the execution.

 Clearly, by the N and O properties, 
the actions  $\resp{op_1^r}$ and $\resp{op_2^r}$ must eventually occur in $\gamma^0 \circ \mu$.
 Now,  identify $\eta$ as $\gamma^0 \circ \mu$,  where $\prefix{\eta} \circ \frag{1,x}{\eta} \circ \frag{1,y}{\eta}$  is $\gamma^0$,   and $\mu$ is $\suffix{\eta}$, thereby, proving condition $(i)$.

Note the condition $(ii)$ is satisfied by $\eta$ because $\RESP{W}$ appears in $\prefix{\eta}$, therefore, the execution $\gamma$ is equivalent to the execution fragment of $\prefix{\eta}$ up to the event $\INV{W}$, and  also, 
$\gamma$ satisfies condition $(ii)$ as stated in Lemma~\ref{lem:exec_gamma}.

Condition $(iii)$ is true because $\frag{1,x}{\eta}$ begins with action $recv(m_x^r)_{r, s_x}$, which occurs after $\RESP{W}$.
Condition $(iv)$ is satisfied by $\eta$ because $\eta$ is an extension of $\gamma$ and due to  the result of Lemma~\ref{lem:exec_xi}.
\end{proof}


\remove{
  \begin{figure}[!ht]
      \centering
         \vspace{-2.0em}
         \caption{
\small{Schematic representation of executions $\alpha$, $\beta$, $\gamma$ and $\eta$,
          of ${\mathcal A}$ with transactions $R$ and $W$. The executions evolve from left to right. 
          The dots denote external events at clients. 
          The up-arrow marks denote external actions at $s_x$. The down-arrow marks denote external actions at $s_y$.}} \label{fig:execution1}
 \end{figure}

\begin{figure}[!ht]
      \centering
         \vspace{-2.0em}
        \caption{\small{Executions $\delta^{(k+1)}$, $\epsilon$, and $\delta^{(k)}$, which show a progressive sequence of executions that are built on execution $\eta$ in Fig.~\ref{fig:execution1}.}}
      \label{fig:execution2}
\vspace{-1em}
 \end{figure}
}

\begin{figure*}[t]
	\hspace*{-0.7cm}
	\begin{subfigure}{0.49\textwidth}
		\centering
	    \includegraphics[width=1.2\linewidth]{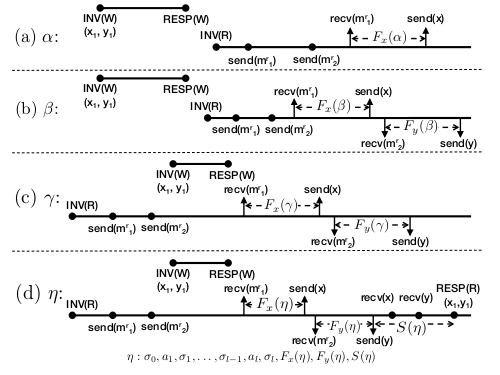}
	\end{subfigure}
     	\hspace*{-0.7cm}
	\begin{subfigure}{0.49\textwidth}
		\centering
	     \includegraphics[width=1.2\linewidth]{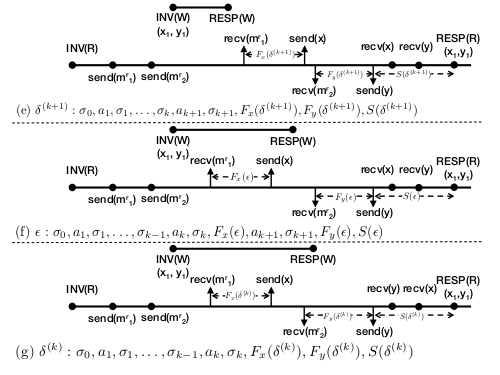}
	\end{subfigure} 
	         \caption{
	         	\small{Schematic representation of executions $\alpha$, $\beta$, $\gamma$ and $\eta$,
	         		of ${\mathcal A}$ with transactions $R$ and $W$. The executions evolve from left to right. 
	         		The dots denote external events at clients. 
	         		The up-arrow marks denote external actions at $s_x$. The down-arrow marks denote external actions at $s_y$.}} \label{fig:execution1}
\end{figure*}

\remove{
\begin{figure}[t]
\centering
\begin{subfigure}{1.0\columnwidth}
\centering
  \includegraphics[width=1.0\linewidth]{figures/fig4-1.pdf}
\end{subfigure}  
\begin{subfigure}{1.0\columnwidth}
\centering
  \includegraphics[width=1.0\linewidth]{figures/fig4-2.pdf}
\end{subfigure}  
\begin{subfigure}{1.0\columnwidth}
\centering
  \includegraphics[width=1.0\linewidth]{figures/fig4-3.pdf}
\end{subfigure}

\caption{\small{try graphs}}
\label{fig:latency}
\end{figure}
}

  The following proof proves Theorem~\ref{thm:two-snow}. In the proof we start with an execution $\eta$ and create a sequence of executions of 
  $\mathcal{A}$, where each one is of the form $\prefix{\cdot} \circ \frag{1,x}{\cdot}\circ \frag{1,y}{\cdot}\circ \suffix{\cdot}$, with progressively shorter $\prefix{\cdot}$ until we have a final execution that contradicts the $S$ property.
\remove{
 \begin{theorem}\label{thm:two-snow}
 The SNOW properties cannot be implemented in a system with two clients and two servers, where the clients do not communicate with each other.
\end{theorem}
}

\begin{proof}
\sloppy Consider an execution $\delta^{(\ell)}$  of $\mathcal{A}$ as in Lemma~\ref{lem:exec_delta}, and let $\prefix{\delta^{(\ell)}}$  be the execution fragment
$\finiteprefix{\ell-1}{\ell}$.
By  Lemma~\ref{lem:exec_delta}, $\RESP{R_1(\delta^{(\ell)})}$  returns
 $(x_1, y_1)$ and $\delta^{(\ell)}$ is also of the form    
$\finiteprefix{\ell-1}{\ell} \circ \frag{1,x}{\delta^{(\ell)}} \circ \frag{1,y}{\delta^{(\ell)}} \circ \suffix{\delta^{(\ell)}}$.     

Next, we inductively prove the existence of a finite sequence of executions of $\mathcal{A}$---i.e., by proving the existence of a new execution based on the existence of a previous one---as $\delta^{(\ell)}, \delta^{(\ell-1)}, \cdots \delta^{(i)}, \delta^{(i-1)},  \cdots \delta^{(f)}$, 
for some positive integer $f$, with the following properties: $(a)$ Each of the execution in 
the sequence can be written in the form  $\finiteprefix{i-1}{i} \circ \frag{1,x}{\delta^{(i)}} \circ \frag{1,y}{\delta^{(i)}} \circ \suffix{\delta^{(i)}}$ (or $\prefix{\cdot}\circ \frag{1,x}{\cdot}\circ \frag{1,y}{\cdot}\circ\suffix{\cdot}$); $(b)$ for each $i$, $f \leq i  < \ell$, we have  $\prefix{\delta^{(i)}}$ to be a prefix of $\prefix{\delta^{(i+1)}}$; and $(c)$ $R_1(\delta^{(f)})$  returns  $(x_0, y_0)$, and for any $i$,  $f <  i \leq \ell$,  we have   $R_1(\delta^{(i)})$  that returns  $(x_1, y_1)$.  Note that there is a final execution of the form $\delta^{(f)}$  because of the initial values of $x_0$ and $y_0$  and the \wot{}  $W$. 



 
Clearly, there exists an integer $k$,  $ f \leq  k < \ell$, such that $\frage{R}{\delta^{(k)}}{}$ returns $(x_0, y_0)$ and $\frage{R_1}{\delta^{(k+1)}}{}$ returns $(x_1, y_1)$. Now we start with execution 
$\delta^{(k+1)}$ and construct an execution $\delta^{(k)}$ as described in the rest of the proof. The following argument will show that  $\frage{R}{\delta^{(k)}}{}$ must also return $(x_1, y_1)$, which contradicts the assumption of having the $S$ property.

Consider the execution $\delta^{(k+1)}$ of the form 
$\finiteprefix{k}{k +1} \circ \frag{1,x}{\delta^{(k+1)}} \circ \frag{1,y}{\delta^{(k+1)}} \circ \suffix{\delta^{(k+1)}}$. 
 The action  $a_{k+1}$ can  occur at any of the automata $r$, $w$, $s_x$, and $s_y$. Therefore, we consider the following four possible cases.

\emph{ \underline{Case (i) $a_{k+1}$ occurs at $w$:}} The execution fragments
 $\frag{1,x}{\delta^{(k+1)}}$ and $\frag{1,y}{\delta^{(k+1)}}$ do not contain any input action  at $s_x$ or $s_y$. $a_{k+1}$ does not occur at $s_x$ or $s_y$. 
 Therefore,
 the asynchronous network can
  delay the occurrence of $a_{k+1}$ at $w$ to create the 
   finite execution   
$\finiteprefix{k-1}{k } \circ \frag{1,x}{\delta^{(k+1)}} \circ \frag{1,y}{\delta^{(k+1)}}$,  
 of. Also, there exists an execution $\delta^{(k)}$ that
 is an extension of the above finite execution, where $R_1$ completes in $\delta^{(k)}$.
  Clearly, $\delta^{(k)}$ can be written as  
 $\finiteprefix{k-1}{k } \circ \frag{1,x}{\delta^{(k)}} \circ \frag{1,y}{\delta^{(k)}} \circ \suffix{\delta^{(k)}}$, where $ \suffix{\delta^{(k)}}$ is the tail of the execution resulting from the extension. Moreover,  
  $\frag{1,x}{\delta^{(k+1)}}$   is indistinguishable  from   
  $\frag{1,x}{\delta^{(k)}}$ at $s_x$, i.e., 
      $\frag{1,x}{\delta^{(k+1)}} \stackrel{s_x}{\sim} \frag{1,x}{\delta^{(k)}}$. 
   Therefore, $send(x)_{s_x, r_1}$ has the same object value 
   $x$ in both fragments, which means $R_1$ returns $x_1$, and thus $R_1(\delta^{(k)})$ must return $(x_1, y_1)$ by the property $S$.

\emph{ \underline{Case (ii) $a_{k+1}$ occurs at $r$:}} Similar to Case $(i)$.

\emph{ \underline{Case (iii) $a_{k+1}$ occurs at $s_x$:}} Observe that the two execution fragments  
   $a_{k+1} \sigma_{k+1} \circ \frag{1,x}{\delta^{(k+1)}}$ and $\frag{1,y}{\delta^{(k+1)}}$ occur at separate  automata, i.e., at $s_x$ and $s_y$ respectively. 
    Also, the  execution fragments  $a_{k+1} \sigma_{k+1} \circ \frag{1,x}{\delta^{(k+1)}}$ and $\frag{1,y}{\delta^{(k+1)}}$ do not contain any  input actions at $s_x$ or $s_y$. 
    Therefore, 
    we can create an execution $\epsilon$, which  can be expressed as 
    $\finiteprefix{k-1}{k }  \circ \frag{1,y}{\epsilon}\circ a_{k+1}, \sigma_{k+1} \circ \frag{1,x}{\epsilon} \circ \suffix{\epsilon}$. Clearly,   
     $\frag{1,x}{\epsilon} \stackrel{s_x}{\sim} \frag{1,x}{\delta^{(k+1)}}$ and  
     $\frag{1,y}{\epsilon} \stackrel{s_y}{\sim} \frag{1,y}{\delta^{(k+1)}}$.  
     %
     %
   %
  Because $send(x)_{s_x, r_1}$ occurs in both $\frag{1,x}{\epsilon}$ and 
     $\frag{1,x}{\delta^{(k+1)}}$, it sends the same value $x_1$ to $r_1$. Therefore, $R$ returns $(x_1, y_1)$. 
   
   Now let us denote the execution fragment  $\finiteprefix{k-1}{k }  \circ \frag{1,y}{\epsilon}$ by $\epsilon'$, which  is simply a finite prefix of $\epsilon$.
     Allowed by the asynchronous network, we append $recv(m_x^{r_1})_{r_1, s_x}$ to $\epsilon'$, and create a finite execution $\epsilon''$ as 
        $\finiteprefix{k-1}{k }\circ \frag{1,y}{\epsilon''}, recv(m_x^{r_1})_{r_1, s_x}$, and delay any input action at $s_y$.    
        
        Let $\epsilon'''$ be an extension of $\epsilon''$. Clearly,  
             $\frag{1,y}{\epsilon'''} \stackrel{s_y}{\sim} \frag{1,y}{\epsilon''}$, and thus $send(y )_{s_y, r_1}$ sends the same value $y_1$ in $\epsilon''$ and $\epsilon'''$. 
             By the $N$ property, $send(x)_{s_x, r_1}$ eventually occurs, and by the $O$ property $x$ is send to $r_1$. Therefore, $R_1$ completes in 
$\epsilon'''$, which implies that 
             $R(\epsilon''')$ must return $(x_1, y_1)$.
             
             Note that the execution fragment of $\epsilon'''$ has no input actions of $s_x$ between $ recv(m_x^{r_1})_{r_1, s_x}$ and $send(x)_{s_x, r_1}$, which can be identified as $\frag{1,x}{\epsilon'''}$. Therefore, $\epsilon'''$ can be written as      $\finiteprefix{k-1}{k }  \circ \frag{1,y}{\epsilon'''} \circ \frag{1,x}{\epsilon'''} \circ \suffix{\epsilon'''}$.
   
    Next, since $\frag{1,x}{\epsilon'''}$ and $\frag{1,y}{\epsilon'''}$ contain actions of different automata, 
    we can create an execution prefix  $\epsilon^{(iv)}$ as 
    $\finiteprefix{k-1}{k }  \circ \frag{1,x}{\epsilon^{(iv)}} \circ \frag{1,y}{\epsilon^{(iv)}}$, 
     where  
     $\frag{1,x}{\epsilon^{(iv)}}$ appears before $\frag{1,y}{\epsilon^{(iv)}}$. Next, 
     we create an execution $\delta^{(k)}$ as an extension of 
       $\epsilon^{(iv)}$, which can be written   as $\finiteprefix{k-1}{k } \circ \frag{1,x}{\delta^{(k)}} \circ \frag{1,y}{\delta^{(k)}}\circ \suffix{\delta^{(k)}}$. 
       Then by the $O$ and $N$ properties, $R_1$ completes in $\delta^{(k)}$.  
       Since $\frag{1,x}{\epsilon^{(iv)}} \stackrel{s_x}{\sim} \frag{1,x}{\epsilon'''}$, $send(x)_{s_x, r_1}$ returns $x_1$ in $\epsilon^{(iv)}$. 
    Similarly, because 
      $\frag{1,x}{\delta^{(k)}} \stackrel{s_x}{\sim} \frag{1,x}{\epsilon^{(iv)}}$, $send(x)_{s_x, r_1}$ returns $x_1$ in $\delta^{(k)}$. So,
     $R_1(\delta^{(k)})$ returns $(x_1, y_1)$.

\emph{ \underline{Case (iv) $a_{k+1}$ occurs at $s_y$:}} Because $a_{k+1}$ occurs at server $s_y$ and $\frag{1,x}{\delta^{(k+1)}}$ occurs at server $s_x$ (different automata),  
we can create a new execution $\epsilon$ of $\mathcal{A}$ (Fig.~\ref{fig:execution1} $(f)$)  as   
             $\finiteprefix{k-1}{k } \circ \frag{1,x}{\epsilon} \circ a_{k+1}, \sigma_{k+1} \circ 
        \frag{1,y}{\epsilon}\circ \suffix{\epsilon}$, such that $\frag{1,x}{\epsilon} \stackrel{s_x}{\sim}\frag{1,x}{\delta^{(k+1)}}$, where  
         $a_{k+1}, \sigma_{k+1} $
       occurs after $\frag{1,x}{\delta^{(k+1)}}$ and 
         $R(\epsilon)$ returns $(x_1, y_1)$.
        
     Now, consider the finite execution   $\finiteprefix{k-1}{k }\circ \frag{1,x}{\epsilon}$ at the end of which we append $recv(m_y^{r_1})_{r_1, s_y}$ to create a finite execution of $\mathcal{A}$ as 
        $\finiteprefix{k-1}{k }\circ \frag{1,x}{\epsilon}, recv(m_y^{r_1})_{r_1, s_y}$. 
        Then, there exists 
        an execution $\epsilon'$ of $\mathcal{A}$, where 
              the network delays the input actions at $s_y$. By the $N$ and $O$ properties, $send(y)_{s_y, r_1}$ occurs in $\epsilon'$. Clearly, since   $\frag{1,x}{\epsilon'} \stackrel{s_x}{\sim} \frag{1,x}{\epsilon}$, $send(x)_{s_x, r_1}$ sends the same value in $\epsilon$ and $\epsilon'$. Therefore,  $R_1$ returns $(x_1, y)$.
              
              In $\epsilon'$, we denote the fragment that begins with $recv(m_y^{r_1})_{r_1, s_y}$ and ends with 
              $send(y)_{s_y, r_1}$ by $\frag{1,y}{\epsilon'}$ to have $\epsilon'$ as  
              $\finiteprefix{k-1}{k } \circ \frag{1,x}{\epsilon'}  \circ 
        \frag{1,y}{\epsilon'}$. 
        Then, there exists 
        an execution $\delta^{(k)}$ of $\mathcal{A}$, which is an extension of $\epsilon'$. Clearly, $\delta^{(k)}$ can be written as 
     $\finiteprefix{k-1}{k }  \circ \frag{1,x}{\delta^{(k)}} \circ \frag{1,y}{\delta^{(k)}} \circ \suffix{\delta^{(k)}}$, where $\suffix{\delta^{(k)}}$ is the tail of the extended execution.  Clearly,  $\frag{1,x}{\delta^{(k)}} \stackrel{s_x}{\sim} \frag{1,x}{\epsilon'}$. Therefore,  $R_1(\delta^{(k)})$ returns $x_1$ in $\delta^{(k)}$, which implies $R_1(\delta^{(k)})$ must return $(x_1, y_1)$.    
\end{proof}

\remove{
 
 \begin{figure}[!ht]
      \centering
         \caption{
\small{The architecture of a typical web service with clients, servers, and 
         the communication channels, between every pair of processes, inside a datacenter is modeled as a collection of I/O automata. Note that, unlike the architecture in Fig.~\ref{fig:architecture2a}, in this setup there are communication channels between every pair of clients.}} 
         \label{fig:architecture2c}
 \end{figure}

}

\subsection{SNOW with C2C Communication}
\label{app:algorithm-a}
In this section,  we show that SNOW is possible in the   \emph{multiple-writers single-reader} 
(MWSR) setting 
when client-to-client communication is allowed. In particular, we present an algorithm $A$, which has all SNOW properties in such setting.
We consider a system that has $\ell \geq 1$ writers with ids $w_1, 
w_2 \cdots w_{\ell} \in \mathcal{W}$ 
, one reader $r$, and  $k \geq 1$ servers with ids $s_1, s_2\cdots s_k \in \mathcal{S}$. 
Client-to-client communication is allowed. 
%
%
The pseudocode for algorithm $A$ is presented in Pseudocode~\ref{fig:algo_a}. 
%
We use keys to uniquely identify a \wot{}.  A key $\kappa \in \mathcal{K}$ is defined as a pair $(z, w)$, 
where $z \in \mathbb{N}$, and $w \in \mathcal{W}$ is the id of a writer. $\mathcal{K}$ denotes the set of all possible keys. 
Also, with each transaction we associate a tag $t \in \mathbb{N}$. 


\textit{\textbf{State variables:}} 
$(i)$ Each  \emph{writer $w$} stores a counter $z$ corresponding to the
number of \wots{}  it  has  invoked so far, initially $0$.
$(ii)$ The  \emph{reader} $r$ has an 
ordered list of elements, $List$, as $(\kappa, (b_1, \cdots, b_k))$,  where 
$\kappa  \in \mathcal{K}$  and 
$(b_1, \cdots b_k) \in  \{0, 1\}^k$. Initially,  
$List= [ ({\kappa}^0, (1, \cdots 1) ]$, where ${\kappa}^0  \equiv (0, w_0)$, 
and $w_0$ is any
place holder identifier for writer id. 
$(iii)$ Each   \emph{server} $s_i \in \mathcal{S}$  stores a set variable $Vals$ 
with elements 
of key-value pairs $({\kappa}, v_i) \in \mathcal{K} \times \mathcal{V}_i$. Initially,
$Vals= \{ ({\kappa}^0, v_i^0)\}$. 

\textit{\textbf{Writer steps:}} Any writer client, $w \in \mathcal{W}$, may invoke a \wot{} $\Writetr{ (o_{i_1}, v_{i_1}), (o_{i_2}, v_{i_2}), \cdots, (o_{i_p}, v_{i_p}) }$, comprising a set of write operations,
where  $I = \{i_1, i_2, \cdots, i_p\}$ is some subset of $p$ indices of $[k]$. We define the set  $S_I\triangleq \{s_{i_1}, s_{i_2}, \cdots, s_{i_p}\}$.      
This procedure consists of two consecutive phases: {\writeValue} and {\informReader}.  In the {\writeValue} phase,  $w$ creates a key ${\kappa}$ as  $ {\kappa}  \equiv (z + 1, w)$; and also increments the local counter $z$ by one.   Then it sends $(${\writeValueTag}$, ({\kappa}, v_{i}))$ to each server $s_i$ in $S_I$, and awaits {\ackTag}s  
from each server  in $S_I$.  After receiving all {\ackTag}s,    $w$ initiates the {\informReader} phase during which  it sends 
(\informReaderTag, $({\kappa}, (b_{1}, \cdots b_{k})$) to $r$, where for any $i \in [k]$, $b_i$ is a boolean variable, such that $b_i=1$ if $s_i \in S_I$, otherwise $b_i=0$. 
Essentially, such a $(k+1)$-tuple
identifies the set of objects that are updated during that \wot{}, i.e., if $b_i=1$ then object 
$o_i$ was updated 
during the execution of the  \wot{}, otherwise $b_i=0$.  
After $w$ receives    {\ackTag} 
from $r$ it completes the \wot{}. 

\textit{\textbf{Reader steps:}}  
We use the same notations for $I$ and $S_I$ as above for the set of indices and corresponding servers, possibly 
different across transactions.
The procedure  \Readtr{$ o_{i_1},  o_{i_2}, \cdots, o_{i_p}$}, 
for any  \rot{}, 
is  initiated at  reader  $r$, where   $o_{i_1},  o_{i_2}, \cdots, o_{i_p}$  denotes the  subset  of  
objects $r$ 
intends to read. This procedure
consists of only one phase,  {\readValue},  of communication 
between the reader and the servers in $S_I$.   Here $r$ sends  the message
(\readValueTag, ${\kappa}_i$) to each server $s_i \in S_I$, where 
the ${\kappa}_i$ is the key in  the tuple $({\kappa}_{i}, (b_{1}, \cdots, b_{k}))$  in  $List$ located at  index $j^*$ such that $b_i =1$ such that 
$i \in I$. 
After
receiving the values $v_{i_1}$, $v_{i_2}, \cdots v_{i_p}$ from all  servers in $\mathcal{S_I}$,  where $S_I \triangleq \{s_{i_1},  s_{i_2}, \cdots, s_{i_p}\}$, the transaction completes by 
returning $(v_{i_1}, \cdots v_{i_p})$.

On receiving a message  
(\informReaderTag, $({\kappa}, (b_{1}, \cdots, b_{k})$) from any writer $w$,  $r$ appends  
$({\kappa}, (b_{1}, \cdots, b_{k})$ to its  $List$,  and responds to $w$ with  
{\ackTag} and $t_w = |List|$, i.e., number of elements in $List$.
The order of the  elements in  $List$ corresponds to  the order  
the \wots{}, the order of the incoming  {\informReaderTag} updates,  as seen by the reader.

\textit{\textbf{Server steps:}} The server responds to messages containing the tags 
{\writeValueTag} and \readValueTag.  The first procedure is used if a server $s_i$ receives a 
message  $(${\writeValueTag}$, ({\kappa}, v_{i}))$  from a writer $w$,  it  adds $({\kappa}, v_i)$ to its set variable   $Vals$ and sends {\ackTag} back to $w$.
The second procedure is used  if  $s_i$ receives a message, i.e., $(${\readValueTag}$, {\kappa}_{i})$, from $r$, then it responds with $v_i$ such that $({\kappa}_{i}, v_i)$ is in its $Vals$.

\begin{algorithm}[!h]
	\begin{algorithmic}[2]
		\begin{multicols}{2}{\footnotesize
				\Statex {\bf At writer $w$}
				\Part{{\it State Variables at $w$}}{ 	
					\Statex $z \in \mathbb N$, initially   $0$
				}\EndPart
				\Statex\Statex
				{\bf  \Writetr{$(o_{i_1}, v_{i_1}), \cdots, (o_{o_p},  v_{i_p})$}}
				\Part{ \underline{\writeValue}} {
					\State ${\kappa} \leftarrow (z +1,  w)$
					\State $z \leftarrow z +1 $
					\State $I\triangleq \{i_1, i_2, \cdots, i_p \}$
					\For{$i \in I$} 
					\State Send (\writeValueTag, $({\kappa}, v_{s_i})$) to  $s_i$
					\EndFor 
					\State  Await {\ackTag}  from  $s_i$ $\forall$ $i \in I$.
				}\EndPart
				\Statex
				\Part{ \underline{\informReader}} {
					\For{$i \in [k]$} 
					\If{$i \in I$}
					\State $b_i \leftarrow 1$
					\Else
					\State $b_i \leftarrow 0$
					\EndIf
					\EndFor 
					\State  Send  (\informReaderTag,
					 \\~~~~~~$({\kappa}, (b_{1}, \cdots, b_{k}))$) to   $r$
					\State  Receive ({\ackTag}, $t_w$) from  $r$
				}\EndPart
		}\end{multicols}	
		\vspace{-1.5em} 
		\\\hrulefill 	
		\vspace{-1.5em}
		\begin{multicols}{2}{\footnotesize	
				\Statex {\bf At reader $r$}
				\Part{{\it State Variables at $r$}}{ 	
					\Statex $List$, a list  of elements in  $\mathcal{K} \times \{ 0, 1 \}^k $,\\ ~~~~~initially  $[({\kappa}^0, 1, \cdots 1)]$
				}\EndPart
				
				\Statex
				\Statex  {\bf \Readtr{$ o_{i_1},  o_{i_2}, \cdots, o_{i_p}$}}	
				\Part{{\underline{{\readValue}}}}{ 
					\State $I\triangleq \{i_1, i_2, \cdots, i_p \}$
					\For{$i \in I$} 
					\State $j^* \leftarrow \max_{1 \leq j \leq |List|} \{j:$
					\\~~~~~~~~~~~$List[j].b_i = 1\}$
					\State ${\kappa}_i \leftarrow List[j^*].{\kappa}$
					\State  Send (\readValueTag, ${\kappa}_i$) to $s_i$
					\EndFor
					\Statex 
					\State  Await responses  $v_{i}$ from  $s_i$ $\forall$ $i\in I$
					\State Return  $(v_{i_1}, v_{i_2}, \cdots, v_{i_p})$
				}\EndPart
				\\\hrulefill
				\Statex
				\Statex {\bf Response routines}
				\Part{{\underline{On recv  (\informReaderTag,}
						\\\underline{$({\kappa}, (b_{1}, \cdots b_{k}))$) from  $w$}}}{ 
					\Statex 
					\State $List  \leftarrow List \bigoplus~ ({\kappa}, (b_{1}, \cdots b_{k}))$ 
					\\~~~/* $\bigoplus$ for append */
					\State $tag \leftarrow |List|$ /* $| \cdot |$ list size */
					\State Send  ({\ackTag}, $tag$) to  $w$
				}\EndPart
		}\end{multicols}
		
		\vspace{-1.5em}
		\\\hrulefill %
		\vspace{-1.5em}
		\begin{multicols}{2}{\footnotesize
				\Statex {\bf At server $s_i$ for any $i \in [k]$}
				\Part{{\it State Variables}}{ 
					
				\Statex $Vals\subset \mathcal{K} \times \mathcal{V}_i$, initially   $\{(t^0_{key}, v_i^0)\}$
				}\EndPart
				\Statex
				
				\Part {\underline{On recv (\writeValueTag, $({\kappa}, v)$) 
						from $w$}} {
					\State $Vals \gets   Vals \cup \{({\kappa}, v)\}$ 
					\State Send {\ackTag} to $w$.
				}\EndPart
				\Statex
				\Part{ \underline{On recv (\readValueTag, ${\kappa}$) from  $r$ }} {
					\State   Send $v$ s.t. $({\kappa}, v) \in Vals$  to $r$
				}\EndPart	
		}\end{multicols}
	\end{algorithmic}	
	\caption{Steps at writer $w$, reader $r$ and server $s_i$ in $A$.}\label{fig:algo_a}
\end{algorithm}	
$A$ respects the SNOW properties as stated below.


\begin{theorem} Any well-formed  and fair execution of $A$ 
		 guarantees all of the SNOW properties.
	\end{theorem}

\begin{proof} Below we show that $A$ satisfies the  SNOW properties. 
	
	\noindent{\emph{\underline{S property:}}} 
	Let $\beta$ be any fair execution  of  $A$ and 
 suppose all clients in $\beta$ behave in a well-formed
manner. Suppose $\beta$ contains no incomplete transactions and let  $\Pi$ be the set of transactions in $\beta$.  We define an irreflexive partial ordering ($\prec$) among the transactions in $\Pi$ as follows:  if $\phi$ and $\pi$ are any two distinct transactions in $\Pi$ then we say 
	$\phi \prec \pi$ if either $(i)$ $tag(\phi) < tag(\pi)$ or $(ii)$ $tag(\phi) = tag(\pi)$ and $\phi$ is a {\sc write} and $\pi$ is a {\sc read}. We will prove the $S$  (strict-serializability) property of $A$ by proving that the properties $P1$, $P2$, $P3$ and $P4$ of Lemma~\ref{lem:equivalence_app} hold for $\beta$. 
	
	\emph{P1:}   If $\pi$ is a {\sc read} then since all {\sc read}s are invoked by a single reader $r$ and in a well-formed manner, 
	therefore, there cannot be an infinite number of {\sc read}s such that they all 
	precede $\pi$ (w.r.t $\prec$).
	 Now, suppose $\pi$ is a {\sc write}. Clearly, from an inspection of the algorithm, 
	 $tag(\pi) \in \mathbb{N}$. From inspection of the algorithm, each {\sc write} increases the size of 
	 $List$, and the value of the tags are  defined by the size of $List$. Therefore, there can be at 
	 most a finite number of {\sc write}s such that can precede $\pi$ (w.r.t. $\prec$) in $\beta$.
	  
	\emph{P2:}  Suppose $\phi$ and $ \pi$ are any two transactions in $\Pi$, such that, $\pi$ begins after $\phi$ completes. 
	Then we show that we cannot have $\pi \prec \phi$. Now, we consider four cases, depending on whether $\phi$ and $\pi$ are {\sc read}s or {\sc write}s.	
	\begin{enumerate}
	    \item [$(a)$] $\phi$ and $\pi$ are {\sc write}s invoked by writers $w_{\phi}$ and $w_{\pi}$, respectively. Since the size of $List$, in $r$,  grows monotonically with each {\sc write}  hence  $w_{\pi}$ receives the  tag at least as high as $tag(\phi)$, so $\pi\not \prec \phi$.
	       \item [$(b)$] $\phi$ is a {\sc write}, $\pi$ is a {\sc read} transactions invoked by writer $w_{\phi}$ and $r$, respectively.  
	        Since the size of $List$, in $r$,  grows monotonically, and because  $w_{\pi}$ invokes $\pi$ after $\phi$ completes hence  $tag(\pi)$ is at least as high as $tag(\phi)$, so $\pi\not \prec \phi$.
	        \item[$(c)$] $\phi$ and $\pi$ are {\sc read}s  invoked by reader $r$. 
	           Since the size of $List$, in $r$,  grows monotonically,  hence  $w_{\pi}$ invoked $\pi$ after $\phi$ completes hence $tag(\pi)$ is at least as high as $tag(\phi)$, so $\pi\not \prec \phi$.
	         \item [$(d)$] $\phi$ is a {\sc read}, $\pi$ is a {\sc write}  invoked by reader $r$ and $w_{\pi}$, respectively.
	         This case is simple because new values are added to $List$  only  by writers, and $tag(\pi)$ 
	         is larger than the tag of $\phi$ and hence   $\pi\not \prec \phi$. 
	\end{enumerate}
	
	\emph{P3:} This is clear by the fact that any {\sc write} transaction always creates a unique tag and all tags are totally ordered since they all belong to $\mathbb{N}$
	
	\emph{P4:} Consider a {\sc read} $\rho$ as $READ(o_{i_1}, o_{i_2}, \cdots, o_{i_q})$, in $\beta$. 
Let the returned value from $\rho$ be $\mathbf{v} \equiv $$(v_{i_1}, v_{i_2}, \cdots, v_{i_q})$ such that 
$1 \leq {i_1} <  {i_2} <  \cdots <  {i_q} \leq k$, where value  $v_{i_j}$ corresponds to $o_{i_j}$. 
	Suppose $tag(\rho) \in \mathbb{N}$ was created during some {\sc write} transaction, say $\phi$, i.e., $\phi$ is the {\sc write} that 
	added the elements in index $(tag(\rho)-1)$ of $List$. Note that element in index $0$ contains the initial value.
	 Now we consider two cases:
	 
	\emph{Case $tag(\rho) = 1$.} We  know that it corresponds the initial default value $v_i^0$ at each sub-object $o_i$, and this equates to $\rho$ returning the default initial value for each sub-object.
	 %
	 
	 \emph{Case $tag(\rho) > 1$.} Then we argue that there exists no {\sc write} transaction, say $\pi$, that updated object $o_{i_j}$,   in $\beta$, such that,  $\pi \neq \phi$ and $\rho$ returns values written by $\pi$ and $\phi \prec \pi \prec \rho$. Suppose we assume the 	contrary, which means $tag(\phi) < tag(\pi) < tag(\rho)$. The latter implies $tag(\phi)  = tag(\pi)$ which is not possible because 
	this contradicts the fact that for any two distinct {\sc write}s $tag(\phi) \neq tag(\pi)$  in any execution of   $A$.
	
	\noindent{\emph{\underline{N property:}}}  By inspection of algorithm $A$ for the  response steps  of the servers to the reader.
	
	\noindent{\emph{\underline{O property:}}} By inspection of the  {\readValue} phase: it consists of one round of communication between the reader and the servers, where the servers send only one version of the value of the object it maintains.
	
	\noindent{\emph{\underline{W property:}}}  By inspection of the {\sc write} transaction steps, and  and  that writers always get to complete the transactions they invoke.
	\end{proof}

\section{No Prior Bounded Latency for SW}\label{sec:eiger}
The SNOW work~\cite{SNOW2016} claimed, after examining existing, work there existed only one system, Eiger~\cite{Lloyd:nsdi2013}, whose \rots{} had bounded latency---i.e., non-blocking and finish in three rounds---while providing the strongest guarantees--i.e., having properties W and S---because Eiger claimed that its \rots{} provide strict serializability within a datacenter.
%
In this section, we correct this claim and show there were no existing algorithms that had bounded latency while providing the strongest guarantees by proving Eiger's \rots{} are not strictly serializable. 

\begin{figure}[t]
\centering
\includegraphics[width=0.6\columnwidth]{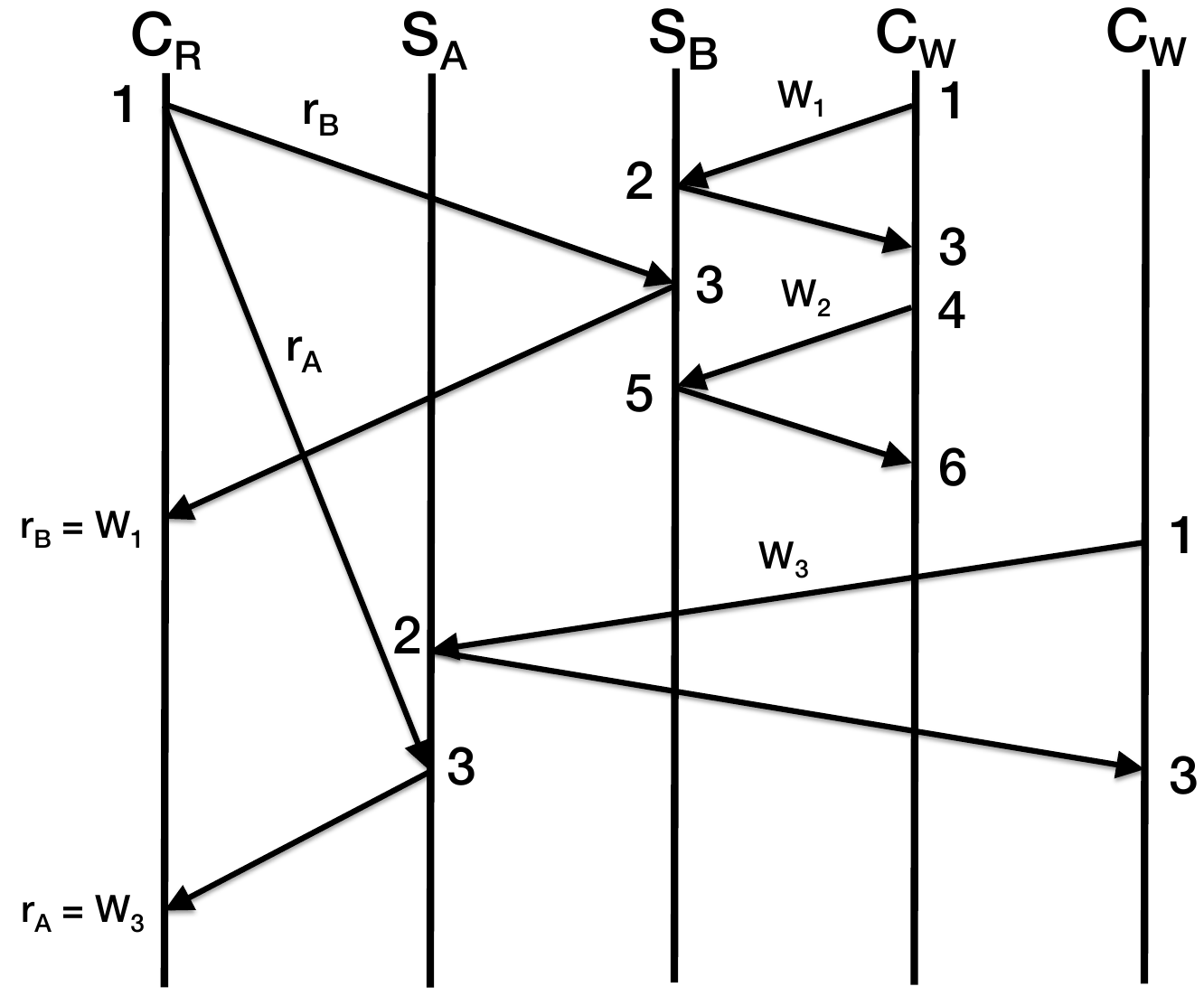}

\caption{
	\small{An example execution that shows Eiger's \textsc{READ} transaction is not strictly serializable. $w_1$, $w_2$, and $w_3$ are three writes, where $w_3$ is issued after $w_2$ finishes. $r_A$ and $r_B$ are read operations from the same \textsc{READ} transaction $R$, which is concurrent with all three writes. Each number is a value of the logical clock on the machine (process) as a result of message exchange.}}
\label{fig:eigerrot}

\end{figure}
\remove{
\subsection{Eiger's \rots{} are not Strictly Serializable}
\label{app:eiger_not_s}
Eiger~\cite{Lloyd:nsdi2013} targeted a geo-replicated setting, which provided causal consistency across wide areas with fast \textsc{READ} and \textsc{WRITE} transactions handled by each datacenter locally. While not focusing on making transactions strictly serializable, Eiger made a claim that its \textsc{READ} transactions within the local datacenter were strictly serializable. Thus, in SNOW, there was a claim that Eiger was the only system, which had bounded latency while providing the strongest guarantees, and the practical lower bound for property $O$ in existing SNW algorithms was three because Eiger's \textsc{READ} transactions required at most three rounds, i.e., the fewest in all examined existing work, were non-blocking, compatible with \textsc{WRITE} transactions, and were claimed to be strictly serializable. In this section, we correct this claim to be that there was no algorithm, which had bounded latency while providing the strongest guarantees by proving that Eiger's \textsc{READ} transactions are not strictly serializable. 
}

The insight on why Eiger's \textsc{READ} transactions are not strictly serializable is that Eiger uses logical timestamps, i.e., Lamport clocks, to track the ordering of operations, and logical clocks are not able to identify the real-time ordering between operations that do not have causal relationship, i.e., operations from different processes. Strict serializability, however, requires the real-time ordering to be respected.

Figure~\ref{fig:eigerrot} is an example execution, which is allowed by Eiger but violates strict serializability. 
Real time goes downwards in the diagram. Each number is a value of the Lamport clock on the machine (process) as a result of message exchange. Initially all processes have Lamport clock value $0$, and no messages happened before the execution in Figure~\ref{fig:eigerrot}. A \textsc{READ} transaction $R=\{r_A, r_B\}$ reads values on $S_A$ and $S_B$ respectively. Due to the asynchronous nature of the network, $r_B$ arrives on $S_B$ before $w_2$ and $r_A$ arrives after $w_3$. Following the \textsc{READ} transaction algorithm of Eiger, $r_A$ returns the value of $w_3$ and its valid logical duration, i.e., $[2,3]$. Similarly, $r_B$ returns the value of $w_1$ and its valid logical duration, i.e., $[2,3]$. Because the two logical durations overlap, Eiger claims the combined values of $r_A$ and $r_B$ are consistent and accept them. However, because $w_3$ starts after $w_2$ finishes, $w_3$ is in real time after $w_2$. By strict serializability, if a \textsc{READ} transaction sees the value of $w_3$, then it must observe the effect of $w_2$. Hence, $R$, which returns the values of $w_1$ and $w_3$, violates strict serializability.

\section{Condition for proving strict serializability}\label{sc_condition_app}
In this section, we describe the data types used in our algorithms. Also, we derive useful properties in proving \emph{strict serializability} property (S property) of executions of the algorithms that implement objects of the data type.

\subsection{Data type} \label{datatype}
In this section, we formally describe the  data type, which we denote as $\mathcal{O}_T$,   for the transaction processing systems considered in this paper. 
We assume there is a set of $k$ objects,  where $k$ is some positive integer, and  $o_k$ denotes the $k^{th}$ object. 
Object $o_k$ stores a value from some non-empty domain $V_k$ and supports two types of operations: \emph{read} and \emph{write} operations.
 A read operation on object $o_k$, denoted by $\readop{o_k}$, on completion returns the value stored in $o_k$.
  A write operation on $o_k$ with some value $v_k$ from $V_k$, denoted as $\writeop{o_k}{v_k}$,  on completion,  updates the value of  $o_k$.

\sloppy A {\sc write} transaction  consists  of a subset of $p$ distinct write operations for  a  subset of $p$ distinct objects, where $1 \leq p \leq k$.   For example, a {\sc write} transaction with  the set of operations   $\{ \writeop{o_{i_1}}{v_{i_1}}$, $\writeop{o_{i_2}}{v_{i_2}} \cdots \writeop{o_{i_p}}{v_{i_p}} \}$, means  value $v_{i_1}$  is to be written to object $o_{i_1}$,   value $v_{i_2}$ to object $o_{i_2}$, and so on.
 We denote such a {\sc write} transaction as $WRITE( (o_{i_1}, v_{i_1}), (o_{i_2}, v_{i_2}), \cdots, $
 $(o_{i_p}, v_{i_p})  )$. 
 
 A {\sc read} transaction consisting of a set of read operations is denoted as 
  $READ(o_{i_1}, o_{i_2}, \cdots, o_{i_q})$,  where $o_{i_1}, o_{i_2}, \cdots, o_{i_q}$ is a set of distinct objects,  $q$ is any positive integer, $1 \leq q \leq k$, which upon completion  returns the  values 
 $(v_{i_1}, v_{i_2}, \cdots, v_{i_q})$, where  $v_{i_j} \in V_{i_j}$  is the value returned by $\readop{o_{i_j}}$, for any $i_j  \in  \{i_1, i_2, \cdots i_q\}$ and  $1 \leq {i_1} < {i_2}< \cdots <  {i_q} \leq k$. 
%
 
Formally, we define the value type used in this paper for a  $k$ object data-type  as follows: 
\begin{enumerate}
\item [$(i)$] a tuple $(v_1, v_2, \cdots, v_k) \in \Pi_{i=1}^{k} V_i$, where $V_i$ is a domain of values, for each $i \in \{1, \cdots, k\}$;
\item [$(ii)$] an initial value $v^0_i$, $v^0_i \in V_i$  for each object $o_i$; 
\item[$(iii)$] \emph{invocations}: $READ(
 o_{i_1},  o_{i_2}, \cdots, o_{i_p})$  and  $WRITE((o_{i_1}, v_{i_1}), $
 $(o_{i_2}, v_{i_2}), \cdots, $ $(o_{i_p}, v_{i_p}))$,
  such that $v_{i_j}  \in V_{i_j}$ for any  $i_j  \in  \{i_1, i_2, \cdots i_p\}$  where  $1 \leq {i_1} < {i_2}< \cdots <  {i_p} \leq k$ and $p$ is   some integer with $1 \leq p \leq k$; 
\item[$(iv)$] \emph{responses}:    a tuple $(v_1, v_2, \cdots, v_k) \in \Pi_{i=1}^{k} V_i$ and $ok$; and 
\item [$(v)$] for any subset of  $p$ objects we  define  $f: invocations \times V \rightarrow responses \times V$, such that:
\begin{enumerate} 
\item $f( READ(o_{i_1}, o_{i_2}, \cdots, o_{i_p}),$ $ (v_1, v_2, \cdots, v_k) ) = (  (v_{i_1}, v_{i_2}, \cdots, v_{i_p}), (v_1, v_2, \cdots, v_k) )$; and 
\item $f( WRITE((o_{i_1}, u_{i_1}), (o_{i_2}, u_{i_2}), \cdots, $ $(o_{i_p}, u_{i_p})),  $ $ (v_1, v_2,$ $ \cdots, v_k)   ) = (ack,  (w_1, w_2,  \cdots, w_k) )$, where  for any $i$,  $w_i =  u_i$ if  
$i \in \{ i_1, i_2,  \cdots, i_p \}$, and for all other values of $i$,  $w_i=v_i$.
\end{enumerate} 
\end{enumerate}

\remove{
\begin{figure}[!ht]
      \centering
         \includegraphics[width=1.0\textwidth]{architecture1_1.png}  \vspace{-2.5em}
         \caption{The architecture of a typical web service with clients and servers inside a datacenter. 
         } 
         \label{fig:architecture1}
 \end{figure}
 \begin{figure}[!ht]
      \centering
         \includegraphics[width=1.0\textwidth]{architecture2_1.png}  \vspace{-2.5em}
         \caption{
         The architecture of a typical web service with clients, servers, and 
         the communication channels inside a datacenter is modeled as a collection of I/O automata.
         } 
         \label{fig:architecture2a}
 \end{figure}
}
\remove{
\subsection{System Model}
In our system,   we assume there are $\ell$ writers, for some $\ell \geq 1$;   $m$ readers, for some $m \geq 1$,  and  $k$ servers for some $k \geq 1$.
We denote the set of writers as  $\mathcal{W}$, which essentially  consists of the writer identifiers $w_1, w_2 \cdots w_{\ell}$. The set of readers,  with ids  $r_1, r_2 \cdots r_{m}$, is denoted as 
  $\mathcal{R}$.  The set of servers  $\mathcal{S}$ consists of the server  identifiers   $s_1, s_2\cdots s_k$. The servers $s_1, s_2\cdots s_k$  manage the objects $o_1, o_2, \cdots, o_k$, respectively,  in particular,  server $s_i$ is responsible for storing object $o_i$, for any  $i$, $1 \leq i\leq  k$. 

Communications between processes during the executions are point-to-point and are assumed to be reliable and asynchronous.  The communications are  
modeled with $Channel$ I/O automata and  are carried out via $send$ and $recv$ actions at the source and 
destination processes. For example, in Figure~\ref{fig:architecture2a}, the reader $r_1$ automaton sends,  via the action $send(m)$,  
some  message $m$, from some alphabet $\mathcal{M}$,  to server $s_1$ through the channel automaton $Channel_{r_1, s_1}$, and $s_1$ receives it though the action 
$recv(m)$. Similarly, the message $m'$, $m' \in \mathcal{M}$,  is transmitted from $s_1$ to $r_1$, through the  channel automaton $Channel_{s_1, r_1}$, via 
the actions $send(m')$ at $s_1$,  and   finally received at $r_1$ via the action $recv(m')$.

The schematic flow of a typical, but simplified,  webpage request from a service hosted in a datacenter is shown in Figure~\ref{fig:architecture1}. 
An end user sends a webpage request to a front-end client $r_1$. Upon receiving it $r_1$ generates a {\sc read} transaction $R$, 
which consists of a set of read operations and is  invoked via a $\INV{R}$ action at the client, which is essentially input to the client from the end user. The goal of 
these operations  is to read  a subset of the 
data objects by contacting the respective managing servers. 
The operations in a  transaction are carried out between the invocation and response actions of the transaction without any 
particular order of execution among the operations. Any operation $op$  begins with an  invocation action, $\inv{op}$, at the 
relevant client,  and ends with a response action $\resp{op}$, also at the client. 
 Once the front-end client completes the read operations, then $r_1$  synthesizes the webpage with the data retrieved by the read operations, and responds to the end user 
 request through the response action $\RESP{R}$ at the  
 client $r_1$.\footnote{In a typical production system the depth of the transactions (due to nested transactions) is more than 
 one~\cite{Lloyd2015}, however, we assume only transactions of depth one as in~\cite{SNOW2016}.} Similarly, the  update request 
 from the end user can be thought of a {\sc write} transaction $W$, consisting of a set of write operations for individual objects,  to update the set of object values,  initiated via an  invocation action $\INV{W}$ at the client. Then the writes operations are carried out by the client, and once they complete, the   transaction completes with the response action $\RESP{W}$, at the client.

	A transaction $\op$ is \emph{incomplete} in an execution $\alpha$  when the action $\INV{\op}$  does not have the $\RESP{\op}$  in $\alpha$; otherwise, we say that $\op$ is \emph{complete} in $\alpha$.  
	In an execution, we say that a transaction (read or write) $\op_1$ {\em precedes} another transaction $\op_2$ (denote as $\op_1 \rightarrow \op_2$),  if the response action $\RESP{\op_1}$  precedes the invocation action  $\INV{\pi_2}$.  Two transactions
	are {\em concurrent} if neither precedes the other. 

%
 If in any execution of the system every  client initiates a new transaction only after the previous transactions initiated at the same client have completed, then  we say the execution is  {\em well-formed}.
}

 \remove{
The {\sc write} transactions issued by a writer consists  of a subset of $p$,  $1 \leq p \leq k$,  distinct write operations for corresponding subset of the objects,   For example, a {\sc write} transaction may comprise of  the set of operations   $\{ \writeop{o_{i_1}}{v_{i_1}}$, $\writeop{o_{i_2}}{v_{i_2}} \cdots \writeop{o_{i_p}}{v_{i_p}} \}$, meaning  value $v_{i_1}$  is to be written to object $o_{i_1}$,   value $v_{i_2}$ to object $o_{i_2}$, and so on.
 We denote such a {\sc write} transaction as $WRITE( \writeop{o_{i_1}}{v_{i_1}}, $$\writeop{ o_{i_2}}{v_{i_2}}, \cdots, $
 $ \writeop{o_{i_p}}{ v_{i_p}}  )$. Similarly,  a {\sc read} transaction, denoted by $READ$, omitting the list of operations to avoid notational clutter, may consist of a set of read operations. For example, a {\sc read} transaction may comprise of the set of read operations as 
  $\{\readop{o_{i_1}}, $$\readop{v_{i_2}}, \cdots, $
 $ \readop{ o_{i_q}}  \}$, for some $1 \leq q \leq k$ which upon completion  returns the  values 
 $(v_{i_1}, v_{i_2}, \cdots, v_{i_q})$, where $v_{i} \in V_i$, for any $i  \in  \{i_1, i_2, \cdots i_q\}$, is the value returned by $\readop{o_i}$, from some domain $V_i$  of  the values for  object $o_i$. 
 }
 

 \remove{
\begin{figure}[!tbp]
  \centering
  \begin{minipage}[b]{0.50\textwidth}
    \includegraphics[width=\textwidth]{figures/architecture1_1.png}
  \end{minipage}
  \hfill
  \begin{minipage}[b]{0.49\textwidth}
    \includegraphics[width=\textwidth]{figures/architecture2_1.png}
 
  \end{minipage}
  \label{fig:architecture}
   \vspace{-1.8em}
   \caption{The architecture of a typical web service with clients and servers inside a datacenter. The clients, servers and the communication channels are modeled as automata}
     \vspace{-0.8em}
\end{figure}
}
 \remove{
\begin{figure}[!ht]
      \centering
         \includegraphics[width=1.0\textwidth]{figures/architecture1.jpg}  \vspace{-3.5em}
         \caption{The architecture of a typical web service with clients and servers inside a datacenter. The clients, servers and the communication channels are modeled as automata.
         } 
         \label{fig:architecture}
 \end{figure}
 \begin{figure}[!ht]
      \centering
         \includegraphics[width=1.0\textwidth]{figures/architecture2.jpg}  \vspace{-3.5em}
         \caption{The architecture of a typical web service with clients and servers inside a datacenter. The clients, servers and the communication channels are modeled as automata.
         } 
         \label{fig:architecture}
 \end{figure}
 \vspace{-0.8em}
}

\subsection{Strict-serializability in $\mathcal{O}_T$}
Although the strict serializability property in transaction-processing systems is a well-studied topic,  
the specific setting considered  in this paper is much simpler. Therefore, this allows  us to derive
 simpler conditions to prove the safety of these algorithms.  A wide range of transaction types and 
 transaction processing systems are considered in the literature. For example, in ~\cite{Papadimitriou79}, Papadimitriou   
defined the strict serializability conditions as a part of developing a  theory for analyzing transaction 
processing systems. In this work, each transaction $T$ consists of a set of write operations $W$, at 
individual objects, and a set of read operations $R$  from individual objects, where  the operations in 
$W$  must complete before the operations in $R$ execute.  
Other types of transaction processing systems allow nested transactions
\cite{Gary1993},
 where the transactions may contain sub-transactions~\cite{Bernstein:1987} which may 
further contain a mix of read or write operations, or even child-transactions. 
In most transaction processing systems considered in the literature,  transactions can be 
\emph{aborted} so as  to handle  failed  transactions. As a result, the serializability theories are 
developed  while  considering the presence of  aborts. 
However, in our system, we do not consider any abort, nor any  client or  server failures.  A transaction 
in our system is either a set of independent writes or a set of reads 
with 
all the reads or writes in a transaction operating on different  objects. Such simplifications allow us to 
formulate an equivalent condition for the execution of an algorithm  to prove the S  
property of such algorithms while  implementing an object of data type $\mathcal{O}_T$.

\sloppy We note that an execution of a variable of type $\mathcal{O}_T$ is a finite sequence 
$\mathbf{v}_0, INV_1, RESP_1,$ $\mathbf{v}_1, INV_2,$ $ RESP_2, \mathbf{v}_2, \cdots, \mathbf{v}_r$ or an infinite sequence 
$\mathbf{v}_0, INV_1, RESP_1, \mathbf{v}_1, INV_2, RESP_2, \mathbf{v}_2, \ldots$, where $INV$'s and 
$RESP$'s are invocations and 
responses, respectively. The $\mathbf{v}_i$s are tuples  of the the form $(v_{1}, v_{2}, \cdots, v_{k}) \in \Pi_{i=1}^{q} V_{i}$, that 
corresponds to the latest values stored across the objects $o_1, o_2\cdots o_k$, and the values in  $\mathbf{v}_0$ are  the initial 
values of the objects. Any adjacent quadruple such as  $\mathbf{v}_i, INV_{i+1}, RESP_{i+1},$ $\mathbf{v}_{i+1}$  is consistent with the $f$ 
function for an object of type $\mathcal{O}_T$. Now, the safety property of such an 
object is a trace that describes the 
correct response to a sequence of $INV$s when all the transactions are executed sequentially. The \emph{strict serializability} of 
$\mathcal{O}_T$ says that each trace produced by an execution of $\mathcal{O}_T$ with concurrent transactions appears as some trace of $\mathcal{O}_T$. We describe this below in more detail. 

\begin{definition}[Strict-serializability] Let us consider an execution $\beta$ of an object of type $\mathcal{O}_T$, such that the invocations of any transaction at any 
client respects the well-formedness property.  Let $\Pi$ denote the set of complete transactions in $\beta$ then we say $\beta$ satisfies the strict-serializability property for 
$\mathcal{O}_T$ if the following are possible:
\begin{enumerate}
\item[$(i)$] For every complete {\sc read} or {\sc write} transaction $\pi$ we insert a point (serialization point) $\pi_*$ between the actions   $\INV{\pi}$ and $\RESP{\pi}$.
\item[$(ii)$] We select a set $\Phi$ of incomplete transactions in $\beta$ such that for 
each $\pi \in \Phi$ we   select a response $\RESP{\pi}$.
\item[ $(iii)$] For each $\pi \in \Phi$ we insert $\pi_*$ somewhere after $\INV{\pi}$ in $\beta$, and remove the $INV$ for the rest of the incomplete transactions in $\beta$.
\item[$(iv)$] If we assume for each $\pi \in \Pi \cup \Phi$ both $\INV{\pi}$ and $\RESP{\pi}$ to occur consecutively at $\pi_*$, with the interval of the transaction shrunk to $\pi_*$,   then the sequence of transactions in this  new trace  is a trace of an object of data type  $\mathcal{O}_T$.
\end{enumerate}
\end{definition}
Now, we consider any automaton $\mathcal{B}$ that implements an object of  type $\mathcal{O}_T$, and prove a result that serves us an  equivalent condition for proving  the strict serializability property of $\mathcal{B}$. Any trace property $P$ of an automaton  is a \emph{safety property}  if the set of executions in $P$ is non-empty; \emph{prefix-closed}, meaning any prefix of an execution in $P$ is also in $P$; and \emph{limit-closed}, i.e., if $\beta_1$, $\beta_2, \cdots$ is any infinite sequence of executions in $P$ is such that $\beta_i$ is prefix of $\beta_{i+1}$ for any $i$, then the limit $\beta$ of the sequence of executions $\{\beta_i\}_{i=0}^{\infty}$  is also in $P$. From Theorem~13.1 in~\cite{Lynch1996}, we know that the trace property, which we denote by $P_{SC}$, of 
any well-formed execution of $\mathcal{B}$ that satisfies the strict-serializability property is a safety property. Moreover, from 
Lemma 13.10 in~\cite{Lynch1996} we can deduce that  if  every execution of $\mathcal{B}$ that is well-formed and failure-free, and also contains no incomplete transactions,  satisfies $P_{SC}$, then any well-formed  execution of $\mathcal{B}$ that can possibly have incomplete transactions is also in $P_{SC}$. 
Therefore, in the following lemma,  which gives us an equivalent condition for the strict serializability property of an execution $\beta$, we consider only executions without any incomplete transactions.
The lemma is proved in a manner similar to Lemma 13.16 in~\cite{Lynch1996}, for atomicity guarantee of a single multi-reader multi-writer object.

\begin{lemma}\label{lem:equivalence_app}
Let $\beta$ be an execution (finite or infinite) of  an  automaton $\mathcal{B}$ that implements an object of type $\mathcal{O}_T$, which consists of  a set of $k$ sub-objects.
 Suppose all clients in $\beta$ behave in an well-formed
manner. Suppose $\beta$ contains no incomplete transactions and let  $\Pi$ be the set of transactions in $\beta$. Suppose there exists an irreflexive partial ordering ($\prec$)  among  the transactions
 in $\Pi$, such that,
 \begin{enumerate}
  \item [$P1$] For any transaction $\pi \in \Pi$ there are only a finite number of transactions $\phi \in \Pi$ such that 
    $\phi \prec \pi$;
 \item [$P2$] If the response event for $\pi$ precedes the invocation event for $\phi$ in $\beta$, then it cannot be that $\phi \prec \pi$;
 \item [$P3$]If $\pi$ is a {\sc write} transaction  in $\Pi$ and $\phi$ is any transaction in $\Pi$, then either $\pi \prec \phi$ or $\phi \prec \pi$; and 
 \item [$P4$] \sloppy A tuple 
 $ \mathbf{v} \equiv $$(v_{i_1}, v_{i_2}, \cdots, v_{i_q})$  
returned by a
$READ(o_{i_1}, o_{i_2}, \cdots, o_{i_q})$,  where $q$ is any positive integer, $1 \leq q \leq k$, 
 is such that  $v_{i_j}$  $j \in \{1, \cdots, q\}$ is written in $\beta$ by the last preceding (w.r.t. $\prec$)  {\sc write} transaction that contains a $\writeop{o_{i_j}}{*}$,
 or the initial value $v_{i_j}^0$ if no such {\sc write} exists in $\beta$. 
 \end{enumerate}
Then execution $\beta$ is strictly serializable.
\end{lemma}

\begin{proof}We discuss how to insert a serialization point $*_{\pi}$ in $\beta$ for every transaction $\pi \in \Pi$. 
First, we add  $*_{\pi}$ immediately after the latest of the invocations of $\pi$ or $\phi \in \Pi$ such that $\phi \prec \pi$.
We want to stress that any complete operation $\pi$ in $\Pi$  refers to an operation, with the invocation and response events, whereas
$\pi_*$ refers to a point in the executions.
Note that according to condition $P1$ for $\pi$ there are only finite number of such  invocations in
 $\beta$, therefore, $\pi_*$ is well-defined for any $\pi \in \Pi$. Now, since the order of the invocation events of the 
transactions in $\Pi$ are already defined, therefore,  the order of the corresponding set of serialization points are 
well-defined, except for the case when more than one serialization points are placed immediately after an 
invocation.
 In the case such multiple  serialization points corresponding to an invocation we order  these 
serialization points  in accordance with the $\prec$  relation of  the underlying transactions.

Next, we show that for any pair of transactions $\phi$, $\pi \in \Pi$ if $\phi \prec \pi$ then in $\beta$ we have $*_{\phi}$ precedes $*_{\pi}$.
Suppose $\phi \prec \pi$. By construction,  each of $\pi_*$ and $\phi_*$ appear immediately after 
some invocation, in $\beta$,  of some transaction in $\Pi$. If both $\pi_*$ and $\phi_*$ appear immediately  after 
the same invocation, then since $\phi \prec \pi$,  by construction of $\pi_*$,  the serialization point $\pi_*$ appears in $\beta$ after 
$\phi_*$.  Also, if the invocations after which $\pi_*$ and $\phi_*$ appear are distinct,  then by 
construction of $\pi_*$ the serialization point  $\pi_*$ appear after $\phi_*$ in $\beta$ since $\phi \prec \pi$.


Next we argue that each $*_{\pi}$  serialization point  for any $\pi \in \Pi$ is placed between the 
invocation $\INV{\pi}$ and responses $\RESP{\pi}$. By construction, $*_{\pi}$ is after $\INV{\pi}$ in $\beta$. To 
show that $*_{\pi}$ is before $\RESP{\pi}$ for the sake   of contradiction assume that  $*_{\pi}$  
appears after $\RESP{\pi}$. By construction, $*_{\pi}$ must be after $\INV{\psi}$ for some 
$\psi \in \Pi$ and $\psi \neq \pi$, then by the condition of construction of $\pi_*$ we have $\psi \prec 
\pi$. But from above 
$\INV{\psi}$ occurs after $\RESP{\pi}$, i.e., $\pi$ completes before $\psi$ is invoked which means, by 
property $P2$,  we cannot have $\psi \prec \pi$, which is a contradiction.

Next, we show that 
if we were to shrink  the transactions intervals to  their corresponding serialization points, the 
 resulting  trace would be a trace of the underlying data type $\mathcal{O}_T$. 
In other words, we show any {\sc read}
$READ(o_{i_1}, o_{i_2}, \cdots, o_{i_q})$
returns the  values   $(v_{i_1}, v_{i_2}, \cdots, v_{i_q})$, such that each value
$v_{i_j}$, $j \in [q]$,  was written by the immediately preceding  (w.r.t. the serialization points)  {\sc 
write} that contained  $\writeop{o_{i_j}}{v_{i_j}}$ or the initial values if no such previous {\sc write} exists.
Let us denote the set of {\sc write}s that precedes (w.r.t. $\prec$) $\pi$ by $\Pi_W^{\prec\pi}$, i.e., 
$\phi \in \Pi^{\prec\pi}_W$  $\phi$ is a write and $\phi \prec \pi$.  By property $P3$, all transactions in 
$\Pi_{W}^{\prec\pi}$ are totally-ordered.  By property $P4$,  $v_{i_j}$ must be the value updated by
 the most recent {\sc write} in $\Pi_W^{\prec\pi}$.  Since the total order of serialization points are  consistent with 
 $\prec$ and hence the $v_{i_j}$ corresponds to the write operation of a {\sc write} transaction with 
  the most recent serialization point and  contains a operation of  type $\writeop{o_{i_j}}{*}$.
\end{proof}

\section{SNW +  One Version,  MWMR setting}\label{mwmr_snow_one_version}
Here present  algorithm  $B$, which satisfies  SNW and "one-version" properties, in MWMR setting where a \rot{} must consist of one version of the data but, possibly, multiple communication trips between the reader and the servers.
 In B, the steps for the writers are shown in Pseudocode~\ref{fig:algo_bc} and for readers and the servers are presented in Pseudocode~\ref{fig:algo_b}.
We assume  a set of writers $\mathcal{W}$,  a set of readers $\mathcal{R}$ and a set of $k \geq 1$ servers,  $\mathcal{S}$, with ids $s_1, s_2\cdots s_k$ that stores the objects $o_1, o_2, \cdots, o_k$, respectively.  A key ${\kappa}$ is defined as a pair $(z, w)$, 
where $z \in \mathbb{N}$ and $w \in \mathcal{W}$ the  id of a writer. We use $\mathcal{K}$ to denote the set of all possible keys. 
In $B$, a key uniquely identifies some transaction. Also, with each transaction we associate a tag $t \in \mathbb{N}$. 
	
In  $B$, we designate one of the servers as coordinator, denote as $s^*$,
 for the transactions. The $s^*$  maintains the order of the \wots{} and the objects that are updated during the \wot{} in the variable $List$.  

\textit{\textbf{State variables:}} Each of the writers and servers maintain a set of state variables as follows: $(i)$ At  any  \emph{writer $w$}, there is  a counter $z$ to keep track of the
 number of \wot{}  the writer has  invoked, initially $0$. $(ii)$ At any  \emph{server}, $s_i$, 
 for $i \in [k]$, there is  a set variable $Vals$ 
 with elements 
 that are  key-value pairs $({\kappa}, v_i) \in \mathcal{K} \times \mathcal{V}_i$. Initially,
  $Vals= \{ ({\kappa}^0, v_i^0)\}$. 
A server also contains an
 ordered list variable $List$  of elements  as $({\kappa}, (b_1, \cdots, b_k))$,  where 
 ${\kappa}  \in \mathcal{K}$  and 
 $(b_1, \cdots b_k) \in  \{0, 1\}^k$. Initially,  
 $List= [ ({\kappa}^0, (1, \cdots 1) ]$, where ${\kappa}^0  \equiv (0, w_0)$, 
 where $w_0$ is any
  place holder identifier string for writer id. The elements in $List$ can be identified with an index, e.g., 
  $List[0] =({\kappa}^0, (1, \cdots, 1))$.  Essentially, a $(k+1)$-tuple  $(\kappa, (b_1, \cdots, b_k))$ in $List$ corresponds to a \wot{} and 
   identifies the set of objects that are updated during the \wot{}, i.e., if $b_i=1$ then object 
   $o_i$ was updated  during  the  \wot{}, otherwise $b_i=0$.
					
\textit{\textbf{Writer steps:}} A \wot{} updates
			a  list of $p$ objects $o_{i_1}, o_{i_2}, \cdots o_{i_p}$  with values
			 $v_{i_1}, v_{i_2}, \cdots v_{i_p}$, respectively, is invoked at $w$ via the procedure
			$\Writetr{ (o_{i_1}, v_{i_1}), \cdots, (o_{i_p}, v_{i_p}) }$.
We use the notations:  $I \triangleq \{i_1, i_2, \cdots, i_p\}$  and $S_I\triangleq \{s_{i_1}, s_{i_2}, \cdots, s_{i_p}\}$.
This procedure  consists of two phases: {\writeValue} and {\updateCoord}. 
During the {\writeValue} phase,  $w$ creates a new key ${\kappa}$ as 
 $ {\kappa}  \equiv (z + 1, w)$, where $w$ identifies the writer; and also increments the local counter $z$ by one.  Then $w$ sends $(${\writeValueTag}$, ({\kappa}, v_{i}))$ to each server in $S_I$, and awaits {\ackTag}  
from all servers in $S_I$.
After receiving {\ackTag} from all servers in $S_I$,  $w$
initiates the {\updateCoord} phase where it sends 
(\updateCoordTag, $({\kappa}, (b_{1}, \cdots b_{k})$) to $s^*$, where for any $i \in [k]$,  $b_i=1$ if $s_i \in S_I$, 
otherwise $b_i=0$,   and completes then \wot{} after it receive a   ({\ackTag}, $t_w$) from $s^*$.  


\textit{\textbf{Reader steps:}}
We use the same notations for $I$ and $S_I$ as above but the indices can vary across  transactions.
The procedure  \Readtr{$ o_{i_1},  o_{i_2}, \cdots, o_{i_p}$} can be 
initiated by  some reader  $r$,   as a \rot{}, intending to read the values of 
subset $o_{i_1},  o_{i_2}, \cdots, o_{i_p}$ of the objects. The procedure 
consists of two consecutively executed phases of communication rounds
between the $r$ and the  servers, viz.,  {\getTagArray} and {\readValue}. 
During  the  phase {\getTagArray},  $r$ sends $s^*$ the message  {\getTagArrayTag}  
 requesting the  list of the latest added keys for each object. 
 Once $r$ receives a list of tags, such as, $(t_r, ({\kappa}_1, {\kappa}_2, \cdots,  {\kappa}_k))$ from $s^*$   the phase completes.
In the subsequence phase, {\readValue},   $r$ requests each server $s_i$ in $S_I$ by sending the message 
$(\text{\readValueTag}, \kappa_i)$. 
%
After  receiving the values $v_{i_1}$, $v_{i_2}, \cdots v_{i_p}$ from the servers in $\mathcal{S_I}$, 
 $r$ completes the transaction  by 
 returning the tuple of values $(v_{i_1}, \cdots v_{i_p})$.

\textit{\textbf{Server steps:}}  
When a server $s_i$ receives a message of type $(${\writeValueTag}$, ({\kappa}, v_{i}))$ from a writer $w$ then 
it  adds $({\kappa}, v_i)$ to its set variable  
$Vals$ and sends {\ackTag} back to $w$.

If the coordinator $s^*$ receives  (\updateCoordTag, $({\kappa}, (b_{1}, \cdots, b_{k})$) from writer $w$, then it appends  
			 $({\kappa}, (b_{1}, \cdots, b_{k}))$ to its  $List$,  and responds with  
			 {\ackTag} and $t_{w}$ (set to be the number of elements in the local list $List$)  to $w$.
			 The order of the  elements in  $List$ corresponds to  the order  
the \wots{}, the order of the incoming  {\updateCoordTag} updates,  as seen by $s^*$.
	
When $s^*$  receives  the message  {\getTagArrayTag} from $r$  it responds with 
$(\kappa_1, \cdots, \kappa_k)$ such that for each $i \in [k]$, $\kappa_i$ is the key  part of the $(k+1)$-tuple that was modified
last, i.e., 	${\kappa}_i = List[j^*].{\kappa}$ such that 	 $j^* \triangleq\max \{ j : List[j].b_i =1 \}$, and 
$t_r$, $t_r \triangleq \max_{1 \leq j \leq |List|} \{ j : List[j].b_i = 1 \wedge i \in I\}$.
  	If any server $s_i$ receives a message  $(${\readValueTag}$, {\kappa})$ from a reader $r$ then it responds to $r$ with 
   the value $v_i$ corresponding to key with value  $\kappa$ in  $Vals$. 


\begin{theorem} Any well-formed  and fair execution of algorithm $B$  satisfies the SNW and "one-version"  properties. 
\end{theorem}


\begin{proof} Below we show that algorithm $B$ satisfies the  SNoW properties. 
	
	\noindent{\emph{\underline{S property:}}} 
	Let $\beta$ be any fair execution  of  $B$ and 
 suppose all clients in $\beta$ behave in a well-formed
manner. Suppose $\beta$ contains no incomplete transactions and let  $\Pi$ be the set of transactions in $\beta$.  We define an irreflexive partial ordering ($\prec$) in $\Pi$ as follows:  if $\phi$ and $\pi$ are any two distinct transactions in $\Pi$ then we say 
	$\phi \prec \pi$ if either $(i)$ $tag(\phi) < tag(\pi)$ or $(ii)$ $tag(\phi) = tag(\pi)$ and $\phi$ is a \wot{} and $\pi$ is a \rot{}. Below we prove the $S$  property of $B$ by showing that  properties $P1$, $P2$, $P3$ and $P4$ of Lemma~\ref{lem:equivalence_app} hold for $\beta$. 
	
	\emph{P1:}   Clearly, from an inspection of the algorithm, 
	 $tag(\pi) \in \mathbb{N}$. From inspection of the algorithm, each \wot{} increases the size of 
	 $List$, and the value of the tags are  defined by the size of $List$. Therefore, there can be at 
	 most a finite number of \wots{} such that can precede $\pi$ (w.r.t. $\prec$) in $\beta$.
	  On the other hand, if $\pi$ is a \rot{} then since all \rot{}s are invoked by readers  in a well-formed manner,  and there are only finite number of readers 
	therefore, there cannot be an infinite number of \rot{}s such that they all 
	precede $\pi$ (w.r.t $\prec$).

	\emph{P2:}  Suppose $\phi$ and $ \pi$ are any two transactions in $\Pi$, such that, $\pi$ begins after $\phi$ completes. Then we show that we have  we cannot have $\pi \prec \phi$. Now, we consider four cases, depending on whether $\phi$ and $\pi$ are \rot{}s or \wots{}.	
	\begin{enumerate}
	    \item [$(a)$] $\pi$ and $\phi$ are \wots{} invoked by writers $w_{\pi}$ and $w_{\phi}$, respectively. Since the size of $List$, in $s^*$ grows monotonically due to each \wot{}  hence  $w_{\pi}$ receives the  tag  from $s^*$ at least as high as $tag(\phi)$, so $\pi\not \prec \phi$.
	       \item [$(b)$] $\pi$ is a \rot{}, $\phi$ is a \wot{} invoked by reader  $r_{\pi}$ and writer $w_{\phi}$, respectively.  
	        Since the size of $List$, in $s^*$,  grows monotonically,  because  $r_{\pi}$  invokes $\pi$ after $\phi$ completes hence $tag(\pi)$ is at least as high as $tag(\phi)$, so $\pi\not \prec \phi$.
	        \item[$(c)$]$\pi$ and $\phi$ are both \rot{}s  invoked by readers $r_{\pi}$ and $r_{\phi}$, respectively. 
	           Since the size of $List$, in $s^*$, grows monotonically,  because   $w_{\pi}$ invokes $\pi$ after $\phi$ completes hence $tag(\pi)$ is at least as high as $tag(\phi)$, so $\pi\not \prec \phi$.
	         \item [$(d)$] $\pi$ is a \wot{}, $\phi$ is a \rot{}  invoked by writer $w_{\pi}$ and reader $r_{\phi}$, respectively.
	         This case is simple because new values are added to $List$, in $s^*$,  only  by writers, and $tag(\pi)$  has to be larger than the tag of $\phi$ and hence   $\pi\not \prec \phi$. 
	\end{enumerate}

	\emph{P3:} This is from  the fact that any \wot{} always creates a unique tag and all tags are totally ordered since they all belong to $\mathbb{N}$
	
		\emph{P4:} Consider a \rot{} $\rho$ as $READ(o_{i_1}, o_{i_2}, \cdots, o_{i_q})$, in $\beta$. 
Let the returned value from $\rho$ be $\mathbf{v} \equiv $$(v_{i_1}, v_{i_2}, \cdots, v_{i_q})$ such that 
$1 \leq {i_1} <  {i_2} <  \cdots <  {i_q} \leq k$, where value  $v_{i_j}$ corresponds to $o_{i_j}$. 
	Suppose $tag(\rho) \in \mathbb{N}$ was created during some \wot{}, say $\phi$, i.e., $\phi$ is the \wot{} that 
	added the elements in index $(tag(\rho)-1)$ of $List$ at the coordinator $s^*$. Note that element in index $0$ contains the initial value.
	 Now we consider two cases:
	 
	\emph{Case $tag(\rho) = 1$.} We  know that it corresponds the initial default value $v_i^0$ at each sub-object $o_i$, and this equates to $\rho$ returning the default initial value for each sub-object.
	 %
	 
	 \emph{Case $tag(\rho) > 1$.} Then we argue that there exists no \wot{}, say $\pi$, that updated object $o_{i_j}$,   in $\beta$, such that,  $\pi \neq \phi$ and $\rho$ returns values written by $\pi$ and $\phi \prec \pi \prec \rho$. Suppose we assume the 	contrary, which means $tag(\phi) < tag(\pi) < tag(\rho)$. The latter implies $tag(\phi)  = tag(\pi)$ which is not possible because 
	this contradicts the fact that for any two distinct \wots{} $tag(\phi) \neq tag(\pi)$  in any execution of   $B$.

	\noindent{\emph{\underline{N, o and W properties:}}}  Evident from an inspection of the algorithm.
	\end{proof}

\begin{algorithm}[!ht]
	\label{app:algrithmBC}
	\begin{algorithmic}[3]
		
		\begin{multicols}{1}{\footnotesize
				\Statex  {\bf At writer $w$}
				\Part{{\it State Variables}}{   
					\Statex $z \in \mathbb N$, initially   $0$
				}\EndPart
				
				\Statex  {\bf  \Writetr{$(i_1, v_{i_1}), \cdots, (i_p,  v_{i_p})$}   }
				\State $I\triangleq \{i_1, i_2, \cdots, i_p \}$
				\Part{ \underline{\writeValue}} {
					\State ${\kappa} \leftarrow (z +1,  w)$
					\State $z \leftarrow z +1 $
					\For{$i \in I$} 
					\State Send (\writeValueTag, $({\kappa}, v_{s_i})$) \\ to $s_i$
					\EndFor 
					\State  Await {\ackTag} from servers in  $S_I$.
				}\EndPart
				\Statex
				\Part{ \underline{\informSerializer}} {
					
					\For{$i \in [k]$} 
					\If{$i \in I$}
					\State $b_i \leftarrow 1$
					\Else
					\State $b_i \leftarrow 0$
					\EndIf
					\EndFor 
					\State  Send  (\informSerializerTag, $({\kappa},$
					\\ $(b_{1}, \cdots, b_{k}))$) to   $s^*$
					
					\State Receive ({\ackTag}, 
					\\$t_w$) from $s^*$
				}\EndPart
		}\end{multicols}
	\end{algorithmic} 
	\caption{Protocol for writer $w$ in algorithms $B$ and $C$.}\label{fig:algo_bc}
\vspace{-1.2em}
\end{algorithm}

\begin{algorithm}[!ht]
\label{app:algrithmB}
  \begin{algorithmic}[3]
      \begin{multicols}{0}{\footnotesize    
          \Statex  {\bf At reader $r$}
          \Statex  {\bf \Readtr{$ o_{i_1},  o_{i_2}, \cdots, o_{i_p}$}} 
          \State $I\triangleq \{i_1, i_2, \cdots, i_p \}$
            \Part{ \underline{\getTagArray}} {
          \State  Send (\getTagArrayTag) to $s^*$
          \State  Receive  $(t_r, ({\kappa}_1, {\kappa}_2, \cdots, {\kappa}_k))$ from  $s^*$
        }\EndPart 
            \Statex
          \Part{{\underline{{\readValue}}}}{ 
                     \For{$i \in  I$} 
            \State  Send (\readValueTag, ${\kappa}_i$) to $s_i$
           \EndFor 
            \State  Await responses as $v_{i}$ $\forall$ $s_i\in S$
            \State Return  $(v_{i_1}, v_{i_2}, \cdots, v_{i_p})$
          }\EndPart
        }\end{multicols}
  \vspace{-1.5em}
\\\hrulefill
\vspace{-1.5em}
      \begin{multicols}{1}{\footnotesize
          \Statex {\bf At server $s_i$ for any $i \in [k]$}
            \Part{{\it State Variables}}{ 
              \Statex $Vals\subset \mathcal{K} \times \mathcal{V}_i$, initially   $\{({\kappa}^0, v_i^0)\}$
\Statex $List$, list  of  $\mathcal{K} \times \{ 0, 1 \}^k $, initially  $[({\kappa}^0, (1, \cdots 1))]$
            }\EndPart
            \Statex
            \Part {\underline{On recv (\writeValueTag,}
            	\\\underline{$({\kappa}, v)$) from  $w$}} {
                       \State   $Vals \gets   Vals \cup \{({\kappa}, v)\}$ 
              \State  Send {\ackTag} to $w$.
            }\EndPart
            \Statex
            \Part{{\underline{On recv  (\informSerializerTag, }} \\\underline{
            		$({\kappa}, (b_{1}, \cdots b_{k}))$)
            		 from  $w$} }{ 
               \State $List  \leftarrow $ 
                 \\~~~~~~ $List \bigoplus~ ({\kappa}, (b_{1}, \cdots b_{k}))$ 
                 \\~~~//$\bigoplus$ for append
                  \State $tag \leftarrow |List|$ // $| \cdot |$ list size
            \State Send  ({\ackTag}, $tag$) to  $w$
          }\EndPart
                   
                        \Statex
            \Part{ \underline{On recv (\readValueTag, ${\kappa}$) from  $r$ }} {
              \State   Send $v_i$ s.t. $({\kappa}, v) \in Vals$  to  $r$
            }\EndPart 
          \Statex\Statex   /* used only by $s^*$ */
          
          \Part{ \underline{On recv {\getTagArrayTag} from  $r$ }} {
          \For{$i \in [k]$} 
                     \State $j^* \leftarrow \max \{ j :$ 
                      \\~~~~~~~~~~~~$ List[j].b_i =1 \}$
                     \State ${\kappa}_i \leftarrow List[j^*].{\kappa}$
           \EndFor 
            \Statex
             \State  $t_r \triangleq \max_{1 \leq j \leq |List|} \{j:$
             \\~~~~~~~~$List[j].b_i = 1 \wedge i \in I\}$ 
            \State  Send  ($t_r, ({\kappa}_1,{\kappa}_2, \cdots, {\kappa}_k)$) to $r$
            }\EndPart       
          }\end{multicols}
        \end{algorithmic} 
        \caption{Protocols reader $r$ and server $s_i$ in alg. $B$.}\label{fig:algo_b}
                  \vspace{-1.2em}
\end{algorithm}

\label{app:algrithmC}
\begin{algorithm}[!ht]
  \begin{algorithmic}[3]
      \begin{multicols}{0}{\footnotesize    
          \Statex  {\bf At reader $r$}
          \Statex  {\bf \Readtr{$ o_{i_1},  o_{i_2}, \cdots, o_{i_p}$}} 
          \State $I\triangleq \{i_1, i_2, \cdots, i_p \}$
            \Part{ \underline{\readValuesAndTags}} {
          \State  Send  (\getTagArrayTag) to $s^*$
           \For{$i \in  I$} 
            \State  Send (\readValuesTag) to $s_i$
             \EndFor 
             
          \State  Recv $(t_r, ({\kappa}_1, {\kappa}_2, \cdots, {\kappa}_k))$ from  $s^*$
            \State Recv. $Vals_{i}$ from $\forall$ $s_i\in S_I$
            \State Return  $(v_{i_1}, v_{i_2}, \cdots, v_{i_p})$ 
            \\~~~~~~s.t.  $({\kappa}_{j}, v_{j}) \in Vals_{j}$, $j\in I$
          }\EndPart
        }\end{multicols}

  \vspace{-1.5em}
\\\hrulefill
\vspace{-1.5em}
      \begin{multicols}{2}{\footnotesize
          \Statex {\bf At server $s_i$ for any $i \in [k]$}
            \Part{{\it State Variables}}{ 
              \Statex $Vals\subset \mathcal{K} \times \mathcal{V}_i$, initially   $\{({\kappa}^0, v_i^0)\}$
\Statex $List$, a list  of  $\mathcal{K} \times \{ 0, 1 \}^k $, initially  $[({\kappa}^0, (1, \cdots 1))]$
            }\EndPart
            \Statex
            \Part {\underline{On recv (\writeValueTag, }
            	\\ \underline{$({\kappa}, v))$ from  $w$}} {
                       \State   $Vals \gets   Vals \cup \{({\kappa}, v)\}$ 
              \State  Send {\ackTag} to writer $w$.
            }\EndPart
            \Statex
            \Part{{\underline{On recv (\informSerializerTag,}
            	\\ \underline{$({\kappa}, (b_{1}, \cdots b_{k}))$) from  $w$}}}{ 
                 \State $List  \leftarrow$ 
                 \\~~~~~~$List \bigoplus~ ({\kappa}, (b_{1}, \cdots b_{k}))$ 
                  \\~~~~~~//$\bigoplus$ for append
                  \State $tag \leftarrow |List|$ /* $| \cdot |$ list size */
            \State Send  ({\ackTag}, $tag$) to  $w$
          }\EndPart
                   
                        \Statex
            \Part{ \underline{On recv ({\readValuesTag}) from  $r$ }} {
              \State   Send $Vals$  to  $r$
            }\EndPart 
          \Statex\Statex   /* used only by $s^*$ */
          
          \Part{ \underline{On recv {\getTagArrayTag} from  $r$ }} {
          \For{$i \in [k]$} 
          \State $j^* \leftarrow \max \{j:$ 
          \\~~~~~~~~$ List[j].b_i = 1\}$
          \State ${\kappa}_i \leftarrow List[j^*].{\kappa}$
           \EndFor 
           \Statex
             \State  $t_r \triangleq \max_{1 \leq j \leq |List|} \{j:$ 
              \\~~~~~~$List[j].b_i = 1 \wedge i \in I\}$ 
            \State  Send  ($t_r, ({\kappa}_1,{\kappa}_2, \cdots, {\kappa}_k)$) to $r$
            }\EndPart       
          }\end{multicols}
        \end{algorithmic} 
        \caption{Protocols for reader $r$ \& server $s_i$ in alg. $C$.}\label{fig:algo_c}
          \vspace{-1.2em}
      \end{algorithm}

\section{SNW + One Round,  MWMR setting}\label{mwmr_snow_one_round}
Here, we present  algorithm  $C$ which satisfies  SNW and "one-round"  properties in the MWMR setting, where a \textit{READ} consists one round of communications between the reader and the servers but servers may respond with multiple versions of the data. 
The notation for the  writers, servers and tag are similar to algorithm $B$. 
Pseudocodes~\ref{fig:algo_bc} and ~\ref{fig:algo_c} show the steps for the  writers, and the readers and the servers, respectively.
%
We designate a server as the coordinator,  denote as $s^*$. 

\textit{\textbf{State variables:}} The state variables are similar to  $B$.
			
\textit{\textbf{Writer steps:}} \emph{WRITE} transaction is similar to algorithm $B$..


\textit{\textbf{Reader steps:}}  
The step  \Readtr{$ o_{i_1},  o_{i_2}, \cdots, o_{i_p}$} can be 
initiated by  some reader  $r$ intending to read the values of 
subset $o_{i_1},  o_{i_2}, \cdots, o_{i_p}$ of the objects. 
Denote  $I \triangleq \{i_1, i_2, \cdots, i_p\}$  and $S_I\triangleq \{s_{i_1}, s_{i_2}, \cdots, s_{i_p}\}$.
The procedure 
consists of only one  phase  of communication round
between the $r$ and the  servers, called  {\readValuesAndTags}. 
During  {\readValuesAndTags},  $r$ sends $s^*$ the message  {\getTagArrayTag}  
 requesting the  list of the latest added keys for each object, and also sends
    requests $(\text{\readValuesTag})$  each server $s_i$ in $S_I$. Note that if $s^*$ is also one of the 
    servers in $S_I$ then the {\getTagArrayTag} and  {\readValuesTag} messages to $s^*$ can be combined to create one message; however, we keep them separate for clarity of presentation.
 Once $r$ receives a list of tags, such as, $(t_r, ({\kappa}_1, {\kappa}_2, \cdots,  {\kappa}_k))$ from $s^*$  and 
 the set of $Vals_{i}$ from each $s_i \in S_I$ then $r$ returns  the values 
 $v_{i_1}$, $v_{i_2}, \cdots v_{i_p}$ such that   $({\kappa}_{j}, v_{j}) \in Vals_{j}$, $j\in \{1, \cdots p\}$, and completes the 
\textit{READ}.

\textit{\textbf{Server steps:}}  
When a server $s_i$ receives a message $(${\writeValueTag}$, ({\kappa}, v_{i}))$ from a writer $w$ or 
 $s^*$, receives  (\updateCoordTag, $({\kappa}, (b_{1}, \cdots, b_{k})$) from writer $w$ or 
 receives  a message as  {\getTagArrayTag} $r$ the steps are similar to those in $B$. 
On the other hand, if any server $s_i$ receives a message  $(${\readValuesTag}$)$ from a reader $r$ then it responds to $r$ with  $Vals$. 
The following result states that $C$ respects SNW and ``one-round"  properties.

\begin{theorem} Any well-formed  and fair execution of  $C$, in the MWMR setting
  satisfies the SNW and "one-round" properties. 
\end{theorem}

\begin{proof} Below we show that algorithm $C$ satisfies the  SN$\bar{o}$W properties. 
	
	\noindent{\emph{\underline{S property:}}} 
	Let $\beta$ be any fair execution  of  $B$ and 
 suppose all clients in $\beta$ behave in a well-formed
manner. Suppose $\beta$ contains no incomplete transactions and let  $\Pi$ be the set of transactions in $\beta$.  We define an irreflexive partial ordering ($\prec$) in $\Pi$ as follows:  if $\phi$ and $\pi$ are any two distinct transactions in $\Pi$ then we say 
	$\phi \prec \pi$ if either $(i)$ $tag(\phi) < tag(\pi)$ or $(ii)$ $tag(\phi) = tag(\pi)$ and $\phi$ is a \emph{WRITE} and $\pi$ is a \textit{READ}. Below we prove the $S$  property of $B$ by showing that  properties $P1$, $P2$, $P3$ and $P4$ of Lemma~\ref{lem:equivalence_app} hold for $\beta$.  The properties $P1$-$P4$ can be proved to hold in a manner very similar to algorithm $B$ (Section~\ref{mwmr_snow_one_version}). Therefore, we avoid repeating them. 

	\noindent{\emph{\underline{N, $\bar{o}$ and W properties:}}}  Evident from an inspection of the algorithm.
	\end{proof}

%

\section{Conclusion}
We revisited the SNOW Theorem and when it is possible for READ transactions to have the same latency as simple reads.
We provided a new and more rigorous proof of the original result.
We also closed several open questions that were either explicitly posed by the original work or that emerged from our careful analysis.
We found that READ transactions can match the latency of simple reads when client-to-client communication is allowed in MWSR setting.
We found that they cannot and must have higher worst-case latency when client-to-client communication is disallowed or there are at least two readers.
We also presented the first algorithms that provide bounded worst-case latency for read-only transactions in strictly serializable systems with 
WRITE transactions.

\remove{
There are, however, several remaining important questions.
The first three questions focus on bounds for read-only transactions in strictly serializable systems with WRITE transactions.
First, is it possible to return exactly one version, always complete in two rounds, but that also can optimistically complete after one round?
Second, is it possible to always complete in one round and return a bound on number of versions that is less than the number of concurrent 
WRITEs?
Third, is there a way to enhance our model such that blocking could be bounded, and, if so, what bounds on blocking are possible?
Another open question is whether our new algorithms for READ transactions could be adapted to work in systems with READ-WRITE 
transactions that are more powerful than WRITE transactions? We conjecture they can be.

Our final open questions focus on ever more practical settings. 
coordinator, which in a real system would eventually become bottle-necked and restrict scaling to arbitrary sizes.
Is it possible to design algorithms that match our bounds on 
latency without relying on a centralized component and thus could scale arbitrarily?
Can one design a coordinator that can scale to arbitrary throughput?
}

\section*{Acknowledgment}
This work was supported by NSF awards CNS-1824130, CCF-2003830,  CCF-0939370 and NSF CCF-1461559.

\bibliographystyle{plain}
\bibliography{biblio}

\appendix




	\section{Algorithm Specification with I/O Automata}
	\label{subsec:model}
	We model a distributed algorithm using   the I/O Automata~\cite{Lynch1996}.
	Here we limit our discussion of I/O Automata to the relevant concepts, but for a detailed account the reader should refer 
	to ~\cite{Lynch1996}.
	An algorithm is a composition $\mathcal{A}$ of a set of \emph{automata} where each automaton  $A_i$ corresponds to a process (e.g., client or server) or a communication channel in the system.  $A_i$ is defined in terms of
	a set of deterministic transition functions  $\trans{A_i}$ (also called \textit{actions}), which can be thought of as the algorithmic steps of $A_i$;  and  a set of states $\states{A_i}$. 
	An execution of $\mathcal{A}$ is a sequence of alternating states and actions of $A$,  $\sigma_0, a_1, \sigma_1, a_2, \sigma_2, \ldots, \sigma_n$. 
	A state change, called a \textit{step}, is a 3-tuple $(\sigma_i, a_i, \sigma_{i+1})$, with $\sigma_i, \sigma_{i+1} \in \states{A_i}$ and $a_i \in \trans{A_i}$. 
	The set of input actions is denoted by $\inactions{A_i}$, e.g.,  $a_i \in \inactions{A_i}$ is an input action if it receives a message. The set of output actions is denoted by $\outactions{A_i}$. The input and output actions are also called external actions, denoted by $\extactions{A_i}$ and $\inactions{A_i} \cup \outactions{A_i}=\extactions{A_i}$. If an action $a_i \notin \extactions{A_i}$, then $a_i$ is an internal action.
	Communications between any two automata $A_i$ and $A_j$ is modeled by using channel automata $Channel_{i, j}$ for sending messages from $A_i$ to $A_j$; and $Channel_{j, i}$ for sending message from $A_j$ to $A_i$. When $A_i$ sends some
	message $m$ to $A_j$ the following sequence of actions occur: $send(m)_{i,j}$  occurs at $A_i$, then 
	$send(m)_{i,j}$ followed by $recv(m)_{i, j}$ occur at $Channel_{i,j}$; then finally, $A_j$ receives $m$ via the action $recv_{i,j}(m)$.
	In our model,  the communication channels are simple because we assume reliable communication between each pair of processes. Therefore, we ignore the actions in the $Channel_{i, j}$ and instead say   $send(m)_{i,j}$  occurs at $A_i$ and then $A_j$ receives $m$ via the action $recv_{i,j}(m)$.
	%
	An execution fragment $\alpha$ 
	can be either finite, i.e., having finite states, or infinite.
	If $\alpha$ is a finite execution and $\beta$ is an execution fragment, such that $\beta$ starts with the final state of $\alpha$ then we use $\alpha \circ \beta$ to denote the concatenation of $\alpha$ and $\beta$.
	If $\epsilon$ and $\epsilon'$ are two execution fragments, such that they have the same sequence of states at automaton $A_i$, i.e., $\epsilon|A_i = \epsilon'|A_i$, then $\epsilon$ and $\epsilon'$ are indistinguishable at $A_i$, denoted by $\epsilon \stackrel{A_i}{\sim} \epsilon'$. When the context is clear, we simply use $\epsilon \sim \epsilon'$.

For any execution of $\mathcal{A}$, 
$\finiteprefix{k-1}{k}\ldots$, where $\sigma$'s and $a$'s are states and actions,
we use the notation $\finiteprefixt{k-1}{k}\ldots$ that shows only the actions while leaving out the states to simplify notation. 


\section{Some Useful I/OA results}
Below we add some useful  theorems are useful related to  executions of a composed I/O Automata~\cite{Lynch1996}. 

\begin{theorem} \label{thm:paste} 
	Let $\{A_i\}_{i \in  I}$ be a compatible collection of automata and let $A = \Pi_{i \in I}A_i$. Suppose $\alpha_i$ is an execution of $A_i$ for every $i \in I$, and suppose $\beta$ is a sequence of actions in $ext(A)$ such that 
	$\beta | A_i = trace(\alpha_i)$ for every $i \in I$. Then there is an execution $\alpha$ of $A$ such that $\beta = trace(\alpha)$ and $\alpha_i = \alpha | A$, for every $i \in I$.
\end{theorem}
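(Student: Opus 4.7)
The plan is to construct the desired execution $\alpha$ of $A$ by interleaving the internal actions of the component executions $\alpha_i$ between the external actions prescribed by $\beta$. I would write each component execution as $\alpha_i = \sigma_0^i, a_1^i, \sigma_1^i, a_2^i, \ldots$, and partition each $\alpha_i$ into ``segments'' delimited by its external actions: the $k$-th segment consists of the maximal block of internal steps in $\alpha_i$ that appear strictly between its $(k{-}1)$-st and $k$-th external actions (with the $0$-th segment being the internal prefix before the first external action, and a possible trailing segment after the last external action in the finite case). Because $\beta | A_i = trace(\alpha_i)$, the $k$-th external action of $\alpha_i$ is precisely the $k$-th occurrence in $\beta$ of an action in $sig(A_i)$, so these segments can be indexed by positions in $\beta$.

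The construction then proceeds by induction on the length of $\beta$. Start from $\sigma^{(0)} = (\sigma_0^i)_{i \in I} \in \startstates{A}$. To process the $(k{+}1)$-st action $b_{k+1}$ of $\beta$, first append, for every $i \in I$, the internal-step segment of $\alpha_i$ that precedes the next external action of $\alpha_i$ belonging to $sig(A_i)$ (these segments may be interleaved in any order, since by compatibility an internal action of $A_i$ is not in $sig(A_j)$ for $j \neq i$, so it only updates the $i$-th state component); then append $b_{k+1}$ itself. The critical point is that $b_{k+1}$ is enabled in the current composite state, because by the induction hypothesis each component $A_i$ with $b_{k+1} \in sig(A_i)$ is in the pre-state of its next external transition in $\alpha_i$, while components with $b_{k+1} \notin sig(A_i)$ remain unchanged under a composite transition on $b_{k+1}$. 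In the infinite-$\beta$ case, the construction defines a well-defined limit execution because every finite prefix is already a legal execution of $A$; in the finite-$\beta$ case, trailing internal segments of the $\alpha_i$'s are concatenated in any order at the end.

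It then remains to verify three things. First, that $\alpha$ is a legal execution of $A$: internal steps come directly from some $\alpha_i$ and touch only the $i$-th state component, while external steps are enabled in every component whose signature contains them, so each step is a valid step of $\Pi_{i \in I} A_i$. Second, that $trace(\alpha) = \beta$: by construction the external actions of $\alpha$, in order, are exactly $b_1, b_2, \ldots$. Third, that $\alpha | A_i = \alpha_i$ for every $i$: the projection of $\alpha$ onto $A_i$ preserves (in order) both the external actions of $\alpha_i$ (because $\beta | A_i = trace(\alpha_i)$) and all internal segments of $\alpha_i$ (inserted verbatim during the construction), together with the correct intermediate states of the $i$-th component.

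The main obstacle I anticipate is the careful bookkeeping around shared external actions: when $b_{k+1} \in sig(A_i) \cap sig(A_j)$, I must confirm that both $A_i$ and $A_j$ are simultaneously in the correct pre-state of this external action, which requires that the inserted internal segments for the two components are independent and order-insensitive. This is exactly what compatibility of the collection $\{A_i\}_{i \in I}$ buys, so the proof reduces to invoking compatibility at each inductive step and checking that the projection $\alpha | A_i$ reconstructs $\alpha_i$ step-for-step.
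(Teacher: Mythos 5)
First, a point of comparison: the paper does not prove this statement at all --- it is imported from Lynch's \emph{Distributed Algorithms}~\cite{Lynch1996} (it is the standard ``pasting'' theorem for compositions of I/O automata), so there is no in-paper argument to measure yours against. On its own terms, your construction is the standard one and its core is sound: you index the external actions of each $\alpha_i$ by their occurrences in $\beta$ via the hypothesis $\beta|A_i = trace(\alpha_i)$, flush each participating component's pending internal segment before executing the next action $b_{k+1}$ of $\beta$, and use compatibility to argue that internal segments of distinct components commute and that components with $b_{k+1}\notin sig(A_i)$ are untouched by a composite step on $b_{k+1}$. The enabledness argument for $b_{k+1}$ is correct, and the verification that $trace(\alpha)=\beta$ and that the projection onto each $A_i$ reproduces $\alpha_i$ goes through for everything scheduled this way.

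The genuine gap is in your treatment of trailing internal segments. If $\beta$ is finite and two or more components have \emph{infinite} tails of internal actions after their last external action (nothing in the hypotheses rules this out: $\alpha_i$ may be an infinite execution with only finitely many external actions), then ``concatenated in any order at the end'' fails --- whichever infinite tail is placed first prevents the others from ever appearing, so $\alpha|A_j$ is a strict prefix of $\alpha_j$ for the starved components. The same problem arises when $\beta$ is infinite but $\beta|A_i$ is finite for some $i$: your inductive step only inserts the internal segment of $\alpha_i$ that precedes its \emph{next} external action, so a component whose participation in $\beta$ ends early never has its remaining internal steps scheduled at all, and again $\alpha|A_i \neq \alpha_i$. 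Both cases are repaired the same way --- dovetail the leftover internal segments with one another and, in the infinite-$\beta$ case, with the continued processing of $\beta$ (e.g., a round-robin schedule over $I$, inserting one step from each pending tail per stage) --- but as written the construction does not yield the required $\alpha$ in these cases, so the interleaving must be made explicit.
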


\begin{theorem} \label{thm:extension}
	Let $A$ be any I/O automaton.
	\begin{enumerate}
		\item [1.] If $\alpha$ is a finite execution of $A$, then there is a fair execution of $A$ that starts with $\alpha$.
		\item [2.]  If $\beta$ is a finite trace of $A$, then there is a fair trace of $A$ that starts with $\beta$.  
		\item [3.] If $\alpha$ is a finite execution of $A$  and $\beta$ is any finite or infinite sequence of input actions of $A$, then there is a fair execution  $\alpha \circ \alpha'$ of $A$ such that the sequence of input actions in $\alpha'$ is exactly $\beta$.
		\item [4.]  If $\beta$ is a finite trace of $A$ and $\beta'$ is any finite or infinite sequence of input actions of $A$, then there is a fair execution $\alpha \circ \alpha'$ of $A$ such that $trace(\alpha) = \beta$ and such that the sequence of input actions in $\alpha'$ is exactly $\beta'$.
	\end{enumerate}
\end{theorem}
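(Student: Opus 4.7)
The plan is to prove all four parts by the standard construction of a fair extension via a round-robin schedule over task classes. Fix the partition of the locally controlled actions of $A$ into tasks $T_1, T_2, \ldots$ (countably many, as part of the task structure). The common tool is the following extension procedure: given any finite execution ending in state $\sigma$, continue it by a diagonal enumeration of pairs $(j,k)$ with $j\le k$; at stage $(j,k)$, inspect task $T_j$ in the current state, and if some action $a \in T_j$ is enabled then append $(a,\sigma')$ for a corresponding transition, otherwise skip to the next pair. Each task is thereby visited infinitely often in the schedule, so along the resulting infinite extension each task is either executed infinitely often or becomes permanently disabled — exactly the fairness condition.

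For part 1, I would apply this construction directly to $\alpha$, producing a (possibly infinite) execution whose continuation is fair. If the schedule causes every task to become permanently disabled after finitely many stages, the procedure terminates and yields a finite fair execution; otherwise infinitely many task executions occur and fairness again holds. Part 2 then follows formally: by the definition of trace, some finite execution $\alpha$ of $A$ satisfies $trace(\alpha)=\beta$; applying part 1 to $\alpha$ and then taking $trace$ of the resulting fair execution gives a fair trace that starts with $\beta$.

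For part 3, I interleave the input sequence $\beta$ with the schedule of locally-controlled steps. Concretely, I enumerate the stages of the task schedule and, between consecutive stages, insert the next input action from $\beta$ (input actions are always enabled because $A$ is input-enabled). If $\beta$ is finite, after its last action I simply continue the pure task schedule of part 1 to secure fairness. The constructed extension $\alpha'$ has exactly $\beta$ as its input-action subsequence (in order), and since every task is still visited infinitely often in the schedule, fairness is preserved. Part 4 is then the composition of parts 2 and 3: first pick a finite execution $\alpha$ with $trace(\alpha)=\beta$ (such $\alpha$ exists by the definition of trace), and then apply the interleaved construction of part 3 to $\alpha$ with input sequence $\beta'$; the trace of $\alpha$ is $\beta$ by construction, and the inputs appearing after $\alpha$ in the extension are exactly those of $\beta'$.

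The main technical obstacle is part 3 when $\beta$ is infinite: a naive strategy that serves all of $\beta$ first and task actions second would never satisfy fairness, while one that serves a fixed task too eagerly might starve later inputs. The diagonal schedule resolves this by giving every task infinitely many execution attempts while still committing to output every prefix of $\beta$ within finitely many steps. A secondary subtlety, for part 1, is verifying that the appended actions are legitimate transitions of $A$ and that the resulting infinite sequence is an execution — this is immediate because we only append enabled transitions drawn from $\trans{A}$, and executions are defined to allow infinite length. With the diagonalization in hand, each of the four claims reduces to invoking Theorem~\ref{thm:paste} implicitly (staying inside $A$) and checking fairness on the constructed schedule.
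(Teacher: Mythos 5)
Your proposal is correct, but note that the paper does not actually prove this statement: Theorem~\ref{thm:extension} is imported verbatim from \cite{Lynch1996} as a known fact, so there is no in-paper argument to compare against. What you have written is essentially the standard textbook proof of that result (round-robin scheduling over the countably many tasks, interleaved with the prescribed input sequence, using input-enabledness to insert the inputs at will), and each of the four reductions you describe -- part 2 from part 1 via choosing an execution with the given trace, part 3 by interleaving, part 4 by composing -- is sound. One small imprecision: fairness does not require that each unexecuted task become \emph{permanently} disabled, only that states in which it is disabled occur infinitely often; your construction in fact delivers exactly this (at each of a task's infinitely many turns, either an action fires or the state at that turn witnesses disabledness), so the argument stands, but the summary sentence should be phrased accordingly. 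The aside about invoking Theorem~\ref{thm:paste} is unnecessary -- nothing in this proof involves composition -- and could simply be dropped.
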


The following useful  claim is adopted from  Chapter 16 of \cite{Lynch1996}.
\begin{claim}\label{claim:reorder}
	Suppose we have an automaton $A = \Pi_{i=1}^k A_i$ where $A$ is composed of the compatible collection of automata $A_i$, where $i \in \{1, \cdots, k\}$.
	Let $\beta$ be a fair trace of $A$ then we define an irreflexive partial order $\rightarrow_{\beta}$ on the actions of $\beta$
	as follows. If $\pi$ and $\phi$ are events in $\beta$, with $\pi$ preceding $\phi$, then we say $\phi$ depends on $\pi$, which we denote as $\pi \rightarrow_{\beta}\phi$, if one of the following holds:  
	\begin{enumerate}
		\item[1.] $\pi$ and $\phi$ are actions at the same automaton;
		\item[2.] $\pi$ is some $send(\cdot)_{j, i}$ at some $A_j$ and $\phi$ is some $recv(\cdot)_{j,i}$ at $A_i$; and
		\item[3.] $\pi$ and $\phi$ are related by a chain of the relations of items 1. and 2.
	\end{enumerate}
	Then if $\gamma$ is a sequence obtained by reordering the events in $\beta$ while preserving the $\rightarrow_{\beta}$, then $\gamma$ is also a fair trace of $A$.
\end{claim}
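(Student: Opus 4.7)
The plan is to establish the claim by reducing it to the case of swapping two adjacent, $\rightarrow_\beta$-unrelated events, and then showing that each such swap preserves the property of being a fair trace of $A$. My first observation is that condition~1 of the dependency relation forces any $\rightarrow_\beta$-respecting reordering to preserve the order of events at every component $A_i$: if two events in $\beta$ both belong to $A_i$, they are related by $\rightarrow_\beta$, so their relative order is preserved in $\gamma$. Hence $\gamma \mid A_i = \beta \mid A_i$ for every $i \in I$. Consequently, any fair execution of $A$ whose trace is $\beta$ projects, component-wise, onto exactly the sequences of local events that a prospective execution realizing trace $\gamma$ would have to project onto.

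Next I would argue that $\gamma$ can be obtained from $\beta$ by a (possibly infinite) sequence of adjacent transpositions, each swapping a pair of consecutive events $\pi,\phi$ with $\pi \not\rightarrow_\beta \phi$. For the infinite case, the key fact to verify is that each finite prefix of $\gamma$ is the image of some finite prefix of $\beta$ under finitely many such swaps, which holds because an event at position $n$ in $\gamma$ can slip past at most finitely many originally-preceding events (namely those unrelated to it under $\rightarrow_\beta$). By induction on the number of adjacent swaps, it then suffices to prove the single-swap lemma: if $\beta = \beta_1 \pi \phi \beta_2$ is a fair trace and $\pi,\phi$ are unrelated, then $\beta_1 \phi \pi \beta_2$ is also a fair trace. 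Fixing a fair execution $\alpha$ of $A$ with $\text{trace}(\alpha) = \beta$, I would locate the subexecution $\sigma \stackrel{\pi}{\to} \sigma' \stackrel{\phi}{\to} \sigma''$ and appeal to commutativity: since $\pi$ and $\phi$ occur at distinct components (failure of condition~1) and are not a matching send--receive pair (failure of condition~2), the two steps act on disjoint local state and neither produces input consumed by the other, so $\sigma \stackrel{\phi}{\to} \tilde\sigma \stackrel{\pi}{\to} \sigma''$ is a valid execution fragment of $A$ reaching the same post-state.

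The main obstacle is verifying that fairness transfers to the rewritten execution. I would handle this by invoking the standard characterization that a composed execution of $A$ is fair iff, for every component $A_i$ and every task class of $A_i$, the projection onto that class either contains infinitely many events or leaves the class persistently disabled. Since an adjacent swap of events at two distinct components does not alter any projection $\alpha \mid A_i$, each task-level firing pattern is preserved exactly, and fairness is inherited. Combining the ingredients, the swapped execution is a fair execution of $A$ with trace $\beta_1 \phi \pi \beta_2$; iterating yields a fair trace equal to $\gamma$, establishing the claim. The most delicate bookkeeping will be the prefix-finiteness argument for infinite $\beta$ and the careful use of the commutativity of non-interacting steps in a compatible collection of I/O automata, especially in the presence of channel automata $\chIOA{}$ whose events must be routed through conditions~1 and~2 consistently.
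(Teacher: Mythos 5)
The paper does not actually prove this claim; it imports it verbatim from Chapter~16 of Lynch's book, so there is no in-paper argument to match yours against. Your two core ingredients are sound: condition~1 forces $\gamma \mid A_i = \beta \mid A_i$ for every component, and the single-swap lemma (two adjacent events with disjoint sets of participating automata commute, because enabledness of a locally controlled action depends only on the local state and input actions are always enabled) is the right local fact. The fairness observation --- that a swap at two distinct components leaves every projection untouched --- is also correct.

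The one genuine soft spot is the passage to infinite $\beta$. Your plan is ``induction on the number of adjacent swaps,'' but when $\gamma$ differs from $\beta$ by infinitely many transpositions, induction only yields that each member $\gamma_n$ of an approximating sequence is a fair trace. Fairness is a liveness-style property and is \emph{not} limit-closed, so ``every $\gamma_n$ is a fair trace'' does not by itself give that the limit $\gamma$ is one; your subsequent appeal to the task-projection characterization of fairness presupposes that you already have an execution of $A$ realizing the whole of $\gamma$, which is exactly what the swap induction fails to deliver in the limit. The clean repair is to drop the transposition machinery for the global statement and use the fair pasting theorem (Theorem~\ref{thm:fairtrace}, already stated in the paper): since $\gamma \mid A_i = \beta \mid A_i = trace(\alpha \mid A_i)$ and the projection of a fair execution of a composition onto a component is a fair execution of that component, the theorem directly produces a fair execution of $A$ with trace $\gamma$. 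Note that this is also where condition~2 earns its keep: you must check that $\gamma$ projects onto a fair trace of each channel automaton $Channel_{j,i}$ as well, and preserving each $send(\cdot)_{j,i}$ before its matching $recv(\cdot)_{j,i}$ is precisely what keeps the channel projection intact. Your swap argument remains a perfectly good, self-contained proof for finite $\beta$.
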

\begin{theorem}[\cite{Lynch1996}] \label{thm:fairtrace}
	Let $\{A_i\}_{i \in I}$ be a compatible collection of automata and let $A = \Pi_{i=1}^k A_i$. Suppose $\alpha_i$ is a fair
	execution of $A_i$ for every $i \in I$, and suppose $\beta$ is a sequence of actions  in $ext(A)$ such that $\beta|A_i=trace(\alpha_i)$
	for every $i \in I$. Then there is a fair execution $\alpha$ of $A$ such that $\beta=trace(\alpha)$ and $\alpha_i = \alpha|A$ for 
	every $i\in I$.
\end{theorem}

\end{document}